\def\denseformat{
\setlength{\textheight}{9in}
\setlength{\textwidth}{6.9in}
\setlength{\evensidemargin}{-0.2in}
\setlength{\oddsidemargin}{-0.2in}
\setlength{\headsep}{10pt}
\setlength{\topmargin}{-0.3in}
\setlength{\columnsep}{0.375in}
\setlength{\itemsep}{0pt}
}
\begin{document}

\newtheorem{thm}{Theorem}[section]
\theoremstyle{definition}
\newtheorem{dfn}{Definition}[section]
\theoremstyle{remark}
\theoremstyle{plain}
\newtheorem{lem}[thm]{Lemma}
\newtheorem{col}[thm]{Corollary}
\newtheorem{fact}[thm]{Fact}
\newtheorem{fig}[figure]{Fig.}

\def\MathN{\hbox{\rm I\kern-2pt I\kern-3.1pt N}}
\def\Expect{\hbox{\rm I\kern-2pt I\kern-3.1pt E}}
\title{A Fast Network-Decomposition Algorithm \\ and its Applications to Constant-Time \\ Distributed Computation}
\author{Leonid Barenboim\thanks{Open University of Israel.
 E-mail: {\tt leonidb@openu.ac.il}. 
 Part of this work has been performed while the author was a postdoctoral fellow at a joint program of the Simons Institute at UC Berkeley and I-Core at Weizmann Institute.} 
 \and Michael Elkin\thanks{Ben-Gurion University of the Negev. Email: {\tt elkinm@cs.bgu.ac.il} 
This research has been supported by the Israeli Academy of Science, grant 593/11, and by the Binational Science Foundation, grant 2008390.} \and Cyril Gavoille\thanks{LaBRI - Universite de Bordeaux. Email: {\tt gavoille@labri.fr}}}

\maketitle
\begin{abstract}

A partition $(C_1,C_2,...,C_q)$ of $G = (V,E)$ into clusters of strong (respectively, weak) diameter $d$, such that the supergraph obtained by contracting each $C_i$ is $\ell$-colorable is called a strong (resp., weak) $(d, \ell)$-network-decomposition. Network-decompositions were introduced in a seminal paper by Awerbuch, Goldberg, Luby and Plotkin in 1989. Awerbuch et al. showed that strong $(exp\{O(\sqrt{ \log n \log \log n})\}$, $exp\{O(\sqrt{ \log n \log \log n})\})$-network-decompositions can be computed in distributed deterministic time $exp\{O(\sqrt{ \log n \log \log n})\}$. Even more importantly, they demonstrated that network-decompositions can be used for a great variety of applications in the message-passing model of distributed computing.

The result of Awerbuch et al. was improved by Panconesi and Srinivasan in 1992: in the latter result $d = \ell = exp\{O(\sqrt{\log n})\}$, and the running time is $exp\{O(\sqrt{\log n})\}$ as well. In another remarkable breakthrough Linial and Saks (in 1992) showed that {weak} $(O(\log n), O(\log n))$-network-decompositions can be computed in distributed randomized time $O(\log^2 n)$. Much more recently Barenboim (2012) devised a distributed randomized constant-time algorithm for computing strong network decompositions with $d = O(1)$. However, the parameter $\ell$ in his result is $O(n^{1/2 + \epsilon})$.

In this paper we drastically improve the result of Barenboim and devise a distributed randomized constant-time algorithm for computing strong $(O(1), O(n^{\epsilon}))$-network-decompositions. As a corollary we derive a constant-time randomized $O(n^{\epsilon})$-approximation algorithm for the distributed minimum coloring problem, improving the previously best-known $O(n^{1/2 + \epsilon})$ approximation guarantee. We also derive other improved distributed algorithms for a variety of problems.

Most notably, for the extremely well-studied distributed minimum dominating set problem currently there is no known deterministic polylogarithmic-time algorithm. We devise a {\em deterministic} polylogarithmic-time approximation algorithm for this problem, addressing an open problem of Lenzen and Wattenhofer (2010). 

\end{abstract}

\pagenumbering {arabic}

\section{Introduction}
{\bf 1.1 Network-Decompositions\\}
In the distributed message-passing model a communication network is represented by an $n$-vertex graph $G = (V,E)$. The vertices of the graph host processors that communicate over the edges. Each vertex has a unique identity number (ID) from the range $\{1,2,...,n\}$. We consider a synchronous setting in which computation proceeds in rounds, and each message sent over an edge arrives by the beginning of the next round. The running time of an algorithm is the number of rounds from the beginning until all vertices terminate. Local computation is free.

A {\em strong} (respectively, {\em weak}) {\em diameter} of a cluster $C \subseteq V$ is the maximum distance $\mbox{dist}_{G(C)} (u,v)$ (resp., $\mbox{dist}_G(u,v)$) between a pair of vertices $u,v \in C$, measured in the induced subgraph $G(C)$ of $C$ (resp., in $G$). A partition $(C_1,C_2,...,C_q)$ of $G = (V,E)$ into clusters of strong (resp., weak) diameter $d$, such that the supergraph ${\cal G} = ({\cal V}, {\cal E})$, ${\cal V} = \{C_1,C_2,...,C_q \}$, ${\cal E} = \{ (C_i,C_j) \ | \ C_i, C_j \in {\cal V}, i \neq j, \exists v_i \in C_i, v_j \in C_j, (v_i,v_j) \in E \}$ obtained by contracting each $C_i$ is $\ell$-colorable is called a {\em strong} (resp., {\em weak}) {\em $(d,\ell)$-network-decomposition}.

Network-decompositions were introduced in a seminal paper by Awerbuch et al. \cite{AGLP89}. The authors of this paper showed that strong $(exp\{O(\sqrt{ \log n \log \log n})\},$ $ exp\{O(\sqrt{ \log n \log \log n})\})$-network-decompositions can be computed in distributed deterministic $exp\{O(\sqrt{ \log \log \log n})\}$ time. Even more importantly they demonstrated that many pivotal problems in the distributed message passing model can be efficiently solved if one can efficiently compute $(d, \ell)$-network-decompositions with sufficiently small parameters. In particular, this is the case for Maximal Independent Set, Maximal Matching, and $(\Delta + 1)$-Vertex-Coloring.

The result of \cite{AGLP89} was improved a few years later by Panconesi and Srinivasan \cite{PS95} who devised a deterministic algorithm for computing strong $(exp\{O(\sqrt{\log n})\},$ $ exp\{O(\sqrt{\log n})\})$-network-decompositions in $exp\{O(\sqrt{\log n})\}$ time. Awerbuch et al. \cite{ABCP96} devised a deterministic algorithm for computing strong $(O(\log n), O(\log n))$-network-decomposition in time $exp\{O(\sqrt{\log n})\}$.  Around the same time Linial and Saks \cite{LS92} devised a randomized algorithm for computing weak $(O(\log n), O(\log n))$-network-decompositions in $O(\log^2 n)$ time. More generally, the algorithm of \cite{LS92} can compute weak $(\lambda, O(n^{1/\lambda} \log n))$-network-decompositions or weak $(O(n^{1/\lambda}), \lambda)$-network-decompositions in time $O(\lambda \cdot n^{1/\lambda} \log n)$.

Observe, however, that all these algorithms \cite{AGLP89,PS95,LS92} require super-logarithmic time, for all choices of parameters. In ICALP'12 the first-named author of the current paper \cite{B12} devised a randomized algorithm for computing strong $(O(1), n^{1/2 + \epsilon})$-network-decomposition in $O(1/\epsilon)$ time. Unlike the algorithms of \cite{AGLP89,PS95,LS92}, the algorithm of \cite{B12} requires {\em constant} time. Its drawback however is its very high parameter $\ell = n^{1/2 + \epsilon}$. In the current paper we alleviate this drawback, and devise a randomized algorithm for computing strong $(exp\{O(\lambda)\}, n^{1/\lambda})$-network-decomposition in time $exp\{O(\lambda)\}$. In other words, the parameter $\lambda$ of our new decompositions can be made $n^{\epsilon}$, for an arbitrarily small constant $\epsilon > 0$, while the running time is still {\em constant} (specifically, $exp\{O(1/\epsilon)\}$).\\ 
{\bf 1.2 Constant-Time Distributed Algorithms}\\
In their seminal paper titled "What can be computed locally?" \cite{NS93} Naor and Stockmeyer posed the following question: which distributed tasks can be solved in {\em constant} time? 
This question is appealing both from theoretical and practical perspectives. From the latter viewpoint it is justified by the emergence of huge networks. The number of vertices in the latter networks may be so large that even mildest dependence of the running time on $n$ may make the algorithm prohibitively slow.

Naor and Stockmeyer themselves \cite{NS93} showed that certain types of weak colorings can be computed in constant time. A major breakthrough in the study of distributed constant time algorithms was achieved though a decade after the paper of \cite{NS93} by Kuhn and Wattenhofer \cite{KW05}. Specifically, Kuhn and Wattenhofer \cite{KW05} showed that an $O(\sqrt{k} \Delta^{1/\sqrt{k}} \log \Delta)$-approximate minimum dominating set\footnote[1]{A subset $U \subseteq V$ in a graph $G = (V,E)$ is a {\em dominating set} if for every $v \in V\setminus U$ there exists $u \in U$, such that $(u,v) \in E$. In the {\em minimum dominating set} (henceforth, MDS) problem the goal is to find a minimum-cardinality dominating set of $G$.}
 can be computed in $O(k)$ randomized time. Here $\Delta = \Delta(G)$ is the maximum degree of the input graph $G$, and $k$ is a positive possibly constant parameter.

An approximation algorithm for another fundamental optimization problem, specifically, for the {\em minimum coloring} problem, was devised by Barenboim \cite{B12} as an application of his aforementioned algorithm for computing network-decompositions. Specifically, it is shown in \cite{B12} that an $O(n^{1/2 + \epsilon})$-approximation for the minimum coloring problem can be computed in $O(1/\epsilon)$ randomized time. (In the minimum coloring problem one wishes to color the vertices of the graph properly with as few colors as possible.) Observe that since approximating the minimum coloring problem up to a factor of $n^{1 - \epsilon}$ is NP-hard \cite{H96,FK98,Z07}, the algorithm of \cite{B12} inevitably has to employ very heavy local computations.

In the current paper we employ our improved network-decomposition procedure to come up with a significantly improved constant-time approximation algorithm for the minimum coloring problem. Specifically, our randomized algorithm provides an $O(n^{\epsilon})$-approximation for the minimum coloring problem  in $exp\{O(1/ \epsilon)\}$ time, for an arbitrarily small constant $\epsilon > 0$.  We also devise a randomized $O(n^{\epsilon})$-approximation algorithm for the {\em minimum $t$-spanner} problem with running time $exp\{O(1/\epsilon)\} + O(t)$, for any arbitarily small constant $\epsilon > 0$. 
(A subgraph $G' = (V,H)$ of a graph $G = (V,E)$, $H \subseteq E$, is a {\em $t$-spanner} of $G$ if for every $u,v \in V$, $\mbox{dist}_{G'}(u,v) \leq t \cdot \mbox{dist}_G(u,v)$. In the {\em minimum $t$-spanner} problem the objective is to compute a $t$-spanner of the input graph $G$  with as few edges as possible.)

Ajtai et al. \cite{AKS80} showed that triangle-free $n$-vertex graphs admit an $O(\sqrt {n} / \sqrt {\log n})$-coloring. This existential bound was shown to be tight by Kim \cite{K95}. We devise a randomized $O(n^{1/2 + \epsilon})$-coloring algorithm for triangle-free graphs with running time $O(1/ \epsilon)$. More generally, we devise a randomized $O(n^{1/k + \epsilon})$-coloring algorithm for graphs of girth greater than $g = 2k, k \geq 2$, with running time $O(1/ \epsilon^2)$. Both results apply for any arbitrarily small $\epsilon > 0$, and, in particular, they show that such graph can be colored with a reasonably small number of colors in constant time. Together with our drastically improved constant-time approximation algorithm for the minimum coloring problem, these results significantly expand the set of distributed problems solvable in constant time. 

Most our algorithms for constructing network-decompositions use only short messages\footnote[1]{The only exceptions are weak network-decompositions from Section \ref{sc:strongdecomp}.} (i.e., messages of size $O(\log n)$ bits), and employ only polynomially-bounded local computations. Although in general graphs our algorithms for $O(n^{1/\epsilon})$-approximate minimum coloring require large messages, our $O(n^{1/2 + \epsilon})$-coloring and $O(n^{1/k + \epsilon})$-coloring algorithms for triangle-free graphs and graphs of large girth employ short messages.  Hence the latter coloring algorithms are suitable to serve as building blocks for various tasks. Despite that the number of colors is superconstant, in many tasks it does not affect the overall running time, so the entire task can be performed very quickly. For example, if the colors are used for frequency assignment or code assignment tasks, the running time will not be affected by the number of colors. Instead, the range of frequencies or codes will be affected. However, this is unavoidable in the worst case, in view of the lower bounds on the chromatic number of triangle free graphs and graph of large girth.\\
{\bf 1.3 The Minimum Dominating Set Problem\\}
The MDS problem is one of the most fundamental classical problems of distributed graph algorithms. Jia et al. \cite{JRS01} devised the first efficient randomized $O(\log \Delta)$-approximation algorithm for the MDS problem with running time $O(\log n \log \Delta)$. Their result was improved and generalized by Kuhn and Wattenhofer \cite{KW05} who devised an $O(k)$-time randomized $O(\sqrt{k} \Delta^{1/\sqrt{k}} \log \Delta)$-approximation algorithm for the problem.

The results of \cite{JRS01,KW05} spectacularly advanced our understanding of the distributed complexity of the MDS problem. However, both these algorithms \cite{JRS01,KW05} are randomized, and no efficient deterministic distributed algorithms with a non-trivial approximation guarantee for general graphs are currently known. Lenzen and Wattenhofer \cite{LW10} devised such algorithms for graphs with bounded arboricity. Below we provide a quote from their paper: \\
\textit{"To the best of our knowledge, the deterministic distributed complexity of MDS approximation on general graphs is more or less a blind spot, as so far neither fast (polylogarithmic time) algorithms nor stronger lower bounds are known"}.

\noindent \ \ \ \ \ In this paper we address this blind spot and devise a deterministic $O(n^{1/k})$-approximation algorithm for the MDS problem with time $O((\log n)^{k-1})$. Similarly to our approximation algorithms for the minimum coloring and the minimum $t$-spanner problems, this algorithm is also a consequence of our algorithms for constructing network-decompositions. However, for the MDS we use a deterministic version of these algorithms, while for the minimum coloring and minimum $t$-spanner problems we use a randomized version. Also, we present a variant of our MDS approximation algorithm that employs only polynomially-bounded local computations, requires $O((\log n)^{k-1})$ time, and provides an $O(n^{1/k} \log \Delta)$ approximation.\\ 
{\bf 1.4 Additional Results \\}
We also use our algorithms for computing network-decompositions for devising algorithms for computing {\em low-intersecting partitions}. Low-intersecting partitions were introduced by Busch et al. \cite{BDRRS12} in a paper on universal Steiner trees.
A {\em low-intersecting $(\alpha, \beta, \gamma)$-partition} ${\cal P}$ of a graph $G$ is the partition of the vertex set $V$ such that: 
(1) Every cluster $C$ in ${\cal P}$ has strong diameter at most $\alpha \cdot \gamma$. \\
(2) For every vertex $v \in V$, a ball $B_{\gamma}(v)$ of radius $\gamma$ around $v$ intersects at most $\beta$ clusters of ${\cal P}$. 

 \ \ \ Busch et al. showed that given a hierarchy of low-intersecting partitions with certain properties (see \cite{BDRRS12} for details) one can construct a universal Steiner tree. (See \cite{BDRRS12} for the definition of universal Steiner tree.) Also, vice versa, given universal Steiner tree they showed that one can construct a low-intersecting partition. They constructed a low-intersecting partition with $\alpha = 4^k, \beta = k \cdot n^{1/k}$, and arbitrary $\gamma$.

We devise a distributed randomized algorithm that constructs low-intersecting $((O(\gamma)^k, n^{1/k}, \gamma)$-partitions in time $(O(\gamma))^k \log^{2/3} n$ in general graphs and in $(O(\gamma))^k \cdot exp \{O(\sqrt{\log \log n})\}$ time in graphs of girth  $g \geq 6$. This algorithm employs only short messages and polynomially-bounded local computations.

Comparing this result with the algorithm of Busch et al. \cite{BDRRS12} we note that the partition of \cite{BDRRS12} has smaller radius. (It is $\gamma \cdot (O(1))^k$ instead of $(O(\gamma))^k$ in our case.) On the other hand, the intersection parameter $\beta$ of our partitions is smaller. (It is $n^{1/k}$ instead of $k \cdot n^{1/k}$.) In particular, the intersection parameter in the construction of \cite{BDRRS12} is always $\Omega(\log n)$, while ours can be as small as one wishes. Finally, and perhaps most importantly, the algorithm of \cite{BDRRS12} is not distributed, and seems inherently sequential.

\noindent {\bf 1.5 Comparison of Our and Previous Techniques\\}
Basically, our algorithms for computing network-decompositions can be viewed as a randomized variant of the deterministic algorithm of Awerbuch et al. \cite{AGLP89}. The algorithm of Awerbuch et al. \cite{AGLP89} computes iteratively ruling sets for subsets of high-degree vertices in a number of supergraphs. These supergraphs are induced by certain graph partitions which are computed during the algorithm. (A subset $U \subseteq V$ of vertices is called an $(\alpha, \beta)$-ruling set if any two distinct vertices $u, u' \in U$ are at distance at least $\alpha$ one from another, and every $v \in V \setminus U$ not in a ruling set has a "ruler" $u \in U$ at distance at most $\beta$ from $v$.)
As a result of the computation the algorithm of \cite{AGLP89} constructs a partition into clusters of diameter at most $\alpha$, such that the supergraph induced by this partition has arboricity at most $\beta$. The algorithm of \cite{AGLP89} then colors this partition with $O(\beta)$ colors in time $O(\beta \log n) \cdot O(\alpha)$. (The running time of the algorithm is $O(\beta \log n)$ when running on an ordinary graph. The running time is multiplied by a factor of $O(\alpha)$, because the coloring algorithm is simulated on a supergraph whose vertices are clusters of diameter $O(\alpha)$.) The fact that the running time in the result of \cite{AGLP89} is (roughly speaking) the product $\alpha \cdot \beta$ of the parameters of the resulting network-decomposition is the reason that Awerbuch et al \cite{AGLP89} made an effort to balance these parameters, and set both of them to be equal to $exp \{ O(\sqrt{\log n \log \log n})\}$. The algorithm of Panconesi and Srinivasan \cite{PS95} is closely related to that of \cite{AGLP89} except that it invokes a sophisticated doubly-recursive scheme for computing ruling sets via network-decompositions, and vice versa.
 This ingenious idea enables \cite{PS95} to balance the parameters and running time better. Specifically, they are all equal to $2^{O(\sqrt{\log n})}$.

Our algorithm is different from \cite{AGLP89,PS95} in two respects. First, we replace a quite slow (it requires $O(\log n)$ time) deterministic procedure for computing ruling sets by a constant-time randomized one. Note that {\em generally} computing $(O(1),O(1))$-ruling sets requires $\Omega(\log^* n)$ time \cite{L92}, but we only need to compute them for {\em high-degree vertices} of certain supergraphs. This can be easily done in randomized constant time. Second, instead of coloring the resulting partition with $O(\beta)$ colors in $O(\beta \log n) \cdot O(\alpha)$ time, we color it in $O(\beta \cdot n^{\epsilon})$ colors in $O(1/\epsilon) \cdot O(\alpha)$ time by a simple randomized procedure, or in $O(\beta^2 \log^{(t)} n)$ colors in $O(t) \cdot O(\alpha)$ time, for a parameter $t>0$, by a deterministic algorithm Arb-Linial \cite{BE08}. Hence the number of colors is somewhat greater than in \cite{AGLP89,PS95}, but the running time is constant.

The algorithm of Linial and Saks \cite{LS92} is inherently different from both \cite{AGLP89,PS95} and from our algorithm. It runs for $O(\log n)$ phases, each of which constructs a collection of clusters of diameter $O(\log n)$ at pairwise distance at least $2$ which covers at least half of all remaining vertices. The running time of the algorithm of \cite{LS92}, similarly to \cite{AGLP89} and \cite{PS95}, is the product of the number of phases and clusters' diameter. Hence the approach of \cite{LS92} appears to be inherently incapable to give rise to a constant time algorithm.

\noindent \ \ \ \ \ Our deterministic variant of the network-decomposition procedure is the basis for our deterministic approximation algorithm for MDS. Our deterministic variant is closer to the algorithm of \cite{AGLP89} than our randomized one. The main difference between our deterministic variant and the algorithm of \cite{AGLP89} is that we use a different much faster coloring procedure for the supergraph induced by the ultimate partition.

\noindent {\bf 1.6 Related Work \\}
Network-decompositions for general graphs were studied in \cite{ABCP96,C93,AP90}. Dubhashi et al. \cite{DMPRS05} used network decompositions for constructing low-stretch dominating sets. Recently, Kutten et al. \cite{KNPR14} extended Linial-Saks network-decompositions to hypergraphs.  Many authors \cite{GV07,KMW05,SW08} studied network-decompositions for graphs with bounded growth. Distributed approximation algorithms is a vivid research area. See, e.g., \cite{N14} and the references therein. Distributed graph coloring is also a very active research area. See a recent monograph \cite{BE13}, and the references therein. Schneider et al. \cite{SEW13} devised a distributed coloring algorithm whose performance depends on the chromatic number of the input graph. However, the algorithm of \cite{SEW13} provides no non-trivial approximation guarantee. To the best of our knowledge there are no known distributed approximation algorithms for the minimum $t$-spanner problem. Efficient distributed algorithms for constructing sparse undirected spanners can be found in \cite{E07,DGPV08}. For centralized approximation algorithms for the minimum $t$-spanner problem, see \cite{KP94,EP05,BBMRY11}.   
\section{Preliminaries}  \label{sc:preliminaries}
\noindent For a subset $V' \subseteq V$, the graph $G(V')$ denotes the subgraph of $G$ induced by $V'$.  The {\em degree} of a vertex $v$ in a graph $G = (V,E)$, denoted {\em $\deg_G(v)$}, is the number of edges incident on $v$.  
A vertex $u$ such that $(u,v) \in E$ is called a {\em neighbor} of $v$ in $G$. The {\em neighborhood} of $v$ in $G$, denoted $\Gamma_G(v)$, is the set of neighbors of $v$ in $G$. If the graph $G$ can be understood from context, then we omit the underscript $_G$. For a vertex $v \in V$, the set $v \cup \Gamma(V)$ is denoted by $\Gamma^+(v)$. For a set $W \subseteq V$, we denote by $\Gamma^+(W)$ the set $W \cup \bigcup_{w \in W} \Gamma(w)$. 
The {\em distance} between a pair of vertices $u,v \in V$, denoted $\mbox{dist}_G(u,v)$, is the length of the shortest path between $u$ and $v$ in $G$. 
The {\em diameter} of $G$ is the maximum distance between a pair of vertices in $G$.
The {\em chromatic number} $\chi(G)$ of a graph $G$ is the minimum number of colors that can be used in a proper coloring of the vertices of $G$. 

\section{Network Decomposition}
\subsection{Procedure Decompose} \label{sc:decompose}
In this section we devise an algorithm for computing an $(O(1),O(n^{\epsilon}))$-network-decomposition in $O(1)$ rounds, for an arbitrarily small constant $\epsilon > 0$.  More generally, our algorithm computes a $(3^k,O(k \cdot n^{2/k} \cdot \log^2 n))$-network-decomposition  $Q$ in $O(3^k \cdot \log^* n)$ rounds, for any positive parameter $k, 1 \leq k  \leq \log n$, along with an $O(k \cdot n^{2/k} \cdot \log^2 n)$-coloring $\varphi$ of the supergraph induced by $Q$. (The $\log^* n$ term can be eliminated from the running time at the expense of increasing the number of colors used by $\varphi$ by a multiplicative factor of $\log^{(t)} n$, for an arbitrarily large constant $t$. We will later show that the multiplicative factor of $k$ in the second parameter of the network decomposition can also be eliminated without affecting other parameters.)  The algorithm is called {\em Procedure Decompose}. 
The procedure runs on some supergraph ${\hat G} = ({\hat V}, {\hat E})$ of the original graph $G$. Each vertex  $C \in {\hat V}$ is a cluster (i.e., a subset of vertices) of the original graph $G = (V,E)$, and different clusters are disjoint. Observe that generally it may happen that $V \neq \cup_{C \in {\hat V}} C$. The procedure accepts as input the supergraph ${\hat G}$, the number of vertices $n$ of $G$, the parameter $k$, and an upper bound $s$ on the number of vertices of the supergraph ${\hat G}$. It also accepts as input two numerical parameters $\epsilon$ and $t$. The parameter $\epsilon > 0$ is a sufficiently small positive constant and $t > 0$ is a sufficiently large integer constant. Initially the supergraph is $G$ itself, with each vertex $v$ forming a singleton cluster $\{ v \}$. Hence initially it holds that $n = s$.
The procedure is invoked recursively. After each invocation the current  supergraph $\hat{G}$ is replaced with a supergraph on fewer vertices, and $s$ is updated accordingly. The parameter $n$, however, remains unchanged throughout the entire execution.) As a result of an execution of Procedure Decompose every vertex $v$ in $\hat{G}$ is assigned a label $label(v)$. The value of $label(v)$ is equal to the color $\varphi(C_v)$ of the cluster $C_v$ of $Q$ which contains $v$.

Procedure Decompose partitions the graph $\hat{G}$ into two vertex-disjoint subgraphs with certain helpful properties. Specifically, one of the subgraphs has a sufficiently small maximum degree that allows us to compute a network decomposition in it directly and efficiently. The other subgraph can be partitioned into a sufficiently small number of clusters with bounded diameter. The latter property is used to construct a supergraph whose vertices are formed from the clusters. Since the number of clusters is sufficiently small, the number of vertices of the supergraph is small as well. Then our algorithm proceeds recursively to compute a network decomposition of the new supergraph, using fresh labels that have not been used yet. The recursion continues for $k$ levels. Then each vertex is assigned the label of the supernode it belongs to. (Supernodes of distinct recursion levels may be nested one inside the other. In this case an inner supernode receives the label of an outer supernode. A vertex of the original graph $G$ receives the (same) label of all supernodes it belongs to. Notice that a vertex belongs to exactly one supernode in each recursion level.) This completes the description of the algorithm. Its pseudocode is provided below. (See Algorithm \ref{proced:decompose}.)

The algorithm employs two auxiliary procedures that we describe in detail in Section \ref{sc:pr}. The procedures succeed with high probability, i.e., with probability $1 - 1/n^c$, for an arbitrarily large constant $c$. The first procedure is called {\em Procedure Dec-Small}. It accepts a graph $G$ with at most $n$ vertices and maximum degree at most $d$. Procedure Dec-Small accepts also as input two numerical parameters, $\epsilon$ and $t$, which are relayed to it from Procedure Decompose. Recall that $\epsilon > 0$ is a sufficiently small constant and $t$ is a sufficiently large integer constant. The procedure computes an $O(\min\{d \cdot n^{\epsilon},d^2\})$-coloring of $G$ in $O(\log^* n)$ time. (The time is $O(1)$ if $d > n^{\epsilon}$. Another variant of this procedure computes an $O(d^2 \log^{(t)} n)$-coloring in $O(t)$ time, for an arbitrarily large positive integer $t$.) Observe that for any integer $p > 0$, a proper $p$-coloring of a graph $G$ is also a $(0,p)$-network-decomposition of $G$. (There are $p$ labels, and each cluster consists of a single vertex. Thus the diameter of the decomposition is $0$.)  Procedure Dec-Small returns a $(0,p)$-network-decomposition $S$ on line 5. It also returns a labeling function $label_S$ for vertices of a subset $A$. (We will soon describe how this subset is obtained.)  The labeling $label_S$ also serves as a proper coloring for the supergraph induced by $S$. 

The second procedure which is invoked by our algorithm is called {\em Procedure Partition}. This randomized procedure accepts as input an $s$-vertex supergraph $\hat{G}=(\hat{V},\hat{E})$ and a parameter $q < \frac{|\hat{V}|}{2c \cdot \log n}$, and partitions $\hat{V}$ into two subsets $A$ and $B$, such that $\hat{G}(A)$ and $\hat{G}(B)$ have the following properties. The subgraph $\hat{G}(A)$ has maximum degree $O(q \log n)$. The subgraph $\hat{G}(B)$ consists of $O(|V|/ q) = O(s/q)$ clusters of diameter at most $2$ with respect to $\hat{G}$. The procedure contracts each such cluster into a supernode. Let ${\cal B}$ denote the resulting set of supernodes and ${\cal G}({\cal B}) = ({\cal B}, {\cal E}({\cal B}))$ the resulting supergraph. Specifically, the vertex set of ${\cal G}({\cal B})$ is ${\cal B}$, and its edge set is  ${\cal E}({\cal B})=\{ (C,C') \ | \ C,C' \in {\cal B}, \  \exists u \in C, u' \in C', \mbox{ such that } (u,u') \in \hat{E}\}.$  Procedure Partition returns the subset $A \subseteq \hat{V}$ and the set of supernodes ${\cal B}$.

 The clusters in $B$ are obtained by computing a dominating set $D$ of $B$ of size $O(|V|/q)$. Each vertex in $D$ becomes a leader of a distinct cluster. Each vertex in $B \setminus D$ selects an arbitrary neighbor in $D$ and joins the cluster of this neighbor. Consequently, in all clusters all vertices are at distance at most $1$ from the leader of their cluster. Hence all clusters have diameter at most $2$. 
Initially, each vertex of $V$ joins the set $D$ with probability $1/q$. Then the set $B$ is formed by the vertices of $D$ and their neighbors. Finally, the set $A$ is formed by the remaining vertices, i.e., $A = V \setminus B$. In this stage the procedure returns the set of nodes $A$ and the set of supernodes ${\cal B}$ which is obtained from $B$, and terminates. This completes the description of Procedure Partition.

\begin{algorithm}[H]
\caption{Procedure Decompose($\hat{G}, n, k, s, \epsilon ,t$)}
\label{proced:decompose}

\begin{algorithmic}[1] 

\IF {$s \leq 2c \cdot n^{1/k} \log n$}

    \STATE return Dec-Small($\hat{G}, n, s, \epsilon, t$) 
 
    /* Compute directly a $(0,O(s^2))$-network-decomposition of $\hat{G}$. (See Section \ref{sc:pr}.) */

\ELSE

    \STATE $(A,{\cal B})$ := Partition($\hat{G}, q := n^{1/k} $)
				
		/* Partition $\hat{G}$ into $A$ and ${\cal B}$. (See Section \ref{sc:pr}.) The maximum degree of $\hat{G}(A)$ is $O(n^{1/k} \log n)$.*/
		
		\STATE $(S, label_S)$ := Dec-Small($G(A), n, n^{1/k} \log n$, $\epsilon$, $t$)
		
		/* Compute directly a $(0,O(n^{2/k} \cdot \log^2 n))$-network-decomposition of $\hat{G}(A)$. (See Section \ref{sc:pr}.) */
		
	  \STATE $(L, label_L) := $ Decompose(${\cal G}({\cal B}), n, k,  \frac{s}{ n^{1/k}} $)
		
		/* A recursive invocation on the supergraph ${\cal G}({\cal B})$ that contains at most $\frac{s}{ n^{1/k}}$ supernodes.  */
		
		\FOR {each vertex $v$ of $\hat{G}$, {\bf in parallel,}}
		
		    \IF {$v \in S$}
				   
					\STATE $label(v) := label_S(v)$
				
				\ELSIF {$v \in L$}
		
		    \STATE $label(v) := label_L(v) + \Lambda$ 
				
				/* $\Lambda = \gamma \cdot \left \lfloor n^{2/k} \cdot \log^2 n \right \rfloor$, where $\gamma$ is a sufficiently large constant to be determined later. */
				
				\ENDIF
				
		/* The labeling function $label$ on $S \cup L$ is defined by: for a cluster $C \in S$ (respectively, $C \in L$) it applies to it the function $label_S()$ (resp., $label_L() + \Lambda$). */		
				
		\ENDFOR
		
		\STATE return $(S \cup L, label)$
		
\ENDIF

\end{algorithmic}
\end{algorithm}

The recursive invocation of Procedure Decompose on line 6 returns a network decomposition $L$ for the supergraph ${\cal G}(B)$. The for-loop (lines 7-13) adds (in parallel) $\Lambda = \gamma \cdot \left \lfloor n^{2/k} \log^2 n \right \rfloor$ to the color of each cluster of the network decomposition $L_0$ of ${\cal G}(B)$, where $\gamma$ is a sufficiently large constant to be determined later. Since the number of colors used in each recursive level is at most $\Lambda$, this loop guarantees that colors used for clusters created on different recursion levels are different. This is because the labeling returned by procedure Dec-Small on line 5 for clusters of $S$ employs the palette $[\Lambda]$ while the labeling computed in lines 10 - 12 for clusters of $L$ employs labels which are greater than $\Lambda$.
The termination condition of the procedure is the case $s = O(n^{1/k} \log n)$, i.e., when the number $s$ of vertices in the supergraph $\hat{G}$ is already small. At this point the maximum degree of $\hat{G}$ is small as well (at most $s-1$), and so coloring the supergraph (by Procedure Dec-Small) results in a sufficiently good network decomposition.

Observe that our main algorithm will invoke the procedure on the original graph $G$. Hence in the first level of the recursion ${\hat G} = G$, and each supernode is actually a node of $G$. In the second recursion level it is executed on the supernodes of nodes of the original graph $G$. In the third level it is executed on supernodes of supernodes, etc. Consequently, starting from the second recursion level supernodes have to be simulated using original nodes of the network. To this end each cluster that forms a supernode selects a leader which is used for simulating the supernodes. Moreover, the leader is used to simulate all nested supernodes to which it belongs. Our supernodes are obtained by at most $k$ levels of nesting. In each level of nesting a supernode is a cluster of diameter at most $2$ in a graph whose nodes are lower-level supernodes. Hence a simulation of a single round on such a supergraph will require up to $3^{k + 1}$ rounds. 

Next we provide several lemmas that will be used for the analysis of the algorithm. We leave the parameters $\epsilon$ and $t$ unspecified in all lemmas in this section, because they have no effect on the analysis. 
 \begin{lem} \label{decomposelevl}
Consider an invocation of Procedure Decompose on the original graph $G = (V,E)$ with parameters $n = |V|$, $k$, and $s = n$, for $1 \leq k \leq \log n$.
The number of recursion levels in the execution of this Procedure (i.e., Decompose(${\hat G} :=G,n,k, s := n$)) is $k$.
\end{lem}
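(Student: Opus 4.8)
The claim is that invoking Procedure Decompose on $G$ with $s = n$ runs for exactly $k$ recursion levels. The natural approach is to track how the parameter $s$ evolves through the recursion and to verify that the termination condition on line 1 is first satisfied precisely after $k$ levels. Observe that $n$ and $k$ are fixed throughout the recursion (as stated in the description of the procedure), so the only changing argument is $s$. The plan is to show that $s$ is divided by exactly $n^{1/k}$ at each recursive call on line 6, so after $i$ levels the value of $s$ equals $n/n^{i/k} = n^{1 - i/k} = n^{(k-i)/k}$.

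\medskip
\noindent\textbf{Step 1 (Setting up the recurrence on $s$).}
I would first establish by inspection of the pseudocode that, whenever the else-branch (lines 4--14) is taken, the recursive invocation on line 6 is called with supergraph parameter $s/n^{1/k}$. This follows directly from the argument $\frac{s}{n^{1/k}}$ passed to Decompose on line 6. I should note here that the recursion faithfully passes this value: the upper bound on the number of supernodes in ${\cal G}({\cal B})$ is $O(s/q) = O(s/n^{1/k})$ as guaranteed by Procedure Partition (with $q := n^{1/k}$), so the parameter $s/n^{1/k}$ is indeed a valid upper bound on $|{\cal B}|$ for the recursive call.

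\medskip
\noindent\textbf{Step 2 (Counting levels via the termination condition).}
Starting from $s_0 = n$ at level $0$, the value at level $i$ is $s_i = n^{(k-i)/k}$. The recursion terminates (line 2) at the first level $i$ for which $s_i \leq 2c \cdot n^{1/k}\log n$. I would then check that $s_{k-1} = n^{1/k}$ does not yet trigger termination for large $n$ — since $n^{1/k} < 2c \cdot n^{1/k}\log n$ holds trivially, one must be slightly careful: $s_{k-1} = n^{1/k}$ already satisfies the condition. Here is the subtle point and the main obstacle: the naive computation suggests termination could occur one level earlier than $k$. I would resolve this by examining the exact value passed and recalling that at level $i$ the else-branch performs one recursive descent, so the number of \emph{recursion levels} counts the total depth of nested invocations including the terminating one. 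I would argue that the clean statement $s_i = n^{(k-i)/k}$ gives $s_k = n^0 = 1 \leq 2c \cdot n^{1/k}\log n$, and that for all $i < k$ one has $s_i = n^{(k-i)/k} \geq n^{1/k}$, which the intended threshold comparison treats as ``not yet small enough'' under the procedure's accounting (the recursion is designed so that the base case is reached exactly when the supergraph has been reduced to essentially a single supernode).

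\medskip
\noindent\textbf{Step 3 (Induction to conclude).}
The cleanest write-up is an induction on $k$: for $k=1$, the initial call has $s = n$ and $q = n$, so after one Partition step the recursive call receives $s/n = 1$, which triggers the base case, yielding $1$ level as required. For the inductive step, the first invocation reduces $s$ from $n$ to $n^{1-1/k}$ and recurses; I would match this reduced instance against a $(k-1)$-fold descent and invoke the inductive hypothesis to conclude that the remaining recursion contributes $k-1$ levels, for a total of $k$. The main obstacle, as flagged in Step 2, is pinning down the off-by-one in the termination threshold; everything else is a routine unwinding of the recurrence $s \mapsto s/n^{1/k}$.
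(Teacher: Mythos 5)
Your overall approach---unwinding the recurrence $s \mapsto s/n^{1/k}$ and locating the first level at which line~1 fires---is exactly the paper's proof, which simply notes that at recursion level $i$ (counting the initial call as level $1$) the parameter is $s = n^{1-(i-1)/k}$, so at level $k$ one has $s = n^{1/k} \leq 2c \cdot n^{1/k}\log n$ and the termination condition holds. But your Step~2 contains a genuine error: the ``off-by-one obstacle'' you flag is only an artifact of your zero-based indexing, and your resolution of it is wrong. With $s_0 = n$ and $s_i = n^{(k-i)/k}$, the terminating invocation is the one with $s_{k-1} = n^{1/k}$, since $n^{1/k} \leq 2c\cdot n^{1/k}\log n$ does hold---this is precisely the intended base case, not a premature stop. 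The invocations at levels $0,1,\ldots,k-1$ already number exactly $k$, so no fix is needed. Your proposed fix---declaring that $s_i \geq n^{1/k}$ for $i<k$ is ``not yet small enough under the procedure's accounting'' and that the recursion bottoms out only at $s_k = 1$---directly contradicts line~1 of Algorithm~\ref{proced:decompose}, and under your own indexing it would yield $k+1$ levels (levels $0$ through $k$), refuting the lemma rather than proving it.

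The same misreading breaks your Step~3 base case: for $k=1$ the threshold is $2c\cdot n\log n$, which $s=n$ satisfies immediately, so the very first call enters Dec-Small and no Partition step occurs at all; your claim that the initial call partitions and recurses with $s/n = 1$ never happens. The repair is simple: drop the phantom obstacle, adopt the paper's convention (level $i$ has $s = n^{1-(i-1)/k}$, the else-branch is taken for $i < k$, and level $k$ with $s = n^{1/k}$ triggers line~1), and the count of $k$ levels falls out directly. One caveat you could legitimately have raised---which neither you nor the paper addresses---is that for $i<k$ one should check $n^{1-(i-1)/k} > 2c\cdot n^{1/k}\log n$, i.e., that termination does not fire \emph{early}; this holds whenever $n^{1/k} > 2c\log n$, which covers the regime of constant $k$ that the paper cares about, but is the honest place where care is required, rather than the spurious $s=1$ issue you introduced.
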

\begin{proof}
In recursion level $i$, $i = 1,2,...$, the parameter $s$ is equal to $n^{1 - (i-1)/k}$. Hence, in recursion level $k$ the parameter $s$ is equal to $n^{1/k}$, and the recursion reaches the termination condition. (See lines 1-2 of Algorithm \ref{proced:decompose}.)
\end{proof} 
\begin{lem} \label{numoflabels}
The number of labels used in the invocation of Procedure Decompose(${\hat G} := G,n,k, s := n$) is $O(k \cdot n^{2/k} \log^2 n)$.
\end{lem}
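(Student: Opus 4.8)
The plan is to combine Lemma \ref{decomposelevl}, which fixes the number of recursion levels at exactly $k$, with a per-level count of the freshly introduced labels, together with the observation that the shift by $\Lambda$ makes the label sets of different levels pairwise disjoint. I would show that each recursion level contributes at most $\Lambda = \gamma \cdot \lfloor n^{2/k} \log^2 n \rfloor = O(n^{2/k}\log^2 n)$ labels, and then sum over the $k$ levels.

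First I would bound the number of labels introduced at a single non-terminal level. At such a level Procedure Partition is invoked with $q = n^{1/k}$, so by its stated guarantee the subgraph $\hat{G}(A)$ has maximum degree $d = O(n^{1/k}\log n)$. Procedure Dec-Small then colors $\hat{G}(A)$ with $O(\min\{d\cdot n^{\epsilon},\, d^2\}) = O(d^2) = O(n^{2/k}\log^2 n)$ colors. Choosing the constant $\gamma$ large enough, these colors all lie in the palette $[\Lambda]$, so the labeling $label_S$ assigned on line 5 takes values only in $\{1,\dots,\Lambda\}$.

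Next I would treat the terminal level. By Lemma \ref{decomposelevl} the recursion terminates at level $k$, where $s = n^{1/k}$ and in particular $s \le 2c\cdot n^{1/k}\log n$, so the condition on line 1 is met. There Dec-Small returns a $(0,O(s^2))$-network-decomposition, using $O(s^2) = O(n^{2/k}\log^2 n)$ labels, which again fit inside $[\Lambda]$ for $\gamma$ large enough.

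Finally I would establish disjointness of the label sets across levels and conclude by induction. The key is the for-loop on lines 7--13, which adds $\Lambda$ to every label returned by the recursive call. Letting $M(j)$ denote the largest label used by an invocation whose recursion subtree has $j$ levels, the base case gives $M(1)\le\Lambda$, while the inductive step gives $M(j) \le \max\{\Lambda,\ \Lambda + M(j-1)\} = j\Lambda$, since the local labels of $S$ occupy $[1,\Lambda]$ whereas the recursively produced labels of $L$ are shifted into $[\Lambda+1,\, j\Lambda]$. Hence the top-level call, for which $j=k$, uses labels only from $\{1,\dots,k\Lambda\}$, giving a total of at most $k\Lambda = O(k\cdot n^{2/k}\log^2 n)$ distinct labels. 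The only point requiring care is checking that the constants hidden in the $O(\cdot)$ bounds coming from Procedure Dec-Small and Procedure Partition can be absorbed into a single constant $\gamma$, so that every level's coloring genuinely fits in $[\Lambda]$; once $\gamma$ is fixed this way, the disjointness and the summation over levels are immediate.
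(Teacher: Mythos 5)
Your proposal is correct and follows essentially the same route as the paper's proof: an induction over the $k$ recursion levels (from the terminal level upward) showing that each level contributes at most $\Lambda = \gamma \cdot \lfloor n^{2/k}\log^2 n\rfloor$ fresh labels, with the shift by $\Lambda$ on line 11 keeping the levels' label ranges disjoint and the constant $\gamma$ chosen to dominate the hidden constants in the bounds from lines 2 and 5. Your bookkeeping via the maximum label $M(j) \le j\Lambda$ is just a cosmetic repackaging of the paper's inductive claim that the $i$th-level invocation uses at most $(k-i+1)\cdot\gamma\cdot\lfloor n^{2/k}\log^2 n\rfloor$ labels.
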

\begin{proof}
We show that the number of labels is $\gamma \cdot k \cdot \left \lfloor n^{2/k} \log^2 n \right \rfloor$,
where $\gamma$ is a sufficiently large constant.
Specifically, the constant $\gamma$ needs to be larger than the constants hidden by the $O$-notation in comments on lines 2 and 5 of the algorithm. (Recall that line 2 computes a $(0, O(s^2))$-network-decomposition, and line 5 computes a $(0,O(n^{2/k} \log^2 n))$-network-decomposition.
The constant $\gamma$ appears in line 11 of Algorithm \ref{proced:decompose}.)  The proof is by induction on $\ell_i = k - i + 1$, where $i$ is the recursion level. In other words, this is an inverse recursion on the number of recursion levels.
For each index $i \in [k]$, denote by $\hat{G}_i$ the supergraph on which Procedure Decompose is invoked on the $i$th level of the recursion. Note that at this point $s = n^{1 - (i - 1)/k}$. The inductive claim is that the $i$th level invocation of Procedure Decompose (on the supergraph $\hat{G}_i$) employs at most $\ell_i \cdot \gamma \cdot \left \lfloor n^{2/k} \log^2 n \right \rfloor \leq (k - i + 1) \cdot \gamma \cdot n^{2/k} \log^2 n$ labels.\\
{\bf Base ($\ell_i = 1$, i.e., $i = k$)}: In this case $s \leq 2c \cdot n^{1/k} \log n$, the termination condition of the recursion holds, and thus the number of labels used in the decomposition is $O(s^2)$. (See line 2 of Algorithm \ref{proced:decompose}.) By the choice of $\gamma$, the number of labels is at most $\gamma \cdot n^{2/k} \cdot \log^2 n$. \\
{\bf Step}: Suppose that the invocation has returned from level $i + 1$ of the recursion, and it is now at level $i$. By the induction hypothesis, line 6 of Algorithm \ref{proced:decompose} returns a labeling with $ \gamma \cdot \ell_{i+1} \cdot \left \lfloor n^{2/k} \log^2 n \right \rfloor = \gamma \cdot (\ell_i-1) \cdot \left \lfloor n^{2/k} \log^2 n \right \rfloor $ labels. Once line 11 is executed, the number of labels becomes
$\gamma \cdot \ell_i \cdot \left \lfloor n^{2/k} \log^2 n \right \rfloor$. 
This proves the inductive claim.

In the end of recursion level $i = 1$ the algorithm terminates (after returning from all recursive invocations). In this stage it holds that $\ell_1 = k$, and the claim follows.
\end{proof} 
\begin{lem} \label{dec}
Each cluster created by the invocation above has diameter at most $3^{k - 1} - 1$.
\end{lem}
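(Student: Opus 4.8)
The plan is to set up an inductive bound on cluster diameters that tracks how the strong diameter grows from one recursion level to the next. For each level $i \in [k]$, let $\hat{G}_i$ denote the supergraph on which Procedure Decompose is invoked at level $i$ (as in the proof of Lemma \ref{numoflabels}), and let $D_i$ be the maximum strong diameter, measured in $G$, of any cluster of the original graph that serves as a vertex of $\hat{G}_i$. Since the first invocation has $\hat{G}_1 = G$ with every vertex forming a singleton cluster, the base case is $D_1 = 0$. Every cluster that the procedure ultimately places into $Q$ is finalized as a singleton cluster by some call to Procedure Dec-Small at some level $i \le k$ (either on line 5, or, at the bottom of the recursion, on line 2); such a singleton is exactly one vertex of $\hat{G}_i$, hence a cluster of $G$ of diameter at most $D_i$. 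Thus it suffices to prove $D_i \le 3^{i-1} - 1$ for every $i$, and the claim will follow because the $D_i$ are non-decreasing and $i \le k$.

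The heart of the argument is the recurrence $D_{i+1} \le 3 D_i + 2$. The vertices of $\hat{G}_{i+1} = {\cal G}({\cal B})$ are exactly the supernodes produced by Procedure Partition at level $i$. By the construction of Partition, each such supernode is a star in $\hat{G}_i$: it has a leader cluster $C_0$, and every other member $C_j$ is adjacent to $C_0$ in $\hat{G}_i$, i.e.\ there is a $G$-edge joining $C_j$ to $C_0$. I would bound the $G$-diameter of the union $\mathcal{C} = C_0 \cup C_1 \cup \cdots \cup C_m$ by exhibiting, for any $u \in C_j$ and $v \in C_{j'}$, a path in $G(\mathcal{C})$: traverse $C_j$ from $u$ to the endpoint $a_j$ of its connecting edge (length $\le D_i$), cross the edge $a_j b_j$ into $C_0$ (length $1$), traverse $C_0$ from $b_j$ to the endpoint $b_{j'}$ of the second connecting edge (length $\le D_i$), cross into $C_{j'}$ (length $1$), and traverse $C_{j'}$ to $v$ (length $\le D_i$), for a total of at most $3 D_i + 2$. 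Crucially, every edge of this path lies inside $\mathcal{C}$, so the bound is on the strong diameter, as required.

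Solving the recurrence finishes the proof: with $D_1 = 0$ and $D_{i+1} \le 3 D_i + 2$, an easy induction gives $D_i \le 3^{i-1} - 1$, since $3(3^{i-1}-1) + 2 = 3^i - 1$. In particular $D_k \le 3^{k-1} - 1$, and since every finalized cluster has diameter at most $D_i \le D_k$ for its level $i \le k$, every cluster created by the invocation has strong diameter at most $3^{k-1} - 1$.

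I expect the only delicate point to be the second paragraph: one must confirm that the ``diameter at most $2$'' guarantee of Procedure Partition is genuinely the radius-$1$ star structure (a leader together with its neighbors), so that the connecting path uses at most two inter-cluster edges rather than more, and that the whole path stays inside $\mathcal{C}$ so the estimate controls the strong rather than the weak diameter. If instead one only knew an abstract diameter-$2$ bound in $\hat{G}_i$ and routed through a center vertex, the naive estimate would give $4 D_i + 2$, which is too weak; it is precisely the star structure that yields the sharp constant $3$.
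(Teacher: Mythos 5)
Your proof is correct, but it runs the induction in the opposite direction from the paper's. The paper's proof of Lemma~\ref{dec} inducts backward on the recursion level (on $\ell = k-i+1$): the inductive quantity is the diameter of the clusters returned by the level-$(i+1)$ recursive call, measured in the supergraph ${\cal G}({\cal B})$, and the inductive step expands a supergraph path of $t$ supernodes (each of strong diameter at most $2$, by Lemma~\ref{partitnscnd}) into a path of length at most $2t + (t-1) = 3t-1$ in the current-level graph. You instead induct forward on the level: your quantity $D_i$ is the strong diameter, measured directly in $G$, of the supernodes serving as vertices of $\hat{G}_i$, with the recurrence $D_{i+1}\le 3D_i+2$; you then need the additional structural fact that every cluster of the final decomposition is a singleton vertex of some $\hat{G}_i$, finalized by Dec-Small on line 2 or line 5. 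That fact is true, and is essentially Lemma~\ref{sets}, which the paper establishes later (Section~\ref{sc:refine}) for other purposes; the paper's proof of Lemma~\ref{dec} does not rely on it, since its inductive step handles arbitrary sets of supernodes as clusters. What your route buys is bookkeeping clarity: distances are always measured in $G$, and the strong-diameter requirement is visibly met because every path you build stays inside the union of the supernodes involved --- a point on which the paper's proof is terser. What the paper's route buys is generality: its expansion step works for whatever decomposition the recursive call returns, singletons or not.

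One correction to your closing remark: the star structure is not what saves the constant. If you only knew that each supernode of ${\cal B}$ is a cluster of strong diameter at most $2$ in $\hat{G}_i$ (which is exactly what Lemma~\ref{partitnscnd} asserts), then any two member supernodes $C_j, C_{j'}$ are joined, inside the cluster, by a supergraph path through at most one intermediate supernode $C_m$, and expanding that path costs at most $D_i + 1 + D_i + 1 + D_i = 3D_i + 2$, not $4D_i+2$. This is precisely the estimate the paper uses (it charges $2$ per traversed supernode and $1$ per connecting edge, giving $3t-1$ for a $t$-supernode path). So your argument survives with the abstract diameter-$2$ guarantee alone; the star structure merely forces the intermediate supernode to be the leader $C_0$.
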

\begin{proof}
We prove by induction on  $\ell = k - i + 1$, where $i$ is the recursion level, that level-$i$ clusters have diameter at most $3^{\ell - 1} -1$. \\
{\bf Base ($\ell = 1$, i.e., $i = k$)}: In this case a $(0,O(s^2))$-network-decomposition is computed directly, and thus the diameter of all clusters in the graph $\hat{G}$ on which it is executed is $0$. (Recall that the argument $\hat{G}$ in the  level-$k$ invocation is a supergraph of the original input graph $\hat{G}$.)\\
{\bf Step:} First, observe that a $(0,O(n^{2/k} \cdot \log^2 n))$-network-decomposition of $\hat{G}(A)$ is computed directly in line 5 of Algorithm \ref{proced:decompose}. Hence $S$ consists of clusters of diameter $0$ (with respect to supernodes of the supergraph $\hat{G}$ of the current recursion level). Next, we analyze the diameter of clusters in $L$. By the induction hypothesis, line 6 of Algorithm \ref{proced:decompose} (i.e., the recursive invocation of Procedure Decompose) returns a network decomposition in which all clusters have diameter at most $3^{\ell-2} - 1$. This is a decomposition of the supergraph ${\cal G}({\cal B})$. Consider a cluster ${\cal C}$ of diameter at most $3^{\ell - 2} - 1$ in ${\cal G}({\cal B})$. 
Let $x,y$ be a pair of vertices of $\hat{G}$ that belong to supernodes in ${\cal C}$. Let $C(x),C(y) \in {\cal C}$ be these two supernodes (clusters), such that $x \in C(x)$, $y \in C(y)$. Since the diameter of ${\cal C}$ in ${\cal G}({\cal B})$ is at most $3^{\ell - 2} - 1$, there exist clusters $C(x) = C_1,C_2,...,C_t = C(y) \in {\cal C}$, such that $t \leq 3^{\ell - 2} -1$, and the following holds. There exist edges $e_1 = (u_1,v_1),e_2 = (u_2,v_2),...,e_{t-1} = (u_{t-1},v_{t-1})$, such that 
for every $i \in [t-2], u_i \in C_i$, and for every $i \in [2, t-1]$, $v_i \in C_{i+1}$. 
(See Figure \ref{clustersvertices} for an illustration.)
By construction, each of the clusters $C_1,C_2,...,C_t$ has diameter at most $2$. Hence for $i \in [t - 2]$, it holds that $\mbox{dist}_{\hat{G}}(v_i,u_{i+1}) \leq 2$. Therefore, 
\begin{eqnarray*}
\mbox{dist}_{\hat{G}}(x,y)  & \leq &  \mbox{dist}_{\hat{G}}(x,u_1) + 1 + \mbox{dist}_{\hat{G}}(v_1,u_2) + 1 + \mbox{dist}_{\hat{G}}(v_2,u_3) + 1 +    \\ & ... & + \mbox{dist}_{\hat{G}}(v_{t - 2},u_{t- 1}) + 1 + \mbox{dist}_{\hat{G}}(v_{t - 1},y)   \ \ \ \ \leq \ \ \ \  2 \cdot t + t - 1 \ \ \ = \ \ \ 3 \cdot t - 1. 
\end{eqnarray*}
Since $t \leq 3^{\ell - 2}$, it follows that $\mbox{dist}_{\hat{G}}(x,y) \leq 3 \cdot 3^{\ell - 2} - 1 = 3^{\ell - 1} - 1$.
Therefore, the diameter of ${\cal C}$ in $\hat{G}$ is at most $3^{\ell - 1} -1$. Hence all clusters in $S \cup L$ have diameter at most $3^{\ell - 1} - 1$ in $\hat{G}$. Since the number of recursion levels is $k$, the claim follows.
\end{proof}
\includegraphics{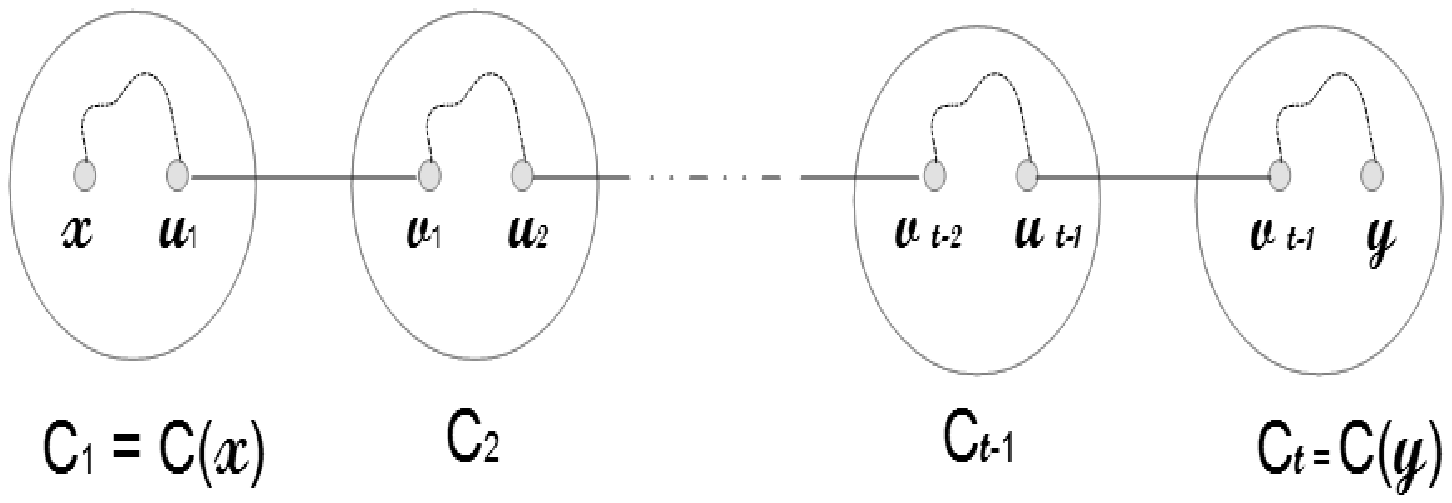}
\begin{fig} \label{clustersvertices}
The clusters $C(x) = C_1,C_2,...,C_t = C(y)$.
\end{fig}
\begin{lem} \label{lemmac}
Suppose that all invocations of auxiliary procedures of Procedure Decompose have succeeded. Then the invocation computes a $(3^{k - 1} - 1, O(k \cdot n^{2/k} \cdot \log^2 n))$-network-decomposition.
\end{lem}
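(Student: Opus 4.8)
The plan is to assemble the three preceding lemmas into the two defining properties of a network-decomposition—bounded cluster diameter and colorability of the induced supergraph—and to supply the one ingredient the earlier lemmas do not yet address, namely that the labeling $label$ returned by Procedure Decompose is a \emph{proper} coloring of the supergraph $\mathcal{G}$ induced by the partition $Q$. The diameter bound $d = 3^{k-1}-1$ is exactly Lemma \ref{dec}, and the bound $\ell = O(k\cdot n^{2/k}\log^2 n)$ on the number of labels is exactly Lemma \ref{numoflabels}; by Lemma \ref{decomposelevl} the recursion has exactly $k$ levels, so these bounds apply to the top-level invocation. I would first record that, since $A_i$ and $B_i$ partition $\hat{V}_i$ at every level while $B_i$ is contracted and passed down to level $i+1$, every vertex of $G$ lies in exactly one supernode that is eventually placed in some $A_i$ (or colored in the base case at level $k$), whence $Q$ is a genuine partition of $V$.

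The heart of the proof is showing the coloring is proper: if two clusters $C,C'$ of $Q$ are adjacent in $\mathcal{G}$—i.e. some edge of $G$ joins a vertex of $C$ to a vertex of $C'$—then $label(C)\neq label(C')$. I would first establish, by induction on the recursion level, that the edge relation of each supergraph $\hat{G}_i$ faithfully reflects $G$: two supernodes of $\hat{G}_i$ are adjacent in $\hat{G}_i$ precisely when some edge of $G$ joins their underlying $G$-vertex sets. This is immediate at level $1$, where $\hat{G}_1 = G$ and each supernode is a singleton, and it propagates through the contraction rule $\mathcal{E}(\mathcal{B}) = \{(C,C') : \exists u\in C,\, u'\in C',\, (u,u')\in \hat{E}\}$ used by Procedure Partition.

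With faithfulness in hand I split into two cases according to the level $j(C)$ at which a cluster receives its Dec-Small color. Tracing the $+\Lambda$ shift on line $11$, a cluster colored at level $j$ is relabeled upward once at each of the levels $j-1,j-2,\ldots,1$, so that its final label equals its raw Dec-Small color plus $(j-1)\Lambda$ and hence lies in the interval $\{(j-1)\Lambda+1,\ldots,j\Lambda\}$; here I use that Dec-Small together with the choice of $\gamma$ keeps the raw per-level palette inside $[\Lambda]$. Consequently, if $j(C)\neq j(C')$ the two labels fall in disjoint intervals and differ. If instead $j(C)=j(C')=j$, then $C$ and $C'$ are supernodes of $\hat{G}_j$ colored by the same Dec-Small invocation (on $\hat{G}_j(A_j)$ when $j<k$, and on all of $\hat{G}_k$ in the base case $j=k$); by faithfulness their adjacency in $\mathcal{G}$ forces adjacency in that subgraph, so Dec-Small—proper by the assumption that the auxiliary procedures succeed—assigns them distinct raw colors, which survive the common $+(j-1)\Lambda$ shift. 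Combining the two cases yields a proper coloring with the label count of Lemma \ref{numoflabels}.

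The main obstacle I anticipate is the bookkeeping that pins each final cluster to a unique coloring level $j(C)$ and certifies that its label occupies the interval $\{(j-1)\Lambda+1,\ldots,j\Lambda\}$; this is where the $\Lambda$-separation of the palettes of different levels does its work, and it must mesh cleanly with the faithfulness induction so that the same-level case can legitimately invoke the properness of Dec-Small on the relevant subgraph. Once these two threads are in place, everything else is a direct citation of Lemmas \ref{decomposelevl}--\ref{dec}.
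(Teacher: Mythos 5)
Your proposal is correct and takes essentially the same route as the paper: the paper's proof likewise combines Lemma \ref{dec} for the diameter, Lemma \ref{numoflabels} for the label count, and a case analysis showing properness — same-level clusters get distinct colors from the properness of Dec-Small, while clusters colored at different levels are separated by the $+\Lambda$ shifts. The only differences are presentational: the paper argues properness by a one-level induction on the $S$/$L$ split where you unroll it into explicit palette intervals $\{(j-1)\Lambda+1,\ldots,j\Lambda\}$, and you make explicit the faithfulness of supergraph adjacency to $G$-adjacency, which the paper leaves implicit in the contraction definition.
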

\begin{proof}
Consider a pair of distinct adjacent clusters $C,C' \in S \cup L$. If $C \in S$ and $C' \in L$ then $label(C) = label_S(C) \in [\Lambda]$, while $label(C') = label_L(C') > \Lambda$. Hence $label(C) \neq label(C')$.

If $C,C' \in S$ then since Procedure Dec-Small returns on line 5 a network decomposition with a proper labeling $label_S(\cdot)$, it follows that $label_S(C) \neq label_S(C')$, and so $label(C) \neq label(C')$.

Finally, if $C,C' \in L$ then inductively we conclude that $label_L(C) \neq label_L(C')$, and thus $label(C) \neq label(C')$ too. (The induction base is the recursion level $k$, where the correctness follows from the correctness of Procedure Dec-Small invoked on line 2 of Algorithm \ref{proced:decompose}.)

Hence Procedure Decompose returns a partition $S \cup L$ into clusters of diameter at most $3^{k-1} - 1$ (by Lemma \ref{dec}), and a proper labeling $label(\cdot)$ of this partition. By Lemma \ref{numoflabels}, the number of labels used by the labeling $label(\cdot)$ is $O(k \cdot n^{2/k} \cdot \log^2 n)$. Hence $S \cup L$ is a $(3^{k - 1} - 1, O(k \cdot n^{2/k} \cdot \log^2 n))$-network-decomposition for $G$, and $label(\cdot)$ is a proper labeling for the network decomposition $S \cup L$.
\end{proof}

Recall that the auxiliary procedures Dec-Small and Partition succeed with probability $1 - 1/n^c$, for an arbitrarily large constant $c$. Each of these procedures is invoked at most $k \leq \log n$ times during the execution of Procedure Decompose. Therefore, the probability that all executions of Procedure Dec-Small and Procedure Partition succeed is at least $(1 -1/n^c)^{2 \log n} \approx 1 - \frac{1}{n^c/2 \log n}$. Since $c$ is an arbitrarily large constant, all executions of the auxiliary procedures succeed, with high probability. Hence Procedure Decompose computes a $(3^k,O(k \cdot n^{2/k} \cdot \log^2 n))$-network-decomposition, with high probability.

The next lemma analyzes the running time of the algorithm. 
\begin{lem} \label{dectime}
Let $T_{part}(n,q)$ (respectively, $T_{dec}(n,d)$) denote the running time of Procedure Partition invoked with parameters $n$ and $q$ (resp., Procedure Dec-Small invoked with parameters $n$ and $d$). We will assume that both these running times are monotone non-decreasing in both parameters. 
Then the running time of Procedure Decompose is $O(3^k \cdot (T_{part}(n,n^{1/k}) +    T_{dec}(n,2c \cdot n^{1/k} \log n)))$.
\end{lem}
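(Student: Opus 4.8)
The plan is to charge the total running time level by level, accounting for the overhead of simulating one communication round of each supergraph on the original network $G$, and then to sum the resulting geometric series. By Lemma~\ref{decomposelevl} the recursion has exactly $k$ levels, and on the $i$-th level Procedure Decompose operates on a supergraph $\hat{G}_i$ (with $\hat{G}_1 = G$). Inspecting Algorithm~\ref{proced:decompose}, the only communication performed within a single level is: one call to Partition on line~4 with $q = n^{1/k}$, costing $T_{part}(n,n^{1/k})$ rounds of $\hat{G}_i$; one call to Dec-Small on line~5 (or line~2 at the terminal level) on a subgraph of $\hat{G}_i$ whose maximum degree is at most $2c\cdot n^{1/k}\log n$, costing at most $T_{dec}(n, 2c\cdot n^{1/k}\log n)$ rounds of $\hat{G}_i$ by the assumed monotonicity; and the parallel for-loop of lines~7--13, which is purely local relabeling and hence costs $O(1)$ rounds of $\hat{G}_i$. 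The recursive call on line~6 is not charged here: its cost is precisely the cost of levels $i+1,\dots,k$, so unrolling the recursion shows that the total time is the sum over the $k$ levels of the per-level costs above.

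Next I would bound the cost, measured in rounds of the original network $G$, of simulating a single round of $\hat{G}_i$; call this simulation factor $\sigma_i$, so that $\sigma_1 = 1$. Each vertex of $\hat{G}_i$ is a supernode, namely a cluster of $\hat{G}_{i-1}$ in which, by the construction in Procedure Partition, every vertex lies within distance $1$ of a designated leader. To deliver a message across a superedge of $\hat{G}_i$ one broadcasts it from the leader of the source supernode to its members, transmits it across the underlying $\hat{G}_{i-1}$-edge, and convergecasts it to the leader of the destination supernode; since each supernode has radius $1$, this takes at most $1+1+1 = 3$ rounds of $\hat{G}_{i-1}$. Hence $\sigma_i \le 3\,\sigma_{i-1}$, which unrolls to $\sigma_i \le 3^{\,i-1} = O(3^i)$. (Equivalently, one may observe that a level-$i$ supernode expands to a cluster of $G$-diameter $O(3^i)$, exactly as in the proof of Lemma~\ref{dec}, so that convergecast and broadcast within it cost $O(3^i)$ rounds of $G$.)

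Combining the two steps, the total running time is
\[
\sum_{i=1}^{k} \sigma_i \cdot O\!\left(T_{part}(n,n^{1/k}) + T_{dec}(n, 2c\cdot n^{1/k}\log n) + 1\right)
= O\!\left(T_{part}(n,n^{1/k}) + T_{dec}(n, 2c\cdot n^{1/k}\log n)\right)\sum_{i=1}^{k} 3^{\,i-1},
\]
where I have used monotonicity of $T_{part}$ and $T_{dec}$ to replace every per-level argument by the common maximal one, and absorbed the additive $O(1)$ into $T_{dec}$, which is certainly $\Omega(1)$. Since $\sum_{i=1}^{k} 3^{\,i-1} = (3^k-1)/2 = O(3^k)$, the claimed bound $O\!\left(3^k\,(T_{part}(n,n^{1/k}) + T_{dec}(n, 2c\cdot n^{1/k}\log n))\right)$ follows.

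The step I expect to be the main obstacle is pinning down the simulation factor $\sigma_i$ cleanly: one must argue that the recursive nesting of supernodes, each a radius-$1$ cluster of the previous supergraph, inflates the per-round cost by only a constant factor per level, and reconcile the resulting $3^{\,i-1}$ with the diameter recursion of Lemma~\ref{dec}. The remaining bookkeeping---identifying the per-level calls, invoking monotonicity, and summing the geometric series---is routine.
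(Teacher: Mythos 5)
Your proof is correct and follows essentially the same route as the paper's: the paper likewise charges each recursion level $i$ the per-level cost $T_{part}(n,n^{1/k}) + T_{dec}(n,2c\cdot n^{1/k}\log n) + O(1)$ multiplied by an $O(3^i)$ simulation overhead for level-$i$ supernodes, and then sums the geometric series to obtain the $O(3^k)$ factor. Your explicit recursion $\sigma_i \le 3\sigma_{i-1}$ is merely a more detailed derivation of the diameter bound the paper invokes directly (supernodes at level $i$ span subgraphs of diameter $O(3^i)$ in $G$, as in Lemma~\ref{dec}), so the two arguments coincide.
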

\begin{proof}
During the execution of Procedure Decompose the Procedure Dec-Small is executed $k$ times, and Procedure Partition is executed $k-1$ times. For $i = 1,2,...,k-1$, in recursion level $i$ both procedures are executed on supergraphs whose supernodes constitute subgraphs of diameter at most $3^i$ of the original graph. Thus, the number of rounds required in level $i$ is the product of the number of steps required to execute the procedure on the supergraph and the maximum diameter of supernodes. This running time is at most $(T_{part}(n,n^{1/k}) + T_{dec}(n,n^{1/k} \log n) + O(1)) \cdot 3^i$. The running time of the last recursion level $k$ in which the termination condition holds is $T_{dec}(n,2c \cdot n^{1/k} \log n) \cdot 3^k$. Therefore, the overall running time is \\
$ O(\sum_{i = 1}^k 3^i \cdot (T_{part}(n,n^{1/k}) + T_{dec}(2c \cdot n^{1/k} \log n))) = O(3^k \cdot (T_{part}(n,n^{1/k}) + T_{dec}(n,2c \cdot n^{1/k} \log n)))$.
\end{proof}
Procedure Dec-Small and Procedure Partition are provided and analyzed in Section \ref{sc:pr}. Next we state the main results obtained by plugging these procedures into Procedure Decompose. See Section \ref{sc:pr} for the proofs.

\begin{thm} \label{dlarge}
For any parameter $k, 1 \leq k \leq \log n$, Procedure Decompose computes a $(3^k,O(k \cdot n^{2/k} \cdot \log^2 n))$-network-decomposition along with the corresponding $O(k \cdot n^{2/k} \cdot \log^2 n)$-labeling function in time $O(3^k \cdot \log^* n)$, with high probability. Alternatively, one can also have the second parameter equal to $O(k \cdot n^{2/k} \log n)$ and the running time $O(3^k \cdot k)$.
\end{thm}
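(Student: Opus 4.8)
The plan is to assemble Theorem~\ref{dlarge} from the four lemmas already established together with the concrete performance bounds for Procedures Dec-Small and Partition described in this section and analyzed in Section~\ref{sc:pr}. The structural part is immediate. Lemma~\ref{dec} bounds the cluster diameter by $3^{k-1}-1$, and since $3^{k-1}-1 \le 3^k$ this is absorbed into the stated diameter $3^k$; Lemma~\ref{numoflabels} bounds the number of labels by $O(k \cdot n^{2/k}\log^2 n)$; and Lemma~\ref{lemmac} certifies that the function $label(\cdot)$ is a proper coloring of the supergraph induced by $S\cup L$, i.e. that $S\cup L$ is genuinely a network-decomposition with these parameters. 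Thus, \emph{conditioned on all auxiliary invocations succeeding}, the output is exactly a $(3^k,\, O(k \cdot n^{2/k}\log^2 n))$-network-decomposition together with the accompanying labeling function.

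Next I would discharge the conditioning by a union bound, as already sketched in the prose following Lemma~\ref{lemmac}. By Lemma~\ref{decomposelevl} the recursion has $k \le \log n$ levels, so Procedures Dec-Small and Partition are each invoked at most $k$ times; each invocation succeeds with probability $1-1/n^c$ for an arbitrarily large constant $c$. Hence the probability that some invocation fails is at most $2k/n^c \le 2\log n / n^c = o(1/n^{c-1})$, so with high probability all invocations succeed and Lemma~\ref{lemmac} applies. This establishes the correctness part of both statements.

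For the running time I would invoke Lemma~\ref{dectime}, which already folds the $O(3^i)$ per-level simulation overhead into a total bound of $O(3^k \cdot (T_{part}(n,n^{1/k}) + T_{dec}(n, 2c \cdot n^{1/k}\log n)))$. For the first (main) bound I substitute the primary variant of Procedure Dec-Small, which runs in $T_{dec}=O(\log^* n)$ time and produces an $O(\min\{d\cdot n^{\epsilon}, d^2\})$-coloring; with $d = O(n^{1/k}\log n)$ the $d^2$ term yields the $O(n^{2/k}\log^2 n)$ colors per level consistent with Lemma~\ref{numoflabels}. Since Procedure Partition, as described, runs in $O(1)$ rounds (sample $D$ with probability $1/q$, attach each remaining vertex of $B$ to a neighbor in $D$), the dominant term is $T_{dec}=O(\log^* n)$ and Lemma~\ref{dectime} gives total time $O(3^k \log^* n)$. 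The alternative bound is obtained by instantiating the auxiliary procedures with the second Dec-Small variant, trading the $\log^* n$ factor in the running time against a $\log n$-factor smaller palette, so that the second parameter becomes $O(k \cdot n^{2/k}\log n)$ and the running time becomes $O(3^k \cdot k)$; the precise tradeoff behind this variant is supplied by the analysis of Dec-Small and Partition in Section~\ref{sc:pr}.

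The genuinely delicate points — where the forward reference to Section~\ref{sc:pr} does the real work — are two. First, one must verify that feeding the degree bound $d = O(n^{1/k}\log n)$ produced by Partition into Dec-Small yields \emph{exactly} $O(n^{2/k}\log^2 n)$ colors per level, so that summing over the $k$ levels reproduces the label count of Lemma~\ref{numoflabels} with no stray $\log n$ factor; this rests on the guaranteed maximum degree of $\hat{G}(A)$. Second, one must confirm that $T_{part}(n,n^{1/k})$ does not dominate $T_{dec}$, since otherwise the clean $O(3^k\log^* n)$ bound would be lost. Both are settled by the analyses of the two auxiliary procedures in Section~\ref{sc:pr}, after which the theorem reduces to bookkeeping over the already-proved Lemmas~\ref{decomposelevl}--\ref{dectime}.
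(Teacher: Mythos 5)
Your derivation of the first bound is correct and is essentially the paper's own proof: condition on the success of all auxiliary invocations, apply Lemmas~\ref{dec}, \ref{numoflabels} and \ref{lemmac} for the structural claim, discharge the conditioning by a union bound over the at most $2k \le 2\log n$ invocations, and feed $T_{part}(n,n^{1/k}) = O(1)$ (Lemma~\ref{prptimecns}) and $T_{dec}(n, 2c\, n^{1/k}\log n) = O(\log^* n)$ (Lemma~\ref{dsmall}) into Lemma~\ref{dectime}.

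The gap is in the alternative bound. You attribute it to ``the second Dec-Small variant'', i.e.\ the variant that runs the $t$-round version of Linial's algorithm and produces an $O(d^2\log^{(t)} n)$-coloring in $O(t)$ time. That variant moves in the wrong direction: with $d = \Theta(n^{1/k}\log n)$ it uses $O(n^{2/k}\log^2 n\,\log^{(t)} n)$ colors per level --- a $\log^{(t)} n$ factor \emph{more} than the main bound, never a $\log n$ factor less --- and its per-level time is $O(t)$, which yields $O(3^k\cdot t)$, not $O(3^k\cdot k)$. No choice of $t$ repairs this, since $d^2$ alone is already $\Theta(n^{2/k}\log^2 n)$. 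The mechanism the paper actually uses is different: modify Procedure Dec-Small so that it \emph{always} invokes Procedure Random-Color (the $O(d\cdot n^{\epsilon})$-coloring computed in $O(1/\epsilon)$ rounds), adjust the offset $\Lambda$ to $(2c\, n^{1/k}\log n)\cdot n^{\epsilon}$, and then substitute $\epsilon = 1/k$. Each level then uses $O(d\cdot n^{1/k}) = O(n^{2/k}\log n)$ colors --- only one $\log n$ factor, because Random-Color's palette is linear rather than quadratic in $d$ --- and costs $O(1/\epsilon) = O(k)$ rounds, which via Lemma~\ref{dectime} is precisely where the $O(3^k\cdot k)$ running time comes from. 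Your proposal never mentions the substitution $\epsilon = 1/k$; without it (i.e., with $\epsilon$ a fixed constant) one only gets an $O(k\, n^{1/k+\epsilon}\log n)$ palette in $O(3^k/\epsilon)$ time, which is not the claimed bound.
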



It follows that, an $(O(1), n^{\delta})$-network-decomposition of an arbitrary $n$-vertex graph along with a proper $n^{\delta}$-labeling for it can be computed by a randomized algorithm, in $O(1)$ time, with high probability. See Section \ref{sc:refine}.


\subsection{Procedure Dec-Small and Procedure Partition} \label{sc:pr}
We start with the description of Procedure Dec-Small. This procedure accepts a graph $G$ with at most $n$ vertices and maximum degree at most $d$, and computes an $O(\min\{d \cdot n^{\epsilon},d^2\})$-coloring of $G$, where $\epsilon$ is a fixed arbitrarily small positive constant. In other words, if $d \leq n^{\epsilon}$ then an $O(d^2)$-coloring is computed, and otherwise an $O(d \cdot n^{\epsilon})$-coloring is computed. For computing an $O(d^2)$-coloring, Procedure Dec-Small employs the deterministic algorithm of Linial \cite{L92} that computes an $O(\Delta^2)$-coloring of graphs with maximum degree $\Delta$ within $O(\log^* n)$ time. For computing an $O(d \cdot n^{\epsilon})$-coloring, Procedure Dec-Small employs the randomized algorithm of Barenboim \cite{B12} that computes, with high probability, an $O(\Delta \cdot n^{\epsilon} )$-coloring in $O(1/\epsilon)$ time, for an arbitrarily small $\epsilon > 0$.  We henceforth refer to this algorithm as {\em Procedure Random-Color}. This completes the description of Procedure Dec-Small. Its pseudocode is provided below. 

For completeness, we provide a high-level description of the algorithm of Linial \cite{L92} and the algorithm of Barenboim \cite{B12}. The algorithm of Linial \cite{L92} starts with a legal $n$-coloring of the input graph obtained from the IDs of the vertices. It proceeds in phases, each of which reduces the number of colors while preserving the legality of the coloring. In each round the number of colors is reduced from $p$ to $O(\Delta^2 \log p)$, where $p$ is the number of colors in the beginning of a round. (Initially $p = n$.) In the last round the number of colors is reduced from $O(\Delta \cdot \mbox{polylog}(\Delta))$ to $O(\Delta^2)$. Each phase requires just a single round, and the overall running time of the algorithm of Linial is $O(\log^* n)$.
(It is actually $\log^* n + O(1)$, but this precision is immaterial for our purposes.)

Observe also that one can run Linial's algorithm for just $t$ rounds, for some positive integer parameter $t$, and obtain an $O(\Delta^2 \cdot \log^{(t)} n)$-coloring.
For a single phase of Linial's algorithm it employs $\Delta$-union free set systems from the paper by Erdos, Frankel and Furedi \cite{EFF85}. A family ${\cal F}$ of sets over a given ground-set $X$ is said to be $\Delta$-union-free if for every $\Delta + 1$ sets $S_0,S_1,...,S_{\Delta} \in {\cal F}$, it holds that $S_0 \nsubseteq \cup_{i = 1}^{\Delta} S_i$. Erdos et al. showed that for any positive integers $p$ and $\Delta$, $p \geq \Delta + 1$, there exists a $\Delta$-union-free family ${\cal F}$ of $p$ subsets over a ground-set $X$ of size $|X| = O(\Delta^2 \log n)$.

Let $\varphi$ be a proper $p$-coloring of $G$ in the beginning of a phase of Linial's algorithm. The algorithm associates a set $S_c$ from ${\cal F}$ with each color $c$ of $\varphi$. Every vertex $v$ that runs the algorithm computes a new color $c' \in S_{\varphi(v)} \setminus \cup_{u \in \Gamma(v)} S_{\varphi(u)}$. Such a color exists since ${\cal F}$ is a $\Delta$-union-free family, and $\varphi(u) \neq \varphi(v)$ for every $u \in \Gamma(v)$.) The vertex $v$ sets its new color $\varphi'(v)$ by $\varphi'(v) = c'$. Since $c' \in X$, $|X| = O(\Delta^2 \log n)$, it follows that $\varphi'(\cdot)$ is an $O(\Delta^2 \log n)$-coloring. Also, consider a pair of neighbors $v$ and $u$. Observe that $\varphi'(v) \in S_{\varphi(v)} \setminus S_{\varphi(u)}$, while $\varphi'(u) \in S_{\varphi(u)}$. Hence $\varphi'(v) \neq \varphi'(u)$, and thus $\varphi'$ is a proper coloring.
See \cite{L92} or \cite{BE13} Chapter 3.10 for more details.

The algorithm of Barenboim \cite{B12} (Procedure Random-Color) proceeds in phases as well, however, each phase consists of a randomized procedure. In this procedure each vertex either succeeds in selecting a final color, or fails and continues to the next phase. Each vertex selects a color from the range $[\Delta \cdot n^{\epsilon}]$ uniformly at random, where $\epsilon > 0$ is an arbitrarily small constant. A vertex succeeds if and only if it selects a color that has not been selected by any of its neighbors (either in the current round, or as a final color in a previous round). Otherwise, it fails, discards its color, and continues to the next phase. Hence, the probability that a vertex $v$ fails
 to select a color that is different from the colors of its neighbors is at most $1/n^{\epsilon}$. 
If we run this procedure for $\left \lceil c/ \epsilon \right \rceil$ rounds, for a sufficiently large constant $c$, then the probability that a given vertex $v$ fails on all these rounds is at most $1/n^c$. Hence, by union bound, after $\left \lceil c/ \epsilon \right \rceil$ rounds all vertices succeed with probability at least $1 - 1/n^{c-1}$, i.e., with high probability.
\begin{algorithm}[H]
\caption{Procedure Dec-Small($G, n, d, \epsilon, t$)}
\label{proced:dec-small}

\begin{algorithmic}[1] 

\IF  {$d \leq n^{\epsilon}$}
  
  \STATE compute an $O(d^2)$-coloring of $G$ using the algorithm of Linial \cite{L92}
	
	/* alternatively, one can compute here an $O(d^2 \log^{(t)} n)$-coloring in $O(t)$ time */
  
\ELSE

  \STATE compute an $O(d \cdot n^{\epsilon})$-coloring of $G$ using Procedure Random-Color
  
\ENDIF

\end{algorithmic}
\end{algorithm}

Observe that if $d^2 \leq d \cdot n^{\epsilon}$ then $d \leq n^{\epsilon}$ and an $O(d^2)$-coloring is computed. Otherwise, $d^2 > d \cdot n^{\epsilon}$, and an $O(d \cdot n^{\epsilon})$-coloring is computed. Therefore, the number of colors is $O(\min\{d^2, d \cdot n^{\epsilon}\})$.
Recall also that the running time of the algorithm of Linial \cite{L92} is $O(\log^* n)$, and the running time of Procedure Random-Color is $O(1)$. Thus we obtain the following lemma.
\begin{lem} \label{dsmall}
Procedure Dec-Small invoked on a graph $G$ with maximum degree at most $d$ computes, with high probability, an $O(\min\{d^2, d \cdot n^{\epsilon}\})$-coloring, which is a $(0,O(\min\{d^2, d \cdot n^{\epsilon}\}))$-network-decomposition. If $d \leq n^{\epsilon}$, the running time of Procedure Dec-Small is $O(\log^* n)$. Otherwise, it is $O(1)$.
\end{lem}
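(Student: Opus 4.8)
The plan is to verify the lemma by a straightforward case analysis that mirrors the branching of Algorithm \ref{proced:dec-small}, assembling facts already established above. There are only two branches, according to whether $d \leq n^{\epsilon}$ or $d > n^{\epsilon}$, so I would treat each branch separately and then combine the color-count, running-time, and success-probability bounds.

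First I would establish the number of colors. In the branch $d \leq n^{\epsilon}$ the procedure runs Linial's algorithm \cite{L92} and produces an $O(d^2)$-coloring; in the branch $d > n^{\epsilon}$ it runs Procedure Random-Color and produces an $O(d \cdot n^{\epsilon})$-coloring. The key point, already observed in the paragraph preceding the lemma, is that the branch condition aligns exactly with which of $d^2$ and $d \cdot n^{\epsilon}$ is smaller: $d^2 \leq d \cdot n^{\epsilon}$ holds if and only if $d \leq n^{\epsilon}$. Hence in every case the number of colors produced is $O(\min\{d^2, d \cdot n^{\epsilon}\})$, as claimed. To convert this into the network-decomposition statement I would invoke the observation from Section \ref{sc:decompose} that any proper $p$-coloring is a $(0,p)$-network-decomposition: each vertex forms its own singleton cluster, so the strong diameter is $0$, and the coloring itself serves as a proper labeling of the contracted supergraph. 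Thus a $p$-coloring with $p = O(\min\{d^2, d \cdot n^{\epsilon}\})$ is precisely a $(0, O(\min\{d^2, d \cdot n^{\epsilon}\}))$-network-decomposition.

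Next I would handle the running time and the success probability together. In the branch $d \leq n^{\epsilon}$ the procedure is Linial's deterministic algorithm, which runs in $O(\log^* n)$ time; since it is deterministic it always produces a proper coloring, so it succeeds trivially. In the branch $d > n^{\epsilon}$ the procedure is Procedure Random-Color, which runs in $O(1/\epsilon)$ time; as $\epsilon$ is a fixed positive constant this is $O(1)$. For its correctness I would recall the analysis of Procedure Random-Color given above: after $\lceil c/\epsilon \rceil$ rounds every fixed vertex fails with probability at most $1/n^{c}$, so by a union bound over the at most $n$ vertices the entire coloring is proper with probability at least $1 - 1/n^{c-1}$, i.e. with high probability. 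Combining the two branches yields the stated running times, namely $O(\log^* n)$ when $d \leq n^{\epsilon}$ and $O(1)$ otherwise, and the overall high-probability guarantee.

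Since every ingredient is already proved above, there is no genuine obstacle; the only points requiring care are purely bookkeeping. I would make sure the case split on the branch condition matches the $\min$ exactly (the equivalence $d \leq n^{\epsilon} \Leftrightarrow d^2 \leq d \cdot n^{\epsilon}$ is the crux), and I would be careful that the high-probability claim is attached only to the randomized branch, since the deterministic branch is error-free. With these observations in place the lemma follows immediately.
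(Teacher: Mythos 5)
Your proposal is correct and follows essentially the same route as the paper: the paper's own justification is exactly the observation that the branch condition $d \leq n^{\epsilon}$ coincides with $d^2 \leq d \cdot n^{\epsilon}$, the fact that a proper $p$-coloring is a $(0,p)$-network-decomposition, and the known time bounds of $O(\log^* n)$ for Linial's algorithm and $O(1/\epsilon) = O(1)$ for Procedure Random-Color. Your added care in attaching the high-probability guarantee only to the randomized branch is a harmless refinement of what the paper states.
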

Another variant of the Procedure Dec-Small checks if $d \leq \frac{n^{\epsilon}}{\log^{(t}) n}$, and if it is the case it invokes the $t$-round version of Linial's algorithm. Otherwise it invokes line 4 of Algorithm \ref{proced:dec-small} (i.e., the algorithm from \cite{B12}).
This modified procedure always requires constant time. (Assuming that $t = O(1)$.) If $d \leq \frac{n^{\epsilon}}{\log^{(t)} n}$, it computes an $O(d^2 \log^{(t)}n)$-coloring.
Otherwise (in this case $d^2 \log^{(t)} n > d \cdot n^{\epsilon}$) it computes a $(d \cdot n^{\epsilon})$-coloring. To summarize:
\begin{lem}
A modified variant of Procedure Dec-Small computes an $O(\min \{d^2 \log^{(t)} n, d \cdot n^{\epsilon} \})$-coloring in $O(t)$ time, with high probability. In particular, the running time is constant if $t = O(1)$.
\end{lem}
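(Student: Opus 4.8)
The plan is to prove both the color bound and the time bound by a direct case analysis on whether $d \le n^{\epsilon}/\log^{(t)} n$, mirroring the branch taken by the modified procedure. The two ingredients are already established in the excerpt: running Linial's algorithm \cite{L92} for $t$ rounds deterministically produces an $O(d^2 \log^{(t)} n)$-coloring in $O(t)$ time, and Procedure Random-Color \cite{B12} produces, with high probability, an $O(d\cdot n^{\epsilon})$-coloring in $O(1/\epsilon) = O(1)$ time. The heart of the argument is to observe that the threshold $n^{\epsilon}/\log^{(t)} n$ is precisely the crossover point of the two color bounds, so that in each branch the bound actually achieved coincides with the pointwise minimum.

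First I would handle the case $d \le n^{\epsilon}/\log^{(t)} n$. Here the procedure runs the $t$-round version of Linial's algorithm, which is deterministic and hence always succeeds; it uses $O(d^2 \log^{(t)} n)$ colors and runs in $O(t)$ time. The hypothesis rearranges to $d \log^{(t)} n \le n^{\epsilon}$, and multiplying by $d$ gives $d^2 \log^{(t)} n \le d \cdot n^{\epsilon}$; thus $\min\{d^2 \log^{(t)} n, d\cdot n^{\epsilon}\} = d^2 \log^{(t)} n$ and the color count matches the claimed bound.

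Next I would handle the complementary case $d > n^{\epsilon}/\log^{(t)} n$. Now the procedure invokes Procedure Random-Color, which by \cite{B12} produces an $O(d \cdot n^{\epsilon})$-coloring in $O(1/\epsilon)$ rounds with high probability; since $\epsilon$ is a fixed constant this running time is $O(1)$. The hypothesis rearranges to $d\cdot n^{\epsilon} < d^2 \log^{(t)} n$, so in this regime $\min\{d^2 \log^{(t)} n, d\cdot n^{\epsilon}\} = d\cdot n^{\epsilon}$, again matching the achieved bound. Combining the two cases, the procedure always outputs an $O(\min\{d^2 \log^{(t)} n, d\cdot n^{\epsilon}\})$-coloring; its running time is $O(t)$ in the first branch and $O(1)$ in the second, hence $O(t)$ overall (and constant when $t = O(1)$), and it succeeds with high probability since the only randomized branch inherits the high-probability guarantee of Procedure Random-Color.

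There is no serious obstacle here; the only point requiring care is the bookkeeping of the threshold, namely confirming that the boundary $n^{\epsilon}/\log^{(t)} n$ splits the two regimes cleanly enough that the procedure attains the pointwise minimum of the two color bounds rather than merely one of them.
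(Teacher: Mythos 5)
Your proof is correct and follows exactly the paper's own argument: the paper establishes this lemma via the same case analysis on the threshold $d \leq n^{\epsilon}/\log^{(t)} n$, noting that the $t$-round Linial branch yields $O(d^2 \log^{(t)} n)$ colors in $O(t)$ deterministic time while in the complementary regime $d^2 \log^{(t)} n > d \cdot n^{\epsilon}$ and Procedure Random-Color yields an $O(d \cdot n^{\epsilon})$-coloring in $O(1)$ time with high probability. Your explicit verification that the threshold is the crossover point of the two color bounds is the same bookkeeping the paper performs, only spelled out in slightly more detail.
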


Next, we describe Procedure Partition. Procedure Partition accepts as input a graph $G=(V,E)$ and a positive parameter $q$, and partitions $V$ into two subsets $A$ and $B$, such that $G(A)$ and $G(B)$ have the following properties. The subgraph $G(A)$ has maximum degree $O(q \log n)$. The subgraph $G(B)$ consists of $O(|V|/ q)$ clusters of diameter at most $2$.
The procedure contracts the clusters of $B$ into supernodes, which form the supergraph ${\cal G}({\cal B})$.  The clusters in $B$ are obtained by computing a dominating set $D$ of $B$ of size $O(|V|/q)$. Each vertex in $D$ becomes a leader of a distinct cluster. Each vertex in $B \setminus D$ selects an arbitrary neighbor in $D$ and joins the cluster of this neighbor. Consequently, in all clusters all vertices are at distance at most $1$ from the leader of their cluster. Hence all clusters have diameter at most $2$. 

Initially, each vertex of $V$ joins the set $D$ with probability $1/q$. Then the set $B$ is formed by the vertices of $D$ and their neighbors. Finally, the set $A$ is formed by the remaining vertices, i.e., $A = V \setminus B$. In this stage the procedure returns the set of nodes $A$ and the set of supernodes ${\cal B}$ which is obtained from $B$, and terminates. This completes the description of the procedure. Its pseudocode is provided below.
\begin{algorithm}[H]
\caption{Procedure Partition($G, q$)}
\label{proced:partitin}
An algorithm for each vertex $v \in V$.
\begin{algorithmic}[1] 

\STATE $v$ joins $D$ with probability $1/q$ and informs its neighbors

\IF {$v$ has joined $D$ or a neighbor of $v$ has joined $D$}

     \STATE $v$ joins $B$
     
\ELSE
   
      \STATE $v$ joins $A$
      
\ENDIF

\IF { $v \in D$}

      \STATE $v$ initializes a singleton cluster $C_v$, becomes the leader of $C_v$, and sends $Id(v)$ to all neighbors in $B$

\ENDIF

\IF {$v \in B$ and $v$ receives at least one message from a leader of a cluster in $D$}

      \STATE $v$ joins a cluster of an arbitrary neighbor in $D$
      
\ENDIF

\STATE ${\cal B} :=$ the set of supernodes obtained by contracting all clusters $C_v \subseteq B$

\STATE return $(A,{\cal B})$

\end{algorithmic}
\end{algorithm}
Note that the sets $A$ and ${\cal B}$ are returned in a distributed manner. In other words, each vertex knows whether it belongs to $A$ or to $B$. If it belongs to $B$, then it knows the identity of the leader of its cluster. The leaders of the clusters represent the supernodes formed by the clusters. Thus, when we say that a supernode performs some action, it is actually performed by the leader of the cluster that forms the supernode. (For nested supernodes the operations are performed by the leaders of the innermost clusters, which are vertices in the original input graph $G$.)

In the next lemmas we prove that Algorithm \ref{proced:partitin} computes a partition with the properties described above.
\begin{lem}
Suppose that Procedure Partition is invoked on a graph $G = (V,E)$ and a positive parameter $q$. Then the subgraph $G(A)$ induced by the set $A$ which is returned by the procedure has maximum degree $O(q \cdot \log n)$, with high probability.
\end{lem}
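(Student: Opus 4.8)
The plan is to show that, with high probability, \emph{every} vertex that survives into $A$ already has small degree in the original graph $G$; since $G(A)$ is an induced subgraph of $G$, this immediately bounds the maximum degree of $G(A)$ by the same quantity. The crucial observation is that a vertex $v$ joins $A$ only if neither $v$ nor any of its neighbors joins the set $D$. Because each vertex joins $D$ independently with probability $1/q$ (line 1 of Algorithm~\ref{proced:partitin}), the event $v \in A$ involves the $\deg_G(v)+1$ independent ``failures to join $D$'' of $v$ and its neighbors, so that $\Pr[v \in A] = (1-1/q)^{\deg_G(v)+1}$. Thus a vertex with many neighbors is very unlikely to survive.

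Concretely, first I would fix a constant $c'$ (to be chosen large, in terms of the high-probability exponent $c$) and consider an arbitrary vertex $v$ with $\deg_G(v) > c' \cdot q \log n$. Using the independence across the relevant vertices together with the elementary estimate $1 - 1/q \le e^{-1/q}$, I would bound
\[
\Pr[v \in A] \;\le\; (1-1/q)^{\deg_G(v)} \;\le\; e^{-\deg_G(v)/q} \;\le\; e^{-c' \log n} \;=\; n^{-c'}.
\]

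Next, taking a union bound over all (at most $n$) vertices of $G$, the probability that some vertex of degree exceeding $c' q \log n$ lies in $A$ is at most $n \cdot n^{-c'} = n^{-(c'-1)}$; choosing $c' = c+1$ makes this at most $n^{-c}$. Hence, with probability at least $1 - n^{-c}$, every vertex of $A$ satisfies $\deg_G(v) \le c' q \log n = O(q \log n)$. Since the number of neighbors that a vertex of $A$ can have inside $A$ is at most its total degree in $G$, the maximum degree of the induced subgraph $G(A)$ is $O(q \log n)$, as claimed.

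There is essentially no deep obstacle here: the argument is a single concentration-plus-union-bound step. The only point requiring (minor) care is the bookkeeping of constants, i.e.\ making sure the exponent $c'$ is chosen large enough relative to the target high-probability constant $c$ so that the union bound over the $n$ vertices still leaves a polynomially small failure probability. One should also note that the bound tacitly uses $q \ge 1$ (so that $1/q \le 1$ and the estimate $1-1/q \le e^{-1/q}$ applies), which holds for the intended choices $q = n^{1/k}$ with $1 \le k \le \log n$.
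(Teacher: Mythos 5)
Your proposal is correct and follows essentially the same argument as the paper: bound the probability that a fixed vertex of degree at least $c \cdot q \log n$ has no neighbor in $D$ by $(1-1/q)^{c \cdot q \log n} \leq n^{-c}$, union bound over all $n$ vertices, and conclude that with high probability every such high-degree vertex lands in $B$, so $A$ contains only vertices of degree $O(q \log n)$. The only (harmless) cosmetic difference is the bookkeeping of whether one writes $\ln n$ or $\log n$ in the degree threshold, which affects the constant but not the result.
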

\begin{proof}
Consider a vertex $v \in V$ such that $v$ has at least $c \cdot q \cdot \ln n$ neighbors in $G$, for a sufficiently large constant $c$. Denote $\delta = \deg_G(v)$. Let $y_1,y_2,...,y_{\delta}$ denote the neighbors of $v$ in $G$, and let $y_0 = v$. 
The probability that none of these neighbors join $D$ is
$$Pr(y_1 \notin D,..., y_{\delta} \notin D) = \Pi_{i = 1}^{\delta} \Pr(y_i \notin D) = \Pi_{i = 1}^{\delta} (1 - 1/q) = (1 - 1/q)^{\delta} \leq (1 - 1/q)^{c \cdot q \cdot \ln n} \leq 1/n^c.$$
Hence, by union bound, the probability that at least one vertex $v$ with at least $c \cdot q \cdot \ln n$ neighbors does not have a neighbor in $D$ is at most $1/n^{c - 1}$. Hence with probability at least $1 - 1 /n^{c - 1}$, all high-degree vertices (vertices with $\deg_G(v) \geq c \cdot q \cdot \ln n$) end up in $B$. Hence, with high probability, the maximum degree of a vertex in $A = V \setminus B$ is $O(q \log n)$.
\end{proof}
The next lemma analyzes the number of supernodes and their diameters.

\begin{lem} \label{partitnscnd}
Suppose that Procedure Partition is invoked on a graph $G = (V,E)$ and a parameter $q  < \frac{|V|}{2c \cdot \log n}$, for some constant $c > 1$. Then the set ${\cal B}$ returned by the procedure has the following properties. With high probability, ${\cal B}$ consists of $O(|V|/q)$ supernodes. All supernodes of ${\cal B}$ are clusters of diameter at most $2$ in $G$.
\end{lem}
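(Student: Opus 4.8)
The plan is to establish the two asserted properties separately, treating the (essentially structural) diameter bound first and then the (probabilistic) bound on the number of supernodes.

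For the diameter bound I would argue directly from the construction in Algorithm \ref{proced:partitin}, with no probability involved. Every supernode of ${\cal B}$ is the contraction of a cluster $C_v$ whose leader $v$ lies in the dominating set $D$ (line 8). By lines 7--11, each vertex $u$ that joins $C_v$ is a neighbor of the leader $v$ in $G$, so $\mbox{dist}_G(u,v) \le 1$ for every $u \in C_v$. Consequently, for any two vertices $u, u' \in C_v$ the triangle inequality gives $\mbox{dist}_G(u,u') \le \mbox{dist}_G(u,v) + \mbox{dist}_G(v,u') \le 2$, so every cluster, and hence every supernode, has diameter at most $2$ in $G$.

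For the count, the key observation is that the number of supernodes equals $|D|$. Indeed $D \subseteq B$, since a vertex that joins $D$ satisfies the condition on line 2 and therefore joins $B$; moreover each vertex of $D$ initializes its own distinct cluster on line 8, while each vertex of $B \setminus D$ merely joins the cluster of some $D$-neighbor on line 11. Thus the clusters are in bijection with the leaders in $D$, giving $|{\cal B}| = |D|$. Now $|D| = \sum_{v \in V} X_v$, where $X_v$ is the indicator of the independent event $\{v \in D\}$, each occurring with probability $1/q$; hence $\Expect[|D|] = |V|/q =: \mu$.

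It remains to show $|D| = O(\mu)$ with high probability, and this is exactly where the hypothesis $q < |V|/(2c\log n)$ enters: it guarantees $\mu = |V|/q > 2c\log n$, i.e. the expectation is $\Omega(\log n)$ with a large constant. Applying a standard multiplicative Chernoff bound to the independent indicators, $\Pr[|D| \ge 2\mu] \le e^{-\mu/3}$, and substituting $\mu > 2c\log n$ yields a failure probability at most $n^{-\Omega(c)}$, which is polynomially small once $c$ is taken to be a sufficiently large constant. Therefore, with high probability, $|{\cal B}| = |D| \le 2\mu = O(|V|/q)$, as claimed. I expect the only genuine subtlety to be verifying that $\mu$ is large enough for the concentration bound to deliver high probability — precisely what the constraint on $q$ ensures — since both the diameter bound and the identity $|{\cal B}| = |D|$ follow immediately from the pseudocode.
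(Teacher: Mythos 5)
Your proposal is correct and follows essentially the same route as the paper's proof: the same structural argument that every cluster member is at distance at most $1$ from its leader (hence diameter at most $2$), the same identification $|{\cal B}| = |D|$, and the same application of a multiplicative Chernoff upper-tail bound to $|D|$ with expectation $|V|/q > 2c\log n$ guaranteed by the hypothesis on $q$. The only cosmetic difference is the form of the Chernoff bound used ($e^{-\mu/3}$ versus the paper's $(e/4)^{\mu}$), which is immaterial.
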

\begin{proof}
Recall that the set ${\cal B}$ is created by contracting the clusters of $B$ into supernodes. First, we prove that all clusters of $B$ have diameter at most $2$. Let $C$ be a cluster of $B$. Let $u,v \in C$ be any pair of vertices in the cluster. Then either one of these vertices is the leader of the cluster and $\mbox{dist}(u,v) = 1$ or both $u$ and $v$ are connected to the same leader, and so $\mbox{dist}(u,v) \leq 2$. (Since $u$ and $v$ belong to $B$ they must have a leader neighbor, unless they are leaders themselves. Since $u$ and $v$ belong to the same cluster, and there is exactly one leader in each cluster, if $u$ and $v$ are not the leaders, they are connected to the same leader.) Next, we prove that ${\cal B}$ consists of $O(|V|/q)$ supernodes. Note that the number of supernodes in ${\cal B}$ is equal to the number of vertices in $D$, since each vertex in $D$ becomes a leader of a cluster (i.e., of a supernode). Let $X$ denote a random variable that counts the number of vertices in $D$. Since each vertex in $V$ joins $D$ with probability $1/q$ independently of other vertices, it holds that $\Expect(X) = |V|/q$. By Chernoff bound for upper tails (see, e.g., \cite{MU05}, Chapter 4),\\
$ Pr[X > 2 |V|/q] = Pr[X > 2 \Expect(X)] \leq (e/4)^{\Expect(X)} = (e/4)^{|V|/q} \leq (e/4)^{2c \cdot \log n} \leq 1/n^c.$
\end{proof}
Finally, note that each line of Procedure Partition is either performed locally, or involves  sending messages to neighbors. The latter requires one time unit. Therefore, the running time of Procedure Partition is $O(1)$.
\begin{lem} \label{prptimecns}
Procedure Partition requires $O(1)$ time.
\end{lem}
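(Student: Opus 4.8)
The plan is to establish the bound by direct inspection of the pseudocode of Procedure Partition (Algorithm \ref{proced:partitin}), showing that it consists of a constant number of steps, each of which is either a free local computation or a single round of communication with neighbors. The key structural observation is that the procedure contains \emph{no} loop whose iteration count depends on $n$, $q$, or $|V|$: every line is executed once, so it suffices to account for each line individually. Recall from the model description that local computation is free and that any message sent over an edge is guaranteed to arrive by the beginning of the next round, so a single round of neighbor-to-neighbor communication costs exactly one time unit.

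Walking through the lines, I would argue as follows. Line~1 has each vertex flip a single biased coin with success probability $1/q$ (local, hence free) and then announce to each neighbor whether it joined $D$; by the synchronous-model assumption this delivery costs one round. Lines~2--6 are purely local: having received its neighbors' announcements, each $v$ decides locally whether it belongs to $A$ or to $B$, with no further communication. Lines~7--9 then have each leader (each $v \in D$) send its identity to its neighbors in $B$, costing one additional round. Lines~10--12 are again local, since each non-leader vertex of $B$ inspects the already-delivered messages and selects an arbitrary neighbor in $D$ to follow. Finally, line~13 forms the supernode set ${\cal B}$ by contracting clusters; because each vertex already knows the identity of its cluster leader from the message received in line~8, this is a bookkeeping step requiring no communication, and line~14 returns the result locally.

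Summing up, the procedure uses exactly two rounds of message passing interspersed with free local computation, for a total of $O(1)$ rounds. The main (and essentially only) point requiring care is to confirm that none of the lines conceals a dependence on the input size—in particular that the dominating-set construction here is a \emph{single} probabilistic selection step rather than an iterative refinement—which a careful reading of Algorithm \ref{proced:partitin} rules out. Hence the running time of Procedure Partition is $O(1)$, as claimed.
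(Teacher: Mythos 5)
Your proof is correct and follows essentially the same approach as the paper's: the paper simply observes that each line of Procedure Partition is either executed locally (free in the model) or involves a single round of message passing to neighbors, hence the total time is $O(1)$. Your line-by-line walkthrough is a more detailed version of exactly this observation, and your tally of two communication rounds (lines 1 and 7--9) is accurate.
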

Combining Lemmas \ref {dec} - \ref{dectime} with Lemmas \ref{dsmall} - \ref{partitnscnd} imply the following results.
\begin{thm} 
For any parameter $k, 1 \leq k \leq \log n$, Procedure Decompose computes a $(3^k,O(k \cdot n^{2/k} \cdot \log^2 n))$-network-decomposition along with the corresponding $O(k \cdot n^{2/k} \cdot \log^2 n)$-labeling function in time $O(3^k \cdot \log^* n)$, with high probability.
\end{thm}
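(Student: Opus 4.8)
The plan is to assemble this statement from the lemmas already established, since Procedure Decompose is simply a wrapper around Procedure Partition and Procedure Dec-Small, and the substantive structural work has been carried out in Lemmas \ref{dec}--\ref{dectime}. I would organize the argument into three parts -- correctness of the decomposition, the bound on the number of labels, and the running time -- and then wrap all three inside a single high-probability statement.

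First, for correctness, I would condition on the event that all invocations of the auxiliary procedures succeed. Under this event Lemma \ref{lemmac} already yields that Procedure Decompose returns a $(3^{k-1}-1, O(k \cdot n^{2/k} \cdot \log^2 n))$-network-decomposition together with a proper labeling. Since $3^{k-1}-1 \leq 3^k$, this is in particular a valid $(3^k, O(k \cdot n^{2/k} \cdot \log^2 n))$-network-decomposition, which matches the diameter parameter claimed in the theorem. The bound on the number of colors used by the labeling function is precisely the content of Lemma \ref{numoflabels}.

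Second, for the running time, I would invoke Lemma \ref{dectime}, which reduces the total cost to $O(3^k \cdot (T_{part}(n, n^{1/k}) + T_{dec}(n, 2c \cdot n^{1/k}\log n)))$, and then substitute the concrete costs of the two auxiliary procedures. By Lemma \ref{prptimecns} we have $T_{part}(n, n^{1/k}) = O(1)$, and by Lemma \ref{dsmall} the running time of Procedure Dec-Small is $O(\log^* n)$ in the worst case (it is $O(\log^* n)$ when the relevant degree bound is at most $n^{\epsilon}$ and $O(1)$ otherwise). Plugging these bounds in gives $O(3^k \cdot (O(1) + O(\log^* n))) = O(3^k \cdot \log^* n)$, as required.

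Finally, for the probabilistic guarantee, I would note that each of Procedure Dec-Small and Procedure Partition succeeds with probability at least $1 - 1/n^c$, and that by Lemma \ref{decomposelevl} the recursion has exactly $k \leq \log n$ levels, so each auxiliary procedure is invoked at most $\log n$ times. A union bound over the at most $2\log n$ invocations then shows that all of them succeed simultaneously with probability at least $1 - 2\log n / n^c$, which is $1 - 1/n^{c'}$ for a slightly smaller but still arbitrarily large constant $c'$; conditioned on this event the two deterministic conclusions above hold, so the theorem follows with high probability. I do not anticipate a genuine obstacle here, as everything reduces to bookkeeping over the already-proven lemmas. The only points requiring care are matching the looser diameter bound $3^{k-1}-1$ against the stated $3^k$, and correctly identifying that the worst-case cost of Procedure Dec-Small is the $O(\log^* n)$ Linial branch rather than the constant-time randomized branch.
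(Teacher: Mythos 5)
Your proposal is correct and follows essentially the same route as the paper: the paper itself derives this theorem by combining Lemmas \ref{dec}--\ref{dectime} with Lemmas \ref{dsmall}--\ref{partitnscnd}, together with the same union-bound argument over the at most $2\log n$ invocations of the auxiliary procedures that you give. Your two points of care (relaxing the diameter bound $3^{k-1}-1$ to $3^k$, and taking the $O(\log^* n)$ Linial branch as the worst case for Procedure Dec-Small) are exactly the right bookkeeping steps and match the paper's reasoning.
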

Consider now a variant of Algorithm \ref{proced:decompose} (Procedure Decompose) in which in Procedure Dec-Small we always invoke Procedure Random-Color with a parameter $\epsilon$. (As opposed to Algorithm \ref{proced:dec-small} where we do it only when $d \leq n^{\epsilon}$.) Also, in Algorithm \ref{proced:decompose} we now set $\Lambda \leftarrow (2 \cdot c \cdot n^{1/k} \cdot \log n) \cdot n^{\epsilon}$.

Then, by the previous argument, this modified variant of Procedure Decompose computes a $(3^k, O(k \cdot n^{1/k + \epsilon} \log n))$-network-decomposition along with a legal $O(k \cdot n^{1/k + \epsilon} \log n)$-labeling function in time $O(3^k/ \epsilon)$. By substituting $\epsilon = 1/k$ we conclude:
\begin{thm} 
A $(3^k, O(k \cdot n^{2/k} \log n))$-network-decomposition along with the appropriate proper (with respect to this decomposition) $O(k \cdot n^{2/k} \log n)$-labeling can be computed in $O(k \cdot 3^k)$ time, with high probability.
\end{thm}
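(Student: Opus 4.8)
The plan is to obtain this theorem as a corollary of the modified Procedure Decompose described immediately above, by re-running the three analyses (recursion depth / diameter, label count, and running time) with Procedure Random-Color used \emph{unconditionally} inside Procedure Dec-Small, and then specializing $\epsilon = 1/k$. First I would fix the modification precisely: in every invocation of Dec-Small we call Random-Color regardless of whether $d \leq n^{\epsilon}$, so that on a graph of maximum degree $d$ we obtain an $O(d \cdot n^{\epsilon})$-coloring in $O(1/\epsilon)$ time, with high probability. The crucial observation is that which vertices are placed in $A$ versus the contracted set ${\cal B}$, and the diameters of the resulting clusters, are determined entirely by Procedure Partition and are untouched by this change. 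Hence Lemma~\ref{decomposelevl} and Lemma~\ref{dec} carry over verbatim: there are still $k$ recursion levels, and every cluster still has diameter at most $3^{k-1}-1 = O(3^k)$.

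Next I would redo the per-level label count. With $q = n^{1/k}$, Procedure Partition guarantees that $\hat{G}(A)$ has maximum degree $O(n^{1/k}\log n)$, and in the termination case (line~2) the current supergraph has at most $s-1 = O(n^{1/k}\log n)$ vertices, so its maximum degree is $O(n^{1/k}\log n)$ as well. Substituting $d = O(n^{1/k}\log n)$ into the Random-Color guarantee yields $O(d \cdot n^{\epsilon}) = O(n^{1/k+\epsilon}\log n)$ colors at each level, in place of the $O(n^{2/k}\log^2 n)$ of the original procedure. I would therefore set $\Lambda = (2c \cdot n^{1/k}\log n)\cdot n^{\epsilon}$, which strictly exceeds the number of colors consumed on any single level; reusing the inductive separation argument of Lemma~\ref{lemmac} then shows that the palettes of distinct recursion levels stay disjoint and the labeling is proper. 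Summing over the $k$ levels exactly as in Lemma~\ref{numoflabels} gives a total of $O(k \cdot n^{1/k+\epsilon}\log n)$ labels.

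For the running time I would invoke Lemma~\ref{dectime}, which bounds it by $O(3^k\,(T_{part}(n,n^{1/k}) + T_{dec}(n,2c \cdot n^{1/k}\log n)))$. Here $T_{part} = O(1)$ by Lemma~\ref{prptimecns}, and since Dec-Small now always runs Random-Color we have $T_{dec} = O(1/\epsilon)$, so the total is $O(3^k/\epsilon)$. High probability follows, as in the earlier argument, from a union bound over the $O(k)$ invocations of the auxiliary procedures. This establishes the intermediate claim that the modified variant computes a $(3^k, O(k \cdot n^{1/k+\epsilon}\log n))$-network-decomposition with a matching $O(k \cdot n^{1/k+\epsilon}\log n)$-labeling in time $O(3^k/\epsilon)$. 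Finally I would set $\epsilon = 1/k$, which turns $n^{1/k+\epsilon}$ into $n^{2/k}$ and $O(3^k/\epsilon)$ into $O(k \cdot 3^k)$, yielding exactly the stated bounds.

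The step requiring the most care is verifying that the per-level color count is indeed $O(n^{1/k+\epsilon}\log n)$ at \emph{both} places where Dec-Small is called — on $\hat{G}(A)$ at line~5 and at the termination on line~2 — because the new coloring bound is linear rather than quadratic in $d$, and one must confirm that $d = O(n^{1/k}\log n)$ uniformly so that the single choice of $\Lambda$ separates all recursion levels. Everything else is a mechanical re-run of the inductions already proved, since the lemmas governing recursion depth and cluster diameter are independent of which coloring subroutine is plugged into Dec-Small.
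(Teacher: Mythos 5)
Your proposal is correct and follows essentially the same route as the paper: the paper likewise obtains this theorem by modifying Procedure Dec-Small to invoke Procedure Random-Color unconditionally, resetting $\Lambda \leftarrow (2c \cdot n^{1/k}\log n)\cdot n^{\epsilon}$, observing that the earlier analysis then yields a $(3^k, O(k \cdot n^{1/k+\epsilon}\log n))$-network-decomposition with matching labeling in $O(3^k/\epsilon)$ time, and substituting $\epsilon = 1/k$. Your write-up simply makes explicit the ``by the previous argument'' step that the paper leaves implicit (noting that the recursion-depth and diameter lemmas are untouched since Procedure Partition is unchanged, and rechecking the degree bound $d = O(n^{1/k}\log n)$ at both call sites of Dec-Small), which is a faithful and careful elaboration rather than a different proof.
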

 
In particular, by setting $k$ to be an arbitrarily large constant we obtain an $(O(1), n^{\delta})$-network-decomposition along with a proper $n^{\delta}$-coloring for it in randomized constant time, for an arbitrarily small constant $\delta > 0$.
\begin{col}
An $(O(1), n^{\delta})$-network-decomposition of an arbitrary $n$-vertex graph along with a proper $n^{\delta}$-labeling for it can be computed by a randomized algorithm, in $O(1)$ time, with high probability.
\end{col}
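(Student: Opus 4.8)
The plan is to obtain this corollary as an immediate consequence of the immediately preceding theorem, which asserts that for any integer parameter $k$ with $1 \le k \le \log n$ one can compute a $(3^k, O(k \cdot n^{2/k} \log n))$-network-decomposition together with a proper $O(k \cdot n^{2/k} \log n)$-labeling in time $O(k \cdot 3^k)$, with high probability. The key observation is that the first parameter (the diameter), the running time, and the label count all have a completely transparent dependence on $k$: once $k$ is held fixed as a constant, the first two become $O(1)$ automatically, so the only quantity that must be controlled is the number of labels $O(k \cdot n^{2/k} \log n)$.

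First I would fix the constant $\delta > 0$ and set $k := \lceil 4/\delta \rceil$. Since $\delta$ is an arbitrarily small positive \emph{constant}, $k$ is a constant as well, and (for $n$ large enough) it satisfies $1 \le k \le \log n$, so the theorem applies. With this choice the diameter is $3^k = 3^{\lceil 4/\delta \rceil} = O(1)$ and the running time is $O(k \cdot 3^k) = O(1)$, exactly as required for an $(O(1), \cdot)$-decomposition computed in constant time.

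The one step that requires an actual (if routine) estimate is bounding the number of labels, where I would absorb both the constant factor $k$ and the polylogarithmic factor $\log n$ into a polynomial in $n$. Concretely, since $k \ge 4/\delta$ we have $2/k \le \delta/2$, and since $\log n = n^{o(1)}$ we have $\log n = O(n^{\delta/2})$ for all sufficiently large $n$. Multiplying these bounds gives $k \cdot n^{2/k} \log n = O(n^{\delta/2} \cdot n^{\delta/2}) = O(n^{\delta})$, so the labeling uses $O(n^{\delta})$ colors. Combining the three estimates, the decomposition produced by the theorem is an $(O(1), n^{\delta})$-network-decomposition equipped with an $O(n^{\delta})$-labeling, computed in $O(1)$ randomized time with high probability, which is precisely the claim.

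I do not expect any genuine obstacle here: all the substantive work — the recursive construction, the diameter and label analyses of Lemmas \ref{decomposelevl}--\ref{lemmac}, the success-probability accounting, and the running-time bound of Lemma \ref{dectime} — is already packaged inside the preceding theorem. The corollary reduces to instantiating $k$ as a constant depending only on $\delta$ and verifying that the subpolynomial $\log n$ factor cannot spoil the $n^{\delta}$ target, so the ``hard part'' amounts to nothing more than the elementary observation that $n^{2/k}\log n$ stays below $n^{\delta}$ for a suitable constant $k$.
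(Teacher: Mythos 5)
Your proposal is correct and takes essentially the same route as the paper: the paper derives this corollary in exactly the same way, by instantiating the preceding theorem (the $(3^k, O(k \cdot n^{2/k} \log n))$-network-decomposition with matching labeling, computed in $O(k \cdot 3^k)$ time) with $k$ set to a sufficiently large constant depending only on $\delta$, so that the diameter and running time become $O(1)$ and the factors $k$ and $\log n$ are absorbed into $n^{\delta}$. Your explicit choice $k = \lceil 4/\delta \rceil$ together with the estimate $k \cdot n^{2/k} \log n = O(n^{\delta/2} \cdot n^{\delta/2}) = O(n^{\delta})$ is precisely the routine calculation the paper leaves implicit.
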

A yet another variant of Procedure Decompose (Algorithm \ref{proced:decompose}) is obtained if in Proc Dec-Small we always invoke the $t$-round variant of Linial's algorithm \cite{L92}, for some positive integer parameter $t$. (Again we do it regardless of the value of $d$.) Also, for this variant we set $\Lambda = \gamma \cdot n^{2/k} \log^2 n \log^{(t)} n$, where $\gamma > 0$ is a sufficiently large constant. (Specifically, the $t$-round variant of Linial's algorithm computes an $O(\Delta^2 \log^{(t)} n)$-coloring of the input $n$-vertex graph with maximum degree $\Delta$. The constant $\gamma$ should be larger than the constant hidden by the $O$-notation in $O(\Delta \log^{(t)} n)$.)

The resulting algorithm computes a $(3^k, O(k \cdot n^{2/k} \cdot \log^2 n \cdot \log^{(t)} n))$-network-decomposition with a proper $O(k \cdot n^{2/k} \cdot \log^2 n \cdot \log^{(t)})$-labeling for it, in $O(3^k \cdot t)$ time.
\begin{col}
For any $n$-vertex graph $G$ and parameters $k = 1,2,...; \ t = 1,2,...; \ \epsilon > 0$, one can compute a $(3^k, O(k \cdot n^{1/k + \epsilon} \cdot \log n))$-network-decomposition (respectively, $(3^k, O(k \cdot n^{2/k} \cdot \log^2 n \cdot \log^{(t)} n))$-network-decomposition) with an appropriate labeling function in $O(3^k/\epsilon)$ (resp., $O(3^k \cdot t)$) randomized time, 
\end{col}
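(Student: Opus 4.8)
The plan is to observe that the Corollary simply packages the two variants of Procedure Decompose described in the two paragraphs immediately preceding it, so the proof reduces to feeding the appropriate version of Procedure Dec-Small into the already-established analysis of Lemmas~\ref{decomposelevl}--\ref{dectime}. I would treat the two claimed statements in parallel, since they differ only in which subroutine colors the low-degree subgraph $\hat{G}(A)$ on each recursion level: the first variant always calls Procedure Random-Color with parameter $\epsilon$, the second always runs the $t$-round version of Linial's algorithm.

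First I would dispose of the diameter and the running time, both of which are insensitive to the choice of Dec-Small (beyond its running time). The diameter bound $3^k$ follows a fortiori from Lemma~\ref{dec}: every invocation of Dec-Small returns a proper coloring, i.e. a $(0,\cdot)$-network-decomposition whose clusters have diameter $0$, so the inductive diameter argument is unchanged regardless of which coloring subroutine is used. For the running time I would invoke Lemma~\ref{dectime} together with $T_{part}(n,n^{1/k})=O(1)$ from Lemma~\ref{prptimecns}. In the first variant $T_{dec}=O(1/\epsilon)$ (the cost of Random-Color), so Lemma~\ref{dectime} yields $O(3^k/\epsilon)$; in the second variant $T_{dec}=O(t)$ (the cost of $t$-round Linial), so the same lemma yields $O(3^k\cdot t)$.

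The substantive step is the label count, which I would handle by re-running the inverse induction of Lemma~\ref{numoflabels} with the per-level palette size $\Lambda$ reset to match the chosen subroutine. On each level Procedure Partition (invoked with $q=n^{1/k}$) leaves $\hat{G}(A)$ with maximum degree $O(n^{1/k}\log n)$; applying Random-Color there produces $\Lambda=O((n^{1/k}\log n)\cdot n^{\epsilon})=O(n^{1/k+\epsilon}\log n)$ colors, while applying $t$-round Linial produces $\Lambda=O((n^{1/k}\log n)^2\log^{(t)}n)=O(n^{2/k}\log^2 n\,\log^{(t)}n)$ colors. The offset on line~11 of Algorithm~\ref{proced:decompose} shifts each recursive level's labels by $\Lambda$, so exactly as in Lemmas~\ref{numoflabels} and~\ref{lemmac} the labels of distinct levels stay disjoint, and the totals over the $k$ levels (Lemma~\ref{decomposelevl}) are $O(k\cdot n^{1/k+\epsilon}\log n)$ and $O(k\cdot n^{2/k}\log^2 n\,\log^{(t)}n)$ respectively; properness of the combined labeling is inherited from Lemma~\ref{lemmac}.

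Finally I would record the probabilistic guarantee. Procedure Partition is invoked $k-1\le\log n$ times, each succeeding with probability $1-1/n^c$, so a union bound leaves all of them succeeding with high probability; in the first variant the $k$ calls to Random-Color add another $O(\log n)$ events of the same type, again absorbed by the union bound, whereas in the second variant Linial's algorithm is deterministic and contributes no failure probability. I do not expect a genuine obstacle here: the only point requiring care is that the offset $\Lambda$ is chosen (as the paragraphs preceding the Corollary already prescribe) to be a constant multiple of the true per-level color count, so that the color classes of different recursion levels are separated while the total stays at the claimed $k\cdot\Lambda$.
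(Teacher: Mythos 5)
Your proposal is correct and takes essentially the same route as the paper: the paper establishes this corollary precisely by instantiating Procedure Dec-Small with Procedure Random-Color (setting $\Lambda \leftarrow (2c \cdot n^{1/k}\log n)\cdot n^{\epsilon}$), respectively with the $t$-round variant of Linial's algorithm (setting $\Lambda = \gamma \cdot n^{2/k}\log^2 n \log^{(t)} n$), and then re-applying the analysis of Lemmas \ref{decomposelevl}--\ref{dectime} with the new per-level palette size. Your explicit handling of the base recursion level, the properness argument via Lemma \ref{lemmac}, and the union bound (noting that the Linial variant adds no failure events) merely fills in details the paper leaves implicit under the phrase ``by the previous argument.''
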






\section{Refining the Algorithm} \label{sc:refine}
In this section we argue that one can save a factor of $k$ in the number of labels, and compute a $(3^k, O(n^{1/k + \epsilon} \log n))$-network-decomposition (and a $(3^k, O(n^{2/k} \log n \log^{(t)} n))$-network-decomposition) with an appropriate labeling in $O(3^k/\epsilon)$ (resp., in $O(3^k \cdot t)$) time. While this improvement is negligible when $k$ is small, it becomes significant when $k$ is superconstant. We remark, however, that in the context of the current paper we are mainly interested in the regime of small $k$.

To describe this improvement we need the notions of arboricity and $H$-partition. (We refer the reader to \cite{BE08} and \cite{BE13} for a more elaborate discussion on this topic.)

The {\em arboricity} $a(G)$ of a graph $G = (V,E)$ is the minimum number $t$ of edge-disjoint forests $F_1,F_2,...,F_t$, such that $E = \cup_{i = 1}^t F_i$. An {\em $H$-partition} $(H_1,H_2,...,H_{\ell})$ of $G = (V,E)$ with degree at most $A$, for some number $A$, is a partition of the vertex set $V$ of $G$ into vertex disjoint subsets $V = \cup_{i = 1}^{\ell} H_i$, $H_i \cap H_j = \emptyset$ for every pair of distinct indices $i \neq j$, $i,j \in [\ell]$, such that for every index $i \in [\ell]$ and every vertex $v \in H_i$, the number of neighbors $\deg(v, \cup_{j=i}^{\ell} H_j)$ that $v$ has in $H$-sets $H_j$ with an index $j \geq i$ is at most $A$.

Consider again Procedure Decompose. (See Algorithm \ref{proced:decompose}.) For $i = 1,2,...,k$, let $\hat{G}_i$ denote the supergraph on which the procedure is invoked in the $i$th level of recursion. In particular, $\hat{G}_1 = G$ is the original graph. Also, in all levels $i \leq k -1$ the procedure enters lines 4 - 15, and in the last level $i = k$ it enters the termination condition (line 2). In the former case Procedure Decompose invokes Procedure Dec-Small (in line 5), which returns the collection $S$ of clusters. (It also returns the labeling function that is immaterial for the current discussion.) For $i = 1,2,...,k-1$, let $S_i$ denote the set of clusters returned by Procedure Dec-Small on line 5 of the $i$th level recursive invocation of Procedure Decompose. Finally, in level $k$ of the recursion Procedure Dec-Small is invoked in line 2. Denote by $S_k$ the decomposition that it returns.
\begin{lem} \label{sets}
$\cup_{i=1}^k S_i$ is the network decomposition that Procedure Decompose returns (in line 14 of the first level recursive invocation). Moreover, for any index $i$, $k \geq i \geq 1$, $\cup_{j = i}^k S_j$ is the network decomposition that the $i$th level recursive invocation of Procedure Decompose returns.
\end{lem}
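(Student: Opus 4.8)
The plan is to prove both assertions simultaneously by a downward induction on the recursion level $i$, equivalently by induction on $\ell = k - i + 1$, in the same spirit as the proofs of Lemmas \ref{numoflabels} and \ref{dec}. The inductive claim is exactly the second assertion: the $i$th-level recursive invocation of Procedure Decompose returns the collection $\bigcup_{j=i}^k S_j$. The first assertion is then simply the special case $i = 1$.

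For the base case $i = k$ (i.e. $\ell = 1$): by Lemma \ref{decomposelevl} the level-$k$ invocation is precisely the one at which $s = n^{1/k} \le 2c \cdot n^{1/k}\log n$, so the termination condition on line 1 of Algorithm \ref{proced:decompose} is satisfied. Consequently this invocation returns, via line 2, exactly the decomposition produced by Procedure Dec-Small, which is by definition $S_k$. Hence the level-$k$ invocation returns $S_k = \bigcup_{j=k}^k S_j$, as required.

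For the inductive step I would assume the claim at level $i+1$ and establish it at level $i$, for $1 \le i \le k-1$. At level $i$ the procedure enters the else-branch (lines 4--15). Line 5 produces the collection $S$, which is $S_i$ by definition. Line 6 is the $(i+1)$st-level recursive invocation, so by the induction hypothesis it returns a collection $L$ equal to $\bigcup_{j=i+1}^k S_j$. The for-loop on lines 7--13 merely assigns labels to the clusters of $S \cup L$; it neither adds, removes, nor merges any cluster, so the underlying collection of clusters is unchanged. Therefore line 14 returns $S \cup L = S_i \cup \bigcup_{j=i+1}^k S_j = \bigcup_{j=i}^k S_j$, which completes the induction.

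The argument is essentially an unrolling of the recursion, so I do not anticipate a substantive mathematical obstacle. The two points requiring care are: first, invoking Lemma \ref{decomposelevl} to guarantee that termination occurs exactly at level $k$, which pins down the base case; and second, observing that the relabeling loop on lines 7--13 preserves the \emph{set} of clusters, so that the returned object is genuinely the set-theoretic union $S \cup L$ rather than a modified collection. Once these are noted, setting $i = 1$ yields the first statement of the lemma.
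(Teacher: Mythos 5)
Your proof is correct and follows essentially the same route as the paper's: a downward induction on the recursion level, with base case $i=k$ (termination via Dec-Small on line 2) and the step identifying line 5's output with $S_i$ and line 6's output with $\cup_{j=i+1}^k S_j$ by the induction hypothesis. Your two added observations --- pinning down the base case via Lemma \ref{decomposelevl} and noting that the relabeling loop on lines 7--13 leaves the cluster collection unchanged --- are sound refinements of details the paper leaves implicit.
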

\begin{proof}
The proof is by induction on $i$. \\
{\bf Base ($i = k$)}: In this case Procedure Decompose returns the output $S_k$ of an invocation of Procedure Dec-Small (on line 2 of Algorithm \ref{proced:decompose}).\\
{\bf Step}: Consider some $1 \leq i < k$. The $i$th level recursive invocation returns $S \cup L$ in line 14. Recall that $S = S_i$ is a network decomposition for $\hat{G}(A)$ computed in line 5 of Algorithm \ref{proced:decompose}. (In all levels except the first one $S$ is actually equals to $A$. In the first level $S = \{\{v\} \ | \ v \in A\}$.)
Also, $L$ (computed by the $(i + 1)$st level recursive invocation of Procedure Decompose; see line 6 of Algorithm \ref{proced:decompose}) is a network decomposition for ${\cal G}({\cal B})$. By induction hypothesis the latter is  $\cup_{j = i + 1}^k S_j$. Hence $S \cup L = \cup_{j= i}^k S_j$, as required.
\end{proof}
In the next lemma we show that $(S_1,S_2,...,S_k)$ is an $H$-partition with relatively small degree of the supergraph ${\cal G}(\cup_{i=1}^k S_i)$ induced by the network decomposition $\cup_{i=1}^k S_i$.
\begin{lem} \label{partitions}
$(S_1,S_2,...,S_k)$ is an $H$-partition with degree $O(n^{1/k} \log n)$ of the supergraph ${\cal G}(\cup_{i=1}^k S_i)$.
\end{lem}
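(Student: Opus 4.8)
The plan is to verify the $H$-partition degree condition head-on: fixing a level $i$ and a cluster $C\in S_i$, I will show that the number of neighbors of $C$ in the supergraph ${\cal G}(\cup_{l=1}^k S_l)$ that lie in $\cup_{j=i}^k S_j$ is $O(n^{1/k}\log n)$. First I would set up notation. For $i<k$ the cluster $C$ is a single supernode $\hat{C}$ of the level-$i$ supergraph $\hat{G}_i$, lying in the low-degree part $A$ returned by Procedure Partition (recall that Procedure Dec-Small returns a coloring, i.e.\ a $(0,\cdot)$-network-decomposition whose clusters are individual supernodes). By Lemma \ref{sets} the clusters of $\cup_{j=i}^k S_j$ are exactly those returned by the level-$i$ recursive invocation, i.e.\ the clusters whose constituent $\hat{G}_i$-supernodes either land in $A$ at level $i$ (contributing to $S_i$) or are passed through $B$ to deeper levels (contributing to $S_{j}$, $j>i$).

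The crucial observation, and the step I would isolate first, is that the bound coming out of Procedure Partition is really a bound on the \emph{full} degree in $\hat{G}_i$, not merely the degree inside $A$. Re-reading the proof of the degree lemma for Procedure Partition, what is actually established (conditioned on success) is that every vertex with at least $c\cdot q\cdot\ln n$ neighbors in the \emph{entire} graph $\hat{G}_i$ is almost surely dominated by $D$ and hence lands in $B$; contrapositively, every vertex placed in $A$ has fewer than $c\cdot q\cdot\ln n$ neighbors \emph{in all of} $\hat{G}_i$. With $q=n^{1/k}$ this gives $\deg_{\hat{G}_i}(\hat{C})=O(n^{1/k}\log n)$, counting neighbors toward both $A$ and $B$.

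Next I would run the charging argument. Take any neighbor $C'$ of $C$ in ${\cal G}(\cup_l S_l)$ with $C'\in S_j$, $j\ge i$; by definition of the supergraph there is an edge $(u,w)\in E$ with $u\in C$ and $w\in C'$. Since $\hat{G}_i$ is the contraction of $G$ along the level-$i$ supernodes (immediate by induction from the iterated definition of the supergraph edge sets), the vertex $w$ lies in some supernode $\hat{D}$ of $\hat{G}_i$ with $\hat{C}\sim\hat{D}$ in $\hat{G}_i$, and $\hat{D}\neq\hat{C}$ because $C\cap C'=\emptyset$. If $j=i$ then $\hat{D}\in A$ and $C'=\hat{D}$; if $j>i$ then the vertices of $C'$ survive past level $i$, forcing $\hat{D}\in B$. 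Either way, each neighbor $C'$ furnishes a genuine $\hat{G}_i$-neighbor $\hat{D}$ of $\hat{C}$. Because each supernode of $\hat{G}_i$ is contained in exactly one final cluster of the partition $\cup_l S_l$, the assignment $C'\mapsto\hat{D}$ is injective, so the number of admissible $C'$ is at most $\deg_{\hat{G}_i}(\hat{C})=O(n^{1/k}\log n)$, which is precisely the $H$-partition degree bound. The boundary level $i=k$ is then handled separately: here $\cup_{j=i}^k S_j=S_k$ is the set of all (singleton) clusters of $\hat{G}_k$ with no $B$-part, and the count is just $\deg_{\hat{G}_k}(\hat{C})\le|\hat{V}_k|-1\le n^{1/k}-1$ by Lemma \ref{decomposelevl}, again $O(n^{1/k}\log n)$.

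The main obstacle I anticipate is exactly the subtlety flagged in the second paragraph: a naive reading supplies only a degree bound \emph{within} $A$, which cannot control the potentially many neighbors of $\hat{C}$ that sit in $B$ and get merged into later clusters. The resolution has two ingredients that must be combined — the stronger full-degree consequence of the Partition analysis, and the injectivity of the charging map (distinct later clusters are vertex-disjoint, hence absorb distinct $\hat{G}_i$-neighbors of $\hat{C}$); together they show that \emph{all} neighbors across every later level are charged to the single quantity $\deg_{\hat{G}_i}(\hat{C})$.
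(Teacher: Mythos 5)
Your proposal is correct and follows essentially the same route as the paper: the paper's inductive step is exactly your two ingredients, namely that (``by construction'') every cluster of $S_i$ has degree $O(n^{1/k}\log n)$ in the \emph{full} supergraph $\hat{G}_i$, and that clusters of $\cup_{j=i+1}^k S_j$ arise by merging supernodes of $\hat{G}_i$, so the degree of $C$ in ${\cal G}(\cup_{j=i}^k S_j)$ cannot exceed $\deg_{\hat{G}_i}(C)$; your base case at $i=k$ via the termination condition matches as well. If anything, you are more careful than the paper, which states the Partition lemma only for the induced degree in $\hat{G}(A)$ while its proof (and this lemma's proof) silently rely on the stronger full-degree consequence you isolated, and which leaves the injectivity of your charging map implicit in the phrase ``obtained by merging.''
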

\begin{proof}
Again, by an induction on $i$, $k \geq i \geq 1$, we show that $(S_i,S_{i+1},...,S_k)$ is an $H$-partition of ${\cal G}(\cup_{i=1}^k S_i)$ with maximum degree $O(n^{1/k} \log n)$.\\
{\bf Base ($i = k$)}: In this case we need to show that the maximum degree in ${\cal G}(S_k)$ is $O(n^{1/k} \log n)$. By the termination condition of Algorithm \ref{proced:decompose} (line 1), $|S_k| = O(n^{1/k} \log n)$, and thus the same upper bound applies to its maximum degree.\\
{\bf Step}: For some $i$, $1 \leq i \leq k$, we argue that for any cluster $C \in S_i$ its degree in ${\cal G}(\cup_{j = i}^k S_j)$ is $O(n^{1/k} \log n)$. By Lemma \ref{sets}, $\cup_{j=i}^k S_j$ is the network decomposition for ${\cal G}(\cup_{j = i}^k S_j)$ that the $i$th level recursive invocation of Procedure Decompose returns. By construction, $S_i$ is the set of clusters with degree at most $O(n^{1/k} \log n)$ in the supergraph $\hat{G}_i$. Since clusters of $\cup_{j = i + 1}^k S_j$ are obtained by merging clusters of $\hat{G}_i$, it follows that the degree of $C$ in $\cup_{j = i}^k S_j$ is no greater that its degree in $\hat{G}_i$, i.e., at most $O(n^{1/k} \log n)$.
\end{proof}
To recap, Lemma \ref{partitions} shows that in addition to computing a network decomposition $Q$ of its input graph $G$, Procedure Decompose also computes a low-degree $H$-partition of the induced supergraph ${\cal G}(Q)$. (Here Q = $\cup_{i = 1}^k S_i$, and the $H$-partition is $(S_1,S_2,...,S_k)$. The degree of the partition is $O(n^{1/k} \log n)$.)

For the variant of Procedure Decompose that we describe in this section we do not actually need to explicitly compute the labeling function during the execution of the procedure. As a result Procedure Dec-Small can be greatly simplified. Specifically, if it is invoked on a supergraph $\hat{G} = (\hat{V}, \hat{E})$, then it returns $\hat{V}$ as its output partition. If it is invoked on a subgraph $G(U) = (U,E(U))$ of the original graph $G$, then it returns a partition of $U$ into singleton clusters, i.e., $\{\{u\} \ | \ u \in U \}$. Observe that in this simplified form Procedure Dec-Small requires $O(1)$ time. As a result the overall running time of Procedure Decompose becomes $O(3^k)$ rather than $O(3^k/ \epsilon)$ or $O(3^k \cdot t)$.

Next, we utilize the $H$-partition $(S_1,S_2,...,S_k)$ of ${\cal G}(Q)$ for computing an $O(n^{2/k} \log^2 n \log^{(t)} n)$-coloring of ${\cal G} (Q)$ in $O(3^k \cdot t)$ time, or alternatively, an $O(n^{1/k + \epsilon} \log n)$-coloring of ${\hat G} (Q)$ in $O(3^k / \epsilon)$ time. Such colorings can be viewed as labelings of the network decomposition $\cup_{i=1}^k S_i = Q$. (For every vertex $v \in V$, its label will be equal to the color of the cluster $C_v \in Q$ that contains it.)

To simplify presentation, consider an $n$-vertex graph $G' = (V',E')$ and an $H$-partition $(S_1,S_2,...,S_k)$ for $G'$ with degree $A = O(n^{1/k} \log n)$. We will argue that $G'$ can be efficiently colored. To implement this coloring in a supergraph ${\cal G} (Q)$, we will need to multiply the running time by the maximum diameter of a cluster in $Q$, i.e., by $O(3^k)$. We start with arguing that $G'$ can be colored in $O(A^2 \log^{(t)} n)$ colors in $O(t)$ time, by a deterministic algorithm. (This algorithm is closely related to Algorithm Arb-Linial from \cite{BE08}, based on Linial's algorithm \cite{L92}. The current algorithm is however more general than Algorithm Arb-Linial.) The algorithm starts by orienting all edges $(u,v)$ in the following way: let $i_u$ (respectively, $i_v$) be the index of the set $S_{i_u}$ (resp., $S_{i_v}$) which contains $u$ (resp., $v$). If $i_u < i_v$ then the edge is oriented towards $v$. If the opposite holds than it is oriented towards $u$. If $i_u =i_v$ then the edge is oriented towards the endpoint with a greater $Id$.

Observe that under this orientation each vertex $v$ has at most $A$ outgoing edges incident on it. The opposite endpoints of these edges will be referred to as the {\em parents} of $v$. Let $\Pi(v)$ denote the set of parents of $v$.

Let $\varphi$ be a proper $p$-coloring of $G'$. We argue that a legal $O(A^2 \log p)$-coloring $\varphi'$ of $G'$ can be computed within one single round. To this end we again employ an $A$-union-free family ${\cal F}$ of $p$ sets (due to \cite{EFF85}, see also Section \ref{sc:pr} of this paper). Each color class $c$ of $\varphi$ is associated with a set $X_c \in {\cal F}$. A vertex $v$ computes a color $\varphi'(v)$ which belongs to $X_{\varphi(v)} \setminus \cup_{u \in \Pi(v)} X_{\varphi(u)}$. Such a color necessarily exists, because ${\cal F}$ is an $A$-union-free family. Also, for an edge $(v,u) \in E'$, suppose without loss of generality that $u \in \Pi(v)$. Then $\varphi'(v) \in X_{\varphi(v)} \setminus X_{\varphi(u)}$, while $\varphi'(u) \in X_{\varphi(u)}$, and so $\varphi'(v) \neq \varphi'(u)$. By \cite{EFF85}, a family ${\cal F}$ over a ground-set of size $O(A^2 \log p)$ exists (and can be efficiently constructed). Thus, $\varphi'$ is a proper $O(A^2 \log p)$-coloring. By repeating this recoloring step for $t$ times, we obtain an $O(A^2 \log^{(t)} n)$-coloring in $O(t)$ rounds. (We start with an initial $n$-coloring of $G'$. Specifically, each vertex uses its $Id$ as its initial color.)

\begin{col}
An $O(n^{2/k} \log^2 n \log^{(t)} n)$-coloring of ${\cal G} (Q)$ can be computed in $O(3^k \cdot t)$ time, for any $t = 1,2,...$.
\end{col}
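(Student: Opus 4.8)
The plan is to apply, essentially verbatim, the deterministic $A$-union-free recoloring procedure described in the paragraphs immediately preceding the corollary, taking the generic graph $G'$ to be the supergraph ${\cal G}(Q)$ and the $H$-partition to be $(S_1,S_2,\ldots,S_k)$. First I would invoke Lemma \ref{partitions}, which guarantees that $(S_1,\ldots,S_k)$ is indeed an $H$-partition of ${\cal G}(Q)$ with degree $A = O(n^{1/k}\log n)$. This lets us orient the edges of ${\cal G}(Q)$ (from lower-index to higher-index $S$-sets, breaking ties toward the endpoint whose leader has the greater $Id$) so that every supernode has at most $A$ parents, which is exactly the hypothesis required by the recoloring step.

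Next I would run the iterated recoloring for $t$ rounds on ${\cal G}(Q)$. Starting from the trivial $n$-coloring given by the $Id$s of the cluster leaders and applying the $A$-union-free family of \cite{EFF85} once per round, the color count evolves as $p \mapsto O(A^2 \log p)$, which after $t$ iterations yields an $O(A^2 \log^{(t)} n)$-coloring. Substituting $A = O(n^{1/k}\log n)$ gives $A^2 = O(n^{2/k}\log^2 n)$, hence an $O(n^{2/k}\log^2 n \log^{(t)} n)$-coloring, as claimed. Since a proper coloring of ${\cal G}(Q)$ is precisely a proper labeling of the network decomposition $Q$, this is the desired output.

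For the running time, the only point not already established for ordinary graphs is the cost of simulating one communication round of ${\cal G}(Q)$ on the original network $G$. Each recoloring round requires every supernode $C$ to learn the current colors of its at most $A$ parents; since adjacent supernodes share an edge of the current supergraph and messages inside a cluster are routed through its leader, one supergraph round is realized by $O(D)$ rounds of $G$, where $D$ is the maximum diameter of a cluster of $Q$. By Lemma \ref{dec} this diameter is at most $3^{k-1}-1 = O(3^k)$, so each of the $t$ rounds costs $O(3^k)$ real rounds and the total time is $O(3^k \cdot t)$.

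The step I expect to require the most care is not conceptual but bookkeeping: verifying that the recurrence $p \mapsto O(A^2\log p)$ truly telescopes to $O(A^2 \log^{(t)} n)$ after $t$ steps (the standard Linial iterated-logarithm bound, already asserted in the text preceding the corollary), and confirming that the leader-based simulation correctly delivers parent colors across supernode boundaries within the claimed $O(3^k)$ rounds even for the nested supernodes produced at deeper recursion levels. Both are routine given Lemma \ref{dec} and the construction of Procedure Partition, so no genuine obstacle arises.
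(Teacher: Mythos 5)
Your proposal is correct and follows essentially the same route as the paper: the paper's own argument also invokes Lemma \ref{partitions} to get the $H$-partition of ${\cal G}(Q)$ with degree $A = O(n^{1/k}\log n)$, orients edges toward higher-index sets (ties by $Id$), applies the $A$-union-free recoloring of \cite{EFF85} for $t$ rounds starting from the $Id$-based $n$-coloring, and multiplies the running time by the $O(3^k)$ cluster diameter bound from Lemma \ref{dec} to simulate each supergraph round. No gaps; the bookkeeping points you flag (the telescoping to $O(A^2\log^{(t)} n)$ and the leader-based simulation of supernodes) are handled in the paper exactly as you anticipate.
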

Observe that this argument shows in fact that the arboricity of ${\cal G} (Q)$ is $O(A) = O(n^{1/k} \log n)$, and thus $Q$ is a $(3^{k - 1} - 1, O(n^{1/k} \log n))$-network-decomposition. We summarize these results in the following corollary.
\begin{col} \label{partdec}
Procedure Decompose, invoked on an $n$-vertex graph $G = (V,E)$ with a parameter $k = 1,2,...$, computes a $(3^{k - 1} -1, O(n^{1/k} \log n))$-network-decomposition $Q$ and an $H$-partition $(S_1,...,S_k)$ of degree $O(n^{1/k} \log n)$ and length $k$ for ${\cal G} (Q)$ in $O(3^k)$ randomized time, with high probability. Moreover, for a parameter $t = 1,2,...$, one can compute in $O(3^k \cdot t )$ time an $O(n^{2/k} \log^2 n \log^{(t)}n)$-labeling for $Q$. In particular, by setting $t = \log^* n$ one can get here time $O(3^k \log^* n)$ and labeling with $O(n^{2/k} \log^2 n)$ labels.
\end{col}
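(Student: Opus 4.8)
The plan is to assemble this corollary from the structural facts already established, rather than to re-analyze Procedure Decompose from scratch. First I would invoke Lemma~\ref{sets} to identify the returned decomposition $Q$ with $\bigcup_{i=1}^k S_i$, and Lemma~\ref{dec} to conclude that every cluster of $Q$ has diameter at most $3^{k-1}-1$. Next, Lemma~\ref{partitions} immediately yields that $(S_1,\dots,S_k)$ is an $H$-partition of $\mathcal{G}(Q)$ of length $k$ and degree $A=O(n^{1/k}\log n)$. These two citations deliver both the diameter bound and the $H$-partition claim with no additional work.

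The one genuinely new step is upgrading the second parameter of the decomposition from the $O(k\cdot n^{2/k}\log^2 n)$ of Lemma~\ref{lemmac} down to $O(n^{1/k}\log n)$, and I would argue this through arboricity. Orient each edge of $\mathcal{G}(Q)$ from the endpoint lying in the lower-indexed $H$-set toward the one in the higher-indexed set (breaking ties by $Id$), exactly as in the coloring argument preceding the corollary. This orientation is acyclic, and by the $H$-partition property every supernode has at most $A$ out-neighbors (its parents). A topological order of this orientation then exhibits $\mathcal{G}(Q)$ as $A$-degenerate, whence its arboricity is $O(A)=O(n^{1/k}\log n)$ and it is $O(n^{1/k}\log n)$-colorable. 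Consequently $Q$ is a $(3^{k-1}-1,O(n^{1/k}\log n))$-network-decomposition. The subtlety to flag here is that this colorability bound is what \emph{defines} the decomposition parameter $\ell$; it is not a claim that the distributed algorithm actually produces a labeling using this few colors.

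For the running time I would use the simplified Procedure Dec-Small described in this section, which merely returns its input partition (with no $A$-union-free recoloring) and hence runs in $O(1)$ time, together with the $O(1)$ bound on Procedure Partition from Lemma~\ref{prptimecns}. Substituting $T_{part}=O(1)$ and $T_{dec}=O(1)$ into the recurrence of Lemma~\ref{dectime} gives total time $O(3^k\cdot(O(1)+O(1)))=O(3^k)$, establishing the first timing claim.

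Finally, for the explicit labeling I would invoke the corollary immediately preceding \ref{partdec}: running the $A$-union-free recoloring of the preceding subsection for $t$ rounds on $\mathcal{G}(Q)$ produces an $O(A^2\log^{(t)}n)=O(n^{2/k}\log^2 n\log^{(t)}n)$-coloring, and simulating each such round on the supergraph costs the $O(3^k)$ cluster-diameter factor, for total time $O(3^k\cdot t)$. Specializing to $t=\log^* n$ uses $\log^{(\log^* n)}n=O(1)$ to absorb the $\log^{(t)}n$ factor, leaving $O(n^{2/k}\log^2 n)$ labels in $O(3^k\log^* n)$ time. I expect the main obstacle to be purely expository: cleanly separating the two distinct quantities in play---the arboricity/colorability bound that sets the network-decomposition parameter $\ell=O(n^{1/k}\log n)$, versus the larger $O(n^{2/k}\log^2 n\log^{(t)}n)$ count of labels that the efficient distributed coloring subroutine actually outputs---so that each clause of the statement is matched against the correct one.
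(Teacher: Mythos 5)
Your proposal is correct and follows essentially the same route as the paper: the paper likewise assembles the corollary from Lemmas \ref{sets}, \ref{dec} and \ref{partitions}, obtains the $O(3^k)$ bound from the simplified $O(1)$-time Dec-Small together with Lemma \ref{prptimecns} and Lemma \ref{dectime}, derives the $\ell = O(n^{1/k}\log n)$ parameter from the same parent-orientation/arboricity observation about ${\cal G}(Q)$, and gets the explicit $O(n^{2/k}\log^2 n\log^{(t)}n)$-labeling in $O(3^k\cdot t)$ time from the $t$-round union-free recoloring simulated on the supergraph. Your explicit spelling-out of the acyclicity and $A$-degeneracy behind the colorability bound, and your distinction between the existential colorability defining $\ell$ and the larger label count actually computed, are faithful elaborations of what the paper states more tersely.
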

Note that the $O(A^2 \log^{(t)} n)$-coloring algorithm for $G'$ that was described above does not require the fact that the $H$-partition $(S_1,S_2,...,S_k)$ has small number of sets. Next we show that this $H$-partition can be used in a more explicit way to compute an $O(A \cdot n^{\epsilon})$-coloring of $G'$ in $O(k / \epsilon)$ time.

First, every vertex $v$ of $S_k$ tosses a color $\varphi(v)$ uniformly at random from the palette $[A \cdot n^{\epsilon}]$. It checks if its color is different from the colors of all its neighbors in $S_k$. If it is the case, it finalizes its color. Otherwise, it tosses its color from the same palette again. The process is repeated for $\left \lceil c / \epsilon \right \rceil$ rounds, for a sufficiently large constant $c$. As we have already seen (see Lemma \ref{dsmall} and the discussion preceding it), in $O(1/ \epsilon)$ rounds we will obtain a legal $O(A \cdot n^{\epsilon})$-coloring $\varphi_k$ for $S_k$, with high probability. (Recall that the maximum degree in $S_k$ is at most A.) Define also $\hat{\varphi}_k = \varphi_k$. 

Suppose that we have already computed an $O(A \cdot n^{\epsilon})$-coloring $\hat{\varphi}_i$ for $G'(\cup_{j=i}^k S_j)$, for some $i$, $2 \leq i \leq k$. Next we show how to extend this coloring into an $O(A \cdot n^{\epsilon})$-coloring $\hat{\varphi}_{i-1}$ for $G'(\cup_{j = i-1}^k S_j)$. To this end every vertex $v \in S_{i-1}$ tosses a color from $[A \cdot n^{\epsilon}]$ uniformly at random, and checks if its color is different from the colors (either tossed on this round, or finalized colors) of its neighbors in $\cup_{j = i -1}^k S_j$. If it is different from them, then $v$ finalizes its color. Otherwise, it continues to the next round. The entire process continues for $O(1/ \epsilon)$ rounds.

The key observation required for the analysis is that $v \in S_{i -1}$ has at most $A$ neighbors in $\cup_{j = i- 1}^k S_j$ Thus, a legal $(A \cdot n^{\epsilon})$-coloring $\varphi_{i-1}$ for $S_{i - 1}$ will be computed, with high probability, within additional $O(1/\epsilon)$ rounds. It is then combined in a trivial way with the coloring $\hat{\varphi}_i$ for $\cup_{j = i}^k S_j$ to obtain the $(A \cdot n^{\epsilon})$-coloring $\hat{\varphi}_{i-1}$ for $\cup_{j = i -1}^k S_j$.
\begin{thm} \label{arbc}
Given an $H$-partition $(S_1,S_2,...,S_k)$ with degree at most $A$ for an $n$-vertex graph $G' = (V',E')$, and a parameter $\epsilon > 0$, an $(A \cdot n^{\epsilon})$-coloring of $G'$ can be computed in $O(k / \epsilon)$ rounds.
\end{thm}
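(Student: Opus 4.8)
The plan is to analyze the backward-sweep randomized procedure described immediately above the statement: we color the sets in the order $S_k, S_{k-1}, \ldots, S_1$, maintaining the invariant that after processing $S_i$ the vertices of $\cup_{j=i}^k S_j$ carry a proper coloring drawn from the palette $[A \cdot n^{\epsilon}]$. Correctness of the final coloring $\hat{\varphi}_1$ then reduces to two observations. First, the invariant is preserved phase-by-phase: a vertex $v \in S_{i-1}$ only finalizes a color that differs from every already-finalized color among its neighbors in $\cup_{j=i}^k S_j$ and from every color tossed in the current round by its neighbors inside $S_{i-1}$. Second, every edge $(u,v)$ of $G'$ is examined at the phase indexed by $\min\{i_u,i_v\}$ (where $i_w$ denotes the index of the set containing $w$), since the endpoint lying in that set checks against the other; hence no edge escapes the check and the resulting coloring is proper.

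The heart of the argument is the per-round success probability within a single phase. Here I would invoke the defining property of the $H$-partition at index $i-1$: each $v \in S_{i-1}$ has at most $A$ neighbors in $\cup_{j=i-1}^k S_j$. Fix a round and condition on the color choices (finalized or currently tossed) of all vertices other than $v$. The set of colors $v$ must avoid --- finalized colors of its higher-index neighbors together with the current tosses of its still-active neighbors in $S_{i-1}$ --- has size at most $A$. Since $v$ tosses uniformly from $A \cdot n^{\epsilon}$ colors independently of these choices, the probability that $v$ fails to finalize in this round is at most $A/(A \cdot n^{\epsilon}) = 1/n^{\epsilon}$. Crucially this bound holds unconditionally, so concurrent tossing by same-level neighbors does not degrade it.

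From here the calculation is routine. Over $\lceil c/\epsilon \rceil$ rounds the probability that a fixed $v$ never finalizes is at most $(1/n^{\epsilon})^{\lceil c/\epsilon \rceil} \leq 1/n^c$, exactly as in Lemma \ref{dsmall}. A union bound over all $n$ vertices and all $k \leq \log n$ phases bounds the total failure probability by $k \cdot n/n^{c} \leq \log n / n^{c-1}$, which is negligible for a sufficiently large constant $c$; hence with high probability every vertex finalizes a color and $\hat{\varphi}_1$ is a proper $(A \cdot n^{\epsilon})$-coloring of $G'$. The running time is the sum of the phase lengths: $k$ phases, each lasting $O(1/\epsilon)$ rounds, for a total of $O(k/\epsilon)$ rounds.

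The step I expect to be the main (if modest) obstacle is the per-round probability bound in the second paragraph, because the neighbors of $v$ inside $S_{i-1}$ toss simultaneously and their tentative colors are themselves random, so one cannot simply treat the forbidden set as fixed. The clean way around this is the conditioning argument above: freezing every other vertex's current choice reduces $v$'s failure event to landing in a fixed set of at most $A$ forbidden colors, after which independence and the uniform palette yield $1/n^{\epsilon}$ directly. Everything else is bookkeeping of the invariant together with a standard union bound.
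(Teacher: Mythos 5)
Your proposal is correct and follows essentially the same route as the paper: a backward sweep over the $H$-partition sets $S_k, S_{k-1}, \ldots, S_1$, with each phase running the random-tossing procedure from the palette $[A \cdot n^{\epsilon}]$ for $O(1/\epsilon)$ rounds, and the key use of the $H$-partition degree bound to cap the forbidden set at $A$ colors. Your conditioning argument for the per-round failure probability of $1/n^{\epsilon}$ is a careful spelling-out of a step the paper only asserts (by reference to Lemma \ref{dsmall} and the discussion preceding it), so it is a welcome sharpening rather than a deviation.
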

By invoking this algorithm on the network decomposition $Q$ we obtain:
\begin{col}
Using a $(3^{k-1} - 1, O(n^{1/k} \log n))$-network-decomposition $Q$ of the input graph $G$ and an $H$-partition $(S_1,S_2,...,S_k)$ for $Q$ with degree $O(n^{1/k} \log n)$, one can compute an $O(n^{1/k + \epsilon} \log n)$-labeling for $Q$ within $O(3^k \cdot k/ \epsilon)$ randomized time.
\end{col}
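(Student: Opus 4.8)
The plan is to obtain this corollary as a direct instantiation of Theorem \ref{arbc}, applied to the supergraph ${\cal G}(Q)$ induced by the network decomposition, together with a careful accounting of the cost of simulating supergraph rounds in the original graph $G$. The coloring of ${\cal G}(Q)$ and the labeling of $Q$ are the same object, so essentially all the work is already contained in Theorem \ref{arbc}; what remains is a translation of its round complexity into real time.

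First I would set $G' := {\cal G}(Q)$ and invoke Theorem \ref{arbc} with the $H$-partition $(S_1,S_2,\ldots,S_k)$ and degree bound $A = O(n^{1/k}\log n)$ supplied by Lemma \ref{partitions}. Theorem \ref{arbc} then produces an $(A\cdot n^{\epsilon})$-coloring of ${\cal G}(Q)$ in $O(k/\epsilon)$ rounds, where these rounds are measured in the supergraph ${\cal G}(Q)$. Substituting $A = O(n^{1/k}\log n)$ gives a palette of size $O(n^{1/k}\log n)\cdot n^{\epsilon} = O(n^{1/k+\epsilon}\log n)$, matching the claimed bound. I would then observe that a proper coloring of ${\cal G}(Q)$ is precisely a proper labeling of $Q$: by the definition of the induced supergraph, two clusters are adjacent in ${\cal G}(Q)$ exactly when the corresponding supernodes are joined by an edge, so assigning each vertex $v \in V$ the color of its cluster $C_v \in Q$ yields a legal labeling of the decomposition.

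The main obstacle, and the only genuinely nontrivial step, is converting the $O(k/\epsilon)$ supergraph rounds into real rounds in $G$. Each vertex of ${\cal G}(Q)$ is a cluster of $Q$ of diameter at most $3^{k-1}-1$ (Lemma \ref{dec}), simulated by the leader of its innermost cluster. A single round of the coloring algorithm on ${\cal G}(Q)$ requires every supernode to exchange its tossed or finalized color with all of its neighbors in ${\cal G}(Q)$; this message must be routed through the cluster interiors, to and from the leaders, which costs $O(3^k)$ real rounds per supergraph round, exactly as in the discussion following the description of Procedure Decompose and as formalized in the proof of Lemma \ref{dectime}. Multiplying the $O(k/\epsilon)$ supergraph rounds by this $O(3^k)$ simulation overhead yields the stated running time $O(3^k\cdot k/\epsilon)$, completing the argument.
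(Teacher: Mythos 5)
Your proposal is correct and follows essentially the same route as the paper: the corollary is obtained by invoking Theorem \ref{arbc} on $G' = {\cal G}(Q)$ with $A = O(n^{1/k}\log n)$, identifying the resulting coloring of ${\cal G}(Q)$ with a labeling of $Q$, and multiplying the $O(k/\epsilon)$ supergraph rounds by the $O(3^k)$ cluster-diameter simulation overhead. The only cosmetic difference is that you cite Lemma \ref{partitions} to supply the $H$-partition, whereas the corollary's statement already assumes it as given; this does not affect correctness.
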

By substituting $\epsilon = 1/k$ we get:
\begin{col} \label{improvement}
A $(3^{k-1} - 1, O(n^{1/k} \log n))$-network-decomposition $Q$ along with an $O(n^{2/k} \log n)$-labeling for it can be computed in $O(3^k \cdot k^2)$ randomized time.
\end{col}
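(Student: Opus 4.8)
The plan is to obtain this statement as a direct instantiation of the two immediately preceding results, so essentially no new work is required. First I would invoke Corollary~\ref{partdec}, which already hands us, in $O(3^k)$ randomized time and with high probability, both the network-decomposition $Q$ of diameter $3^{k-1}-1$ and arboricity $O(n^{1/k}\log n)$ and the accompanying $H$-partition $(S_1,\ldots,S_k)$ of degree $A = O(n^{1/k}\log n)$ and length $k$ for the induced supergraph ${\cal G}(Q)$. This supplies exactly the structural object referenced by the ``using'' clause of the preceding (unlabeled) corollary, whose construction in turn rests on Lemma~\ref{partitions}.

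Next I would apply that preceding corollary, which asserts that from such a $Q$ together with its $H$-partition one can compute an $O(n^{1/k+\epsilon}\log n)$-labeling in $O(3^k \cdot k/\epsilon)$ randomized time, and simply set $\epsilon := 1/k$. The arithmetic then collapses: the color count becomes $O(n^{1/k+1/k}\log n) = O(n^{2/k}\log n)$, and the running time becomes $O(3^k \cdot k \cdot k) = O(3^k \cdot k^2)$, as claimed. Since the labeling step dominates the $O(3^k)$ cost of producing $Q$, the total time is $O(3^k \cdot k^2)$.

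The only point requiring a moment of care — and it is really the single place where anything could go wrong — is checking that $\epsilon = 1/k$ is an admissible parameter choice for Theorem~\ref{arbc} (on which the preceding corollary rests). That theorem is stated for ``a parameter $\epsilon > 0$'' with no positive lower bound, so $\epsilon = 1/k$ is legitimate, and the high-probability guarantee is inherited unchanged, since only $k$ coloring phases are run and each succeeds with high probability. Beyond that, the result is a pure substitution, so I would not expect any genuine obstacle; all the real content lives upstream in Lemma~\ref{partitions}, in Theorem~\ref{arbc}, and in the $A$-union-free recoloring argument that feeds into them.
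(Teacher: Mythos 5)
Your proposal is correct and matches the paper's own derivation exactly: the paper obtains Corollary~\ref{improvement} by taking the decomposition $Q$ and $H$-partition from Corollary~\ref{partdec}, applying the preceding labeling corollary (which rests on Theorem~\ref{arbc}), and substituting $\epsilon = 1/k$, precisely as you do. Your added check that $\epsilon = 1/k$ is admissible in Theorem~\ref{arbc} is a sound (if implicit in the paper) observation, and your accounting of the dominant $O(3^k \cdot k^2)$ term agrees with the paper.
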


\section{Decompositions with a smaller number of labels} \label{sc:betterlabels}
When $k$ is small the logarithmic factor in the number of labels ($O(n^{2/k} \log n)$) of the network decomposition $Q$ from Corollary \ref{improvement} is almost negligible. However, for large $k$ (e.g., $k = \Omega(\log n)$) this logarithmic factor becomes dominant. In this section we describe a modification of our algorithm that produces $(exp\{O(k)\}, O(n^{1/k}))$-network-decomposition in $exp \{O(k)\} \cdot \log^{2/3} n$ time. (For graphs of girth at least 6 the running time of this algorithm is even better, specifically $exp\{O(k)\} \cdot exp \{O(\sqrt{\log \log n } ) \}$.)
Observe that for $k = \Omega(\log \log n)$, the overhead factor of $\log^{2/3} n$ can be swallowed by the $O$-notation in $exp\{O(k)\}$. This version of our algorithm is closely related to the deterministic algorithm of Awerbuch et al. \cite{AGLP89}; in fact, our algorithms in this section can be viewed as a randomized version of their algorithm. Their deterministic algorithm requires time $(\log n)^{O(k)}$, and so we essentially show here that their algorithm can be made faster by means of randomization.

The difference between the new variant of our algorithm (which we introduce here; we will refer to it as Procedure RS-Decompose) and the original version of our algorithm (described in Section \ref{sc:decompose}) is a different algorithm for Procedure Partition. (See Algorithm \ref{proced:partitin}.) The new variant of Procedure Partition which we will next describe will be called Procedure RS-Partition. (RS stands for the acronym of "ruling set".) 

In a graph $G = (V,E)$ for a vertex set $U \subseteq V$ and positive integer parameters $r, \delta$ a subset $W \subseteq U$ is called an {\em $(r, \delta)$-ruling set} for $U$ if the following two conditions hold: \\
(a) Every pair of distinct vertices $w,w' \in W$ satisfy $\mbox{dist}_G(w,w') \geq \delta$.\\
(b) For every vertex $u \in U$ there exists a "ruling vertex" (also called "ruler") $w \in W$ such that $\mbox{dist}_G(w,u) \leq r$.\\

Observe that an MIS is a $(2,1)$-ruling set. In the description of Procedure RS-Partition we will assume that we have an efficient distributed subroutine for computing $(r, \delta)$-ruling sets for $r = 3$ and $\delta = O(1)$. We will later elaborate on this subroutine.
Procedure RS-Partition starts with computing a $(3, \delta)$-ruling set $W$ for the set $U = \{u \in V \ | \ \deg(u) \geq q \}$ of high-degree vertices of $G$. (Recall that $q$ is an input parameter of Procedure RS-Partition.) Then every vertex $w \in W$ sends an exploration message to distance $\delta$. Every vertex $v \in V$ that receives an exploration message from two distinct rulers $w',w'' \in W$ assigns himself to the ruler $w$ which is closer to it. (Ties are broken in an arbitrary but consistent manner by comparing rulers' identities.)

As a result of these explorations clusters $\{ C_w \ | \ w \in W\}$ are formed. Observe that these clusters all have strong radius at most $\delta$, and that every $u \in U$ (i.e., every high-degree vertex) is assigned to some cluster. (This collection of clusters is often called a {\em ruling forest}. See, e.g., \cite{AGLP89}.) Procedure RS-Partition now forms the set ${\cal B}$ of supernodes by contracting these clusters $G_w$, exactly as in line 13 of Algorithm \ref{proced:partitin}. Further, it creates the set $A$ by setting $A \leftarrow V \setminus (\cup_{w \in W} C_w)$, i.e., every vertex $v$ which is not clustered is assigned to $A$. Observe that for every $v \in A$, it holds that $\deg(v) \leq q$. Finally, Procedure RS-Partition returns the pair $(A, {\cal B})$, exactly as in line 14 of Algorithm \ref{proced:partitin}.
\begin{lem} \label{rprulin}
Suppose that Procedure RS-Partition is invoked on a graph $G = (V,E)$ and a positive parameter $q$. Suppose further that it uses a subroutine for computing a $(3, \delta)$-ruling set, for a positive integer parameter $\delta$. The the subgraph $G(A)$ has maximum degree smaller $q$. Moreover, ${\cal B}$ consists of at most $|V|/q$ supernodes, each of which is a cluster of strong diameter at most $2 \delta$.
\end{lem}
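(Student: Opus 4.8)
The plan is to establish the three assertions separately, using the two defining properties of the computed $(3,\delta)$-ruling set $W$ together with the regime $\delta\geq 3=r$, in which the exploration reaches every vertex that the ruling set is meant to cover. At the outset I would note that the supernodes of ${\cal B}$ are in bijection with the rulers: each $w\in W$ is the leader of its own cluster $C_w$ (indeed $w\in C_w$, being at distance $0$ from itself), while every non-ruler vertex merely joins an already-existing cluster and creates no new supernode. Hence $|{\cal B}|=|W|$, which reduces the second assertion to a bound on $|W|$.

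First I would bound the maximum degree of $G(A)$. Since $A=V\setminus(\cup_{w\in W}C_w)$, it suffices to show that no high-degree vertex survives in $A$: by property (b), each $u\in U=\{u\mid\deg(u)\geq q\}$ has a ruler $w$ with $\mbox{dist}_G(w,u)\leq 3\leq\delta$, so $u$ receives $w$'s exploration message and is assigned to a cluster. Thus every $v\in A$ has $\deg_{G(A)}(v)\leq\deg_G(v)<q$. For the count $|W|$, I would use property (a): distinct rulers are at pairwise distance $\geq\delta\geq 3$, so their closed neighborhoods $\Gamma^+(w)$ are pairwise disjoint; since $W\subseteq U$ gives $|\Gamma^+(w)|=\deg_G(w)+1>q$, disjointness yields $|W|\cdot q<\sum_{w\in W}|\Gamma^+(w)|\leq|V|$, i.e.\ $|{\cal B}|=|W|<|V|/q$.

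The hard part will be the strong-diameter claim, which is the only step sensitive to the Voronoi-style tie-breaking of the assignment rule. My plan is to prove that each cluster $C_w$ has strong radius at most $\delta$ about its leader $w$; the bound $2\delta$ on the strong diameter then follows by routing any two members of $C_w$ through $w$. Fixing $v\in C_w$ with $d:=\mbox{dist}_G(v,w)\leq\delta$ and a shortest path $v=v_0,v_1,\dots,v_d=w$, I would show by contradiction that every intermediate $v_i$ also belongs to $C_w$, so that the path stays inside $G(C_w)$ and certifies $\mbox{dist}_{G(C_w)}(v,w)\leq\delta$. Each $v_i$ lies within distance $d-i\leq\delta$ of $w$ and hence receives $w$'s message; were $v_i$ assigned to a different ruler $w'$, then either $\mbox{dist}_G(v_i,w')<\mbox{dist}_G(v_i,w)$, or these are equal and $Id(w')<Id(w)$, and in either case the triangle inequality $\mbox{dist}_G(v,w')\leq i+\mbox{dist}_G(v_i,w')\leq d=\mbox{dist}_G(v,w)$ would make $v$ prefer $w'$ over $w$ under the same rule --- contradicting $v\in C_w$. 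The point I expect to require the most care is handling the equidistant case consistently with the identity-based tie-break, so that the preference of $v_i$ for $w'$ genuinely propagates back to $v$. With strong radius $\delta$ established for every cluster, each supernode of ${\cal B}$ is a cluster of strong diameter at most $2\delta$, which together with the two bounds above completes the proof.
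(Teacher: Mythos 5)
Your proposal is correct and takes essentially the same route as the paper: the bound $|{\cal B}|\leq |V|/q$ is derived, exactly as in the paper's proof, from the pairwise separation $\geq 3$ of rulers together with $\deg_G(w)\geq q$ (the paper counts the disjoint clusters $C_w$, which contain all neighbors of their rulers, while you count the disjoint closed neighborhoods $\Gamma^+(w)$ directly---the same cardinality argument), and the degree bound on $G(A)$ follows in both cases because every high-degree vertex is dominated within the exploration radius and hence clustered. Your shortest-path argument with consistent identity-based tie-breaking for the strong radius $\delta$ is precisely the argument the paper gives in its proof of Lemma \ref{invokepartition} for the generalized Procedure Sep-Partition (for RS-Partition itself the paper merely asserts the strong-radius claim in the discussion preceding the lemma), and your regime assumption $\delta\geq 3$ harmlessly resolves the paper's inconsistent $(r,\delta)$ notation, whose actual usage (e.g., the $(3,4)$- and $(3,6)$-ruling sets obtained via $G^2$) makes $3$ the separation and $\delta$ the domination/exploration radius, so that the coverage of $U$ and the neighborhood disjointness you need hold automatically.
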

\begin{proof}
All the assertions of the lemma were already argued in the preceding discussion, except for the claim that $|{\cal B}| \leq |V|/q$. We next show this claim. Recall that every supernode of ${\cal B}$ originated from a cluster $C_w$, $w \in W$, where $W$ is a $(3, \delta)$-ruling set for the set $U$ of vertices with degree at least $q$. Hence for two distinct clusters $C_w,C_{w'}$ from the collection ${\cal B} = \{C_w \ | \ w \in W \}$, it holds that $\mbox{dist}_G(w,w') \geq 3$, and $\deg(w), \deg(w') \geq q$. All (immediate) neighbors of $w$ (respectively, $w'$) are assigned to the cluster $C_w$ (resp., $C_{w'}$), and these sets of neighbors are disjoint. Hence $|C_w| \geq q$ for every $w \in W$, and $|{\cal B}| \leq |V|/q $.
\end{proof}

We now use Procedure RS-Partition instead of Procedure Partition within Procedure RS-Decompose. The diameter of clusters in the modified procedure becomes $(2\delta + 1)^k$ instead of $3^k$, but the factor $\log n$ is shaved from the bound on arboricity.
 (This is because the bound on $\deg(A)$ for A returned by Procedure RS-Partition is $q$ instead of $O(q \cdot \log n)$.
Hence as a result we obtain a $((2\delta + 1)^k, O(n^{1/k}))$-network-decomposition $Q$ and an $H$-partition $(S_1,S_2,...,S_k)$ of degree $O(n^{1/k})$ of length $k$ for ${\cal G}(Q)$. (See Corollary \ref{partdec} for a comparison.)

To analyze the running time we need to specify the black-box procedure for computing a $(3, \delta)$-ruling set $W$ for the set $U$ of high degree vertices. Barenboim et al. \cite{BEPS12} (based on \cite{KP12} and \cite{BKP14}) showed that $(2,2)$-ruling sets can be computed in $O(\log^{2/3} \Delta) + exp\{O(\sqrt{\log \log n } )\}$ randomized time in general graphs, and that $(2,3)$-ruling sets can be computed in graphs with girth at least $6$ in just $exp\{O(\sqrt { \log \log n} ) \}$ time.
By running their routine in $G^2$ we guarantee that any two distinct vertices $w,w' \in W$ are at distance at least $2$ in $G^2$, i.e., at distance at least $3$ in $G$. On the other hand, the domination parameter grows by a factor of $2$, i.e., we obtain a $(3,4)$-ruling set in $O(\log^{2/3} \Delta) + exp\{O(\sqrt{\log \log n } )\}$ time in general graphs, 
and a $(3,6)$-ruling set in $exp\{O(\sqrt{\log \log n})\}$ time in graphs of girth at least $6$. 
Hence the running time of Procedure RS-Partition becomes now $O(\log^{2/3} n)$ for general graphs and $exp\{O(\sqrt{\log \log n})\}$ for graphs of girth at least $6$, instead of the running time of $O(1)$ for Procedure Partition. (See Lemma \ref{prptimecns}.) Hence by Lemma \ref{dectime}, the overall running time of Procedure RS-Decompose becomes $O((2 \delta + 1)^k \cdot \log^{2/3} n)$ for general graphs, and $O((2 \delta + 1)^k \cdot exp\{O(\sqrt{\log \log n})\}$ for graphs of girth at least $6$.
In the former case $\delta = 4$, while in the latter it is $6$. To summarize, we have proved the following theorem.
\begin{thm} \label{rsdecom}
Procedure RS-Decompose invoked on an $n$-vertex graph $G = (V,E)$ with a parameter $k = 1,2,...,$ computes an $((O(1))^k, n^{1/k})$-network-decomposition $Q$ and an $H$-partition $(S_1,S_2,...,S_k)$ of degree $A = O(n^{1/k})$ of length $k$ for ${\cal G}(Q)$ in $(O(1))^k \log^{2/3} n$ randomized time for general graphs, and in $(O(1))^k \cdot exp \{O(\sqrt{\log \log n }) \}$ time in graphs of girth at least $6$.
\end{thm}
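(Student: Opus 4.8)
The plan is to observe that Procedure RS-Decompose has exactly the same recursive skeleton as Procedure Decompose, with only Procedure Partition replaced by Procedure RS-Partition, so that most of the earlier analysis transfers verbatim after two parameter substitutions. Concretely, Lemma~\ref{decomposelevl} (recursion depth $k$), the $H$-partition structure of Lemmas~\ref{sets} and~\ref{partitions}, and the running-time recurrence of Lemma~\ref{dectime} depend only on the recursive control flow and on the fact that the partition subroutine returns a low-degree part $A$ together with at most $|V|/q$ supernodes; none of these facts change. The two quantities that \emph{do} change are (i) the strong diameter of the clusters produced at each level, which is now $2\delta$ rather than $2$ by Lemma~\ref{rprulin}, and (ii) the maximum degree of $G(A)$, which is now strictly below $q = n^{1/k}$ rather than $O(q\log n)$, again by Lemma~\ref{rprulin}. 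I would therefore isolate these two substitutions and re-run only the two lemmas that actually use them.

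For the diameter, I would repeat the induction of Lemma~\ref{dec} with the cluster-diameter constant $2$ replaced by $2\delta$. A level-$\ell$ cluster is a contraction of at most $t$ next-level clusters arranged along a path with $t-1$ connecting edges, each of strong diameter at most $2\delta$, so the telescoping estimate becomes $\mbox{dist}_{\hat{G}}(x,y) \le 2\delta\cdot t + (t-1) = (2\delta+1)t - 1$, and with $t \le (2\delta+1)^{\ell-2}$ one obtains diameter at most $(2\delta+1)^{\ell-1}-1$. Since $\delta$ is the constant $4$ (general graphs) or $6$ (girth $\ge 6$), this is $(O(1))^k$, giving the first parameter. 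For the second parameter I would re-run Lemma~\ref{partitions} with the improved degree bound: since RS-Partition guarantees $\deg_{G(A)} < q$, every class $S_i$ has degree at most $q = n^{1/k}$ in ${\cal G}(\cup_{j\ge i} S_j)$, so $(S_1,\dots,S_k)$ is an $H$-partition of degree $O(n^{1/k})$, with the $\log n$ factor of Corollary~\ref{partdec} removed. An $H$-partition of degree $A$ orients the edges acyclically toward higher index with out-degree at most $A$, certifying degeneracy and arboricity $O(A)$ and hence $O(A)$-colorability of ${\cal G}(Q)$; thus $Q$ is an $((O(1))^k, n^{1/k})$-network-decomposition as claimed.

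For the running time I would invoke the recurrence of Lemma~\ref{dectime}, whose derivation only uses that one round on a level-$i$ supergraph is simulated in time proportional to the level-$i$ supernode diameter; substituting the new diameter $(2\delta+1)^i$ and using the simplified $O(1)$-time variant of Dec-Small from Section~\ref{sc:refine}, the total time is $O((2\delta+1)^k \cdot T_{\mathrm{RS\text{-}part}}(n, n^{1/k}))$. It remains to bound $T_{\mathrm{RS\text{-}part}}$, which is dominated by one call to the $(3,\delta)$-ruling-set subroutine. Here I would run the $(2,\cdot)$-ruling-set algorithm of~\cite{BEPS12} on $G^2$: two vertices nonadjacent in $G^2$ are at distance at least $3$ in $G$, so condition~(a) with $r=3$ holds, while the domination radius merely doubles, yielding a $(3,4)$-ruling set in general graphs and a $(3,6)$-ruling set in girth-$\ge 6$ graphs. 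Since $\Delta(G^2) \le n$, we have $\log^{2/3}\Delta(G^2) = O(\log^{2/3} n)$, so this step costs $O(\log^{2/3} n)$ in general and $\exp\{O(\sqrt{\log\log n})\}$ for girth $\ge 6$ (each $G^2$-round emulated by two $G$-rounds). Multiplying by the $(O(1))^k$ simulation overhead gives the two claimed running times.

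I expect the running-time bookkeeping to be the main obstacle — specifically, confirming that the three overheads compose correctly: the $(2\delta+1)^k$ factor from nesting supernodes $k$ levels deep, the factor-$2$ (and $\Delta \to \Delta^2$) cost of emulating one $G^2$-round, and the fact that the additive $\exp\{O(\sqrt{\log\log n})\}$ term in the \cite{BEPS12} bound is swallowed by $O(\log^{2/3} n)$ in the general case but is the dominant term in the girth case. The structural parts (diameter and degree) are routine re-runs of Lemmas~\ref{dec} and~\ref{partitions} once the constants $2 \to 2\delta$ and $O(q\log n) \to q$ have been substituted.
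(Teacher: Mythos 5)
Your proposal is correct and follows essentially the same route as the paper's own proof: replace Procedure Partition by Procedure RS-Partition, re-run the diameter induction of Lemma~\ref{dec} with the per-level cluster diameter $2$ replaced by $2\delta$ (giving $(2\delta+1)^{k}$), re-run Lemma~\ref{partitions} with the improved degree bound $q = n^{1/k}$ to shave the $\log n$ factor, and plug the ruling-set routines of \cite{BEPS12} (run on $G^2$, yielding $(3,4)$- and $(3,6)$-ruling sets, i.e.\ $\delta = 4$ for general graphs and $\delta = 6$ for girth at least $6$) into the time recurrence of Lemma~\ref{dectime}. The bookkeeping you flag as the main obstacle is resolved in the paper exactly as you describe, including the absorption of the additive $\exp\{O(\sqrt{\log\log n})\}$ term into $O(\log^{2/3} n)$ in the general case.
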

See Corollary \ref{partdec} for the comparison between the result here and the result that we have for the original variant of our algorithm.

Also in a way analogous to Corollary \ref{improvement}, Theorem \ref{rsdecom} implies that we can also compute a labeling for the network-decomposition $Q$. The time required to compute an $O(A \cdot n^{1/k})$-coloring for ${\cal G}(Q)$ given an $H$-partition as above is, by Theorem \ref{arbc}, at most $O(Diam(Q) \cdot k^2) = (O(1))^k = exp\{O(k)\}$. The number of labels (colors) is $O(A \cdot n^{1/k}) = O(n^{2/k})$. We summarize the properties of the network-decomposition $Q$ in the next corollary.
\begin{col}
An $(exp \{O(k) \}, n^{1/k})$-network-decomposition $Q$ along with an $O(n^{2/k})$-labeling for it can be computed in $exp\{O(k)\} \cdot \log^{2/3} n$ (respectively, $exp\{O(k)\} \cdot O(\sqrt{\log \log n})\}$ ) randomized time in general graphs (resp., in graphs of girth at least $6$).
\end{col}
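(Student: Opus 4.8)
The plan is to combine Theorem \ref{rsdecom}, which already produces the decomposition together with the requisite $H$-partition, with the coloring procedure of Theorem \ref{arbc}, which converts that $H$-partition into a labeling. No new combinatorial idea is needed: the entire content is an accounting of the running time, and in particular a verification that the cost of computing the labeling is absorbed into the cost of computing the decomposition itself. This mirrors exactly the passage that derives Corollary \ref{improvement} from Theorem \ref{arbc} in the previous section.

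First I would invoke Theorem \ref{rsdecom} on the input graph $G$ with the given parameter $k$. This yields, with high probability, an $(exp\{O(k)\}, n^{1/k})$-network-decomposition $Q$ and, simultaneously, an $H$-partition $(S_1, S_2, \ldots, S_k)$ of the induced supergraph ${\cal G}(Q)$ whose degree is $A = O(n^{1/k})$. The running time of this first phase is $exp\{O(k)\} \cdot \log^{2/3} n$ for general graphs and $exp\{O(k)\} \cdot exp\{O(\sqrt{\log\log n})\}$ for graphs of girth at least $6$. Next, to produce the labeling, I would apply Theorem \ref{arbc} to the supergraph ${\cal G}(Q)$ with this $H$-partition and with $\epsilon := 1/k$. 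Since the $H$-partition has degree $A = O(n^{1/k})$, Theorem \ref{arbc} computes an $(A \cdot n^{1/k}) = O(n^{2/k})$-coloring of ${\cal G}(Q)$ in $O(k/\epsilon) = O(k^2)$ rounds measured in the supergraph. This coloring is exactly a proper labeling of $Q$, since each vertex of $G$ simply inherits the color of the cluster of $Q$ that contains it.

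The only point requiring care is that Theorem \ref{arbc} counts rounds of communication in the graph it is handed, whereas here that graph is the supergraph ${\cal G}(Q)$, whose supernodes are clusters of $G$ of diameter $Diam(Q) = exp\{O(k)\}$. Hence each round of ${\cal G}(Q)$ must be simulated on $G$ at a cost of $O(Diam(Q)) = exp\{O(k)\}$ rounds, so the labeling phase costs $O(k^2) \cdot exp\{O(k)\} = exp\{O(k)\}$ rounds of $G$, exactly as recorded in the discussion preceding the statement and governed by Lemma \ref{dectime}. Adding the two phases gives a total of $exp\{O(k)\} \cdot \log^{2/3} n + exp\{O(k)\}$ for general graphs, and $exp\{O(k)\} \cdot exp\{O(\sqrt{\log\log n})\} + exp\{O(k)\}$ for graphs of girth at least $6$; in both cases the additive $exp\{O(k)\}$ contributed by the labeling is dominated by the multiplicative decomposition bound and is swallowed by the $O$-notation, yielding the claimed running times. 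I do not anticipate any genuine obstacle here: the sole delicate step is the round-simulation accounting just described, and it is entirely analogous to the bookkeeping already performed for Corollary \ref{improvement}.
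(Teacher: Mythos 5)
Your proposal is correct and follows essentially the same route as the paper: invoke Theorem \ref{rsdecom} to obtain $Q$ together with the $H$-partition of degree $A = O(n^{1/k})$, then apply Theorem \ref{arbc} with $\epsilon = 1/k$ to ${\cal G}(Q)$, paying a simulation factor of $Diam(Q)$ so that the labeling phase costs $O(Diam(Q) \cdot k^2) = exp\{O(k)\}$, which is absorbed by the decomposition's running time. The round-simulation bookkeeping you flag as the delicate step is exactly the accounting the paper performs in the paragraph preceding the corollary.
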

Observe that randomization was used by the modified variant of Procedure Decompose only for computing a ruling set and for computing the labeling. There is a deterministic algorithm for computing $(3, O(\log n))$-ruling sets in $O(\log n)$ time due to \cite{AGLP89}. If we plug it in the above algorithm the diameter of $Q$ grows from $(O(1))^k$ to $O(\log n)^{k-1})$, and consequently, the running grows to $O((\log n)^{k-1})$ as well. (The most time-consuming step involves computing a $(3, O(\log n))$-ruling set in the last phase of the algorithm, i.e., in a supergraph in which each cluster has diameter $(O(\log n))^{k - 2}$. This requires $(O(\log n))^{k-1}$ time.)
Hence we obtain the following result, which is a generalization of the network decomposition of \cite{AGLP89}. (They arrived to the same result with $k = \sqrt{\log n \log \log n}$, i.e., they obtained an $(exp \{ O (\sqrt{\log n \log \log n } ) \}, exp \{ O(\sqrt{\log n \log \log n} ) \}$)-network-decomposition.)
\begin{col} \label{polylgdecom}
An $(((O(\log n))^{k - 1}, n^{1/k})$-network-decomposition $Q$ along with  an $H$-partition $(S_1,S_2,...,S_k)$ of degree $A = O(n^{1/k})$ of length $k$ for ${\cal G}(Q)$ can be computed in deterministic $(O(\log n))^{k - 1}$ time in general graphs.
\end{col}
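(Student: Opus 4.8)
The plan is to reuse Procedure RS-Decompose from Theorem \ref{rsdecom} essentially verbatim, changing only the black-box subroutine it calls for ruling sets: in place of the randomized $(3,O(1))$-ruling set routines of \cite{BEPS12} we plug in the deterministic $(3,O(\log n))$-ruling set algorithm of \cite{AGLP89}, which runs in $O(\log n)$ time. The one observation that makes full determinism possible is that the corollary asks only for the partition $Q$ and the witnessing $H$-partition $(S_1,\ldots,S_k)$, and \emph{not} for an explicit labeling. Since randomization entered RS-Decompose only through the ruling-set computation and the labeling step, and the degree-$O(n^{1/k})$ $H$-partition already certifies that ${\cal G}(Q)$ is $O(n^{1/k})$-colorable, we may simply omit the coloring step; derandomizing the ruling set is then the whole of the task. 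With the subroutine fixed, the remaining work is to re-derive the three parameters (degree of the $H$-partition, cluster diameter, running time) with domination parameter $\delta = O(\log n)$ in place of $\delta = O(1)$.

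First I would dispose of the degree bound. By Lemma \ref{rprulin}, Procedure RS-Partition invoked with $q = n^{1/k}$ and any $(3,\delta)$-ruling set subroutine returns a set $A$ with $\deg_{G(A)}(v) < q = n^{1/k}$ and a collection ${\cal B}$ of at most $|V|/q$ supernodes; crucially this bound does not depend on $\delta$. Hence the $H$-partition analysis of Lemmas \ref{sets}--\ref{partitions} carries over unchanged and produces a length-$k$ $H$-partition $(S_1,\ldots,S_k)$ of degree $O(n^{1/k})$ for ${\cal G}(Q)$. This is precisely the point at which the extraneous logarithmic factor is eliminated: the randomized Procedure Partition of Section \ref{sc:decompose} only guaranteed $\deg(A) = O(q\log n)$, whereas Lemma \ref{rprulin} gives $\deg(A) < q$, improving the second parameter from $O(n^{1/k}\log n)$ to $O(n^{1/k})$.

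Next I would redo the diameter bound, which is the step governing the exponent. This is a verbatim re-run of the induction in Lemma \ref{dec}, except that each ruling-forest cluster now has strong diameter at most $2\delta$ (Lemma \ref{rprulin}) rather than $2$. Tracing the same argument, a path of $t$ supernodes at a given level contributes distance at most $(2\delta+1)t - 1$ in $\hat{G}$, so each recursion level multiplies the attainable diameter by a factor of $2\delta+1 = O(\log n)$. Clustering occurs only in the $k-1$ non-terminating levels (level $k$ hits the termination condition of Algorithm \ref{proced:decompose} and yields diameter-$0$ clusters), so the largest clusters of $Q$ have diameter $(O(\log n))^{k-1}$, as claimed.

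Finally, for the running time I would apply the recurrence of Lemma \ref{dectime}, now reading $T_{part}(n,n^{1/k}) = O(\log n)$ for a single deterministic ruling-set computation and taking the per-level simulation overhead to be the current supernode diameter rather than $3^i$. Supernodes at recursion level $i$ have diameter $(O(\log n))^{i-1}$ in the original graph, so a single supergraph round there costs $(O(\log n))^{i-1}$ real rounds and the ruling set at level $i$ costs $O(\log n)\cdot(O(\log n))^{i-1} = (O(\log n))^{i}$ rounds; summing over $i=1,\ldots,k-1$ the last clustering phase dominates and the total is $(O(\log n))^{k-1}$. Assembling the degree, diameter, and time bounds exactly as in Corollary \ref{partdec} yields the stated $((O(\log n))^{k-1}, n^{1/k})$-network-decomposition together with its degree-$O(n^{1/k})$, length-$k$ $H$-partition, computed deterministically in $(O(\log n))^{k-1}$ time. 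I expect the only delicate point to be the bookkeeping in the diameter induction — making sure the geometric factor is $2\delta+1$ and that the exponent is $k-1$ rather than $k$, which is exactly the statement that the deepest recursion level is a direct (zero-diameter) coloring step; everything else is a routine substitution of $\delta=O(\log n)$ into lemmas already proved.
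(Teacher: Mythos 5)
Your proposal is correct and takes essentially the same route as the paper: the paper also obtains Corollary \ref{polylgdecom} by plugging the deterministic $(3, O(\log n))$-ruling-set algorithm of \cite{AGLP89} into Procedure RS-Decompose, observing that randomization entered only through the ruling sets and the (here unneeded) labeling, so that the diameter and running time both become $(O(\log n))^{k-1}$, dominated by the last ruling-set computation on a supergraph whose clusters have diameter $(O(\log n))^{k-2}$. Your explicit re-derivations of the degree, diameter, and time bounds merely fill in details the paper's terse argument leaves implicit.
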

This also gives rise to a construction of $O((\log n)^{k-1})$-spanner with $O(n^{1+1/k})$ edges, in deterministic $O((\log n)^{k-1})$ time, in the CONGEST model. This is achieved by adding one edge for every pair of adjacent clusters of the decomposition of Corollary \ref{polylgdecom}. By setting $k =  \frac{\log n}{c \log \log n}$, for a constant $c>1$, one can get $O(n^{1/c})$ time and $O(n \cdot \log^{c} n)$ edges. In particular, this results in a sparse skeleton (with $n \cdot \mbox{polylog}(n)$ edges), in time $O(n^{\epsilon})$, for an arbitrarily small constant $\epsilon > 0$, in the deterministic CONGEST model.

Using the $H$-decomposition of ${\cal G}(Q)$ from Corollary \ref{polylgdecom} an $O(A^2) = O(n^{2/k})$-labeling for it (i.e., $O(A^2)$-coloring) for ${\cal G}(Q)$ can be computed by Algorithm Arb-Linial within additional $O(Diam(Q) \cdot \log^* n) = (O(\log n))^{k-1} \cdot \log^* n$ deterministic time. (This is another point in which this deterministic routine is different from that of \cite{AGLP89}. To compute the coloring Awerbuch et al. \cite{AGLP89} used here $O(Diam(Q) \cdot A \cdot \log n) = O(n^{1/k} \log^k n)$ time, but the number of colors was only $O(n^{1/k})$ instead of $O(n^{2/k})$. Since we insist on having a deterministic polylogarithmic time, this modification is crucial.)
\begin{col} \label{polylogdecom}
For any positive integer $k$, an $((O(\log n))^{k-1}, n^{1/k})$-network-decomposition $Q$ along with an $O(n^{2/k})$-labeling of it can be computed in $(O(\log n))^{k - 1} \cdot \log^* n$ deterministic time.
\end{col}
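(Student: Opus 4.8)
The plan is to derive this corollary as a direct composition of the decomposition produced in Corollary~\ref{polylgdecom} with the deterministic Arb-Linial coloring procedure described in Section~\ref{sc:refine}. First I would invoke Corollary~\ref{polylgdecom} on the input graph $G$ with the given parameter $k$. This yields, in deterministic $(O(\log n))^{k-1}$ time, an $((O(\log n))^{k-1}, n^{1/k})$-network-decomposition $Q$ together with an $H$-partition $(S_1,S_2,\ldots,S_k)$ of degree $A = O(n^{1/k})$ and length $k$ for the induced supergraph ${\cal G}(Q)$. At this point the only remaining task is to equip $Q$ with a proper $O(n^{2/k})$-labeling, i.e.\ to properly color the vertices of ${\cal G}(Q)$ with $O(n^{2/k})$ colors.

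To produce the labeling I would run the deterministic Arb-Linial coloring from Section~\ref{sc:refine} on ${\cal G}(Q)$, feeding it the $H$-partition as orientation data. Recall that orienting each edge of ${\cal G}(Q)$ from the lower $H$-index endpoint to the higher one (ties broken by $Id$) makes every supernode have at most $A$ parents. Starting from the trivial $n$-coloring given by the $Id$'s, one round of the $A$-union-free recoloring step (using the set systems of \cite{EFF85}) turns a proper $p$-coloring into a proper $O(A^2 \log p)$-coloring, with legality preserved because a recolored vertex always lands in a set disjoint from the sets of its parents. Iterating this recoloring for $t = \log^* n$ rounds therefore yields an $O(A^2 \log^{(\log^* n)} n) = O(A^2) = O(n^{2/k})$-coloring of ${\cal G}(Q)$, since $\log^{(\log^* n)} n = O(1)$.

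It then remains to account for the running time of the coloring phase. Each round of Arb-Linial is an algorithm on the \emph{supergraph} ${\cal G}(Q)$, whose supernodes are clusters of $G$ of diameter at most $Diam(Q) = (O(\log n))^{k-1}$; a single round of communication across ${\cal G}(Q)$ is simulated within $O(Diam(Q))$ rounds of $G$ by routing messages through the cluster leaders. Consequently the $\log^* n$ recoloring rounds cost $O(Diam(Q) \cdot \log^* n) = (O(\log n))^{k-1} \cdot \log^* n$ time in $G$. Adding the $(O(\log n))^{k-1}$ cost of the first step leaves the total at $(O(\log n))^{k-1} \cdot \log^* n$, as claimed, and the entire procedure is deterministic.

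The conceptual content is already carried by Corollary~\ref{polylgdecom} and by the Arb-Linial analysis of Section~\ref{sc:refine}, so the proof is essentially an assembly. The one point I expect to require care is the simulation bookkeeping in the last step: one must verify that running a per-vertex algorithm on ${\cal G}(Q)$ --- where a ``vertex'' is a cluster of large diameter --- incurs exactly the multiplicative overhead $Diam(Q)$ and no more, and that the $H$-partition data living on the supernodes can be gathered and used by the cluster leaders within that same budget. I would also double-check the truncation point $t = \log^* n$, confirming that $\log^{(\log^* n)} n$ is genuinely $O(1)$ so that no stray iterated-log factor sneaks into the final color count of $O(n^{2/k})$.
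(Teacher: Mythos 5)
Your proposal matches the paper's own derivation: it likewise invokes Corollary~\ref{polylgdecom} to obtain $Q$ and the $H$-partition of degree $A = O(n^{1/k})$, then runs the deterministic Arb-Linial recoloring of Section~\ref{sc:refine} on ${\cal G}(Q)$ for $\log^* n$ rounds to get an $O(A^2) = O(n^{2/k})$-labeling, with the $Diam(Q) = (O(\log n))^{k-1}$ simulation overhead per supergraph round giving the stated $(O(\log n))^{k-1} \cdot \log^* n$ deterministic time. The proof is correct and takes essentially the same route, including the truncation/termination bookkeeping for the union-free recoloring steps.
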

By running a $t$-round version of Algorithm Arb-Linial, for some positive integer constant $t$, one can also have here running time $O((\log n)^{k-1} \cdot t) = O((\log n)^{k-1})$, but the number of colors (labels) becomes $O(n^{2/k} \log^{(t)} n)$.

\section{Applications}
\subsection{An Approximation Algorithm for the Coloring Problem}
The results described in the previous sections (Theroem \ref{dlarge}; see also Corollary \ref{improvement}) imply an approximation algorithm for the optimization variant of the coloring problem. 
A distributed approximation algorithm for the graph coloring problem (based on an $(O(1),O(n^{1/2 + \epsilon}))$-network decomposition) was given in \cite{B12}. We describe here a generalization of that algorithm which works with any network-decomposition.
The algorithm starts by computing a $(3^k - 1, O(n^{1/k} \log n))$-network-decomposition $Q$ with an $O(n^{2/k} \log n)$-labeling $label(\cdot)$ for it. See Corollary \ref{improvement}. Then in each cluster $C$ the entire induced subgraph $G(C)$ is collected into the leader vertex $v_C$ of $C$. The leader vertex $v_C$ computes locally the optimum coloring $\varphi_C$ for $C$. Finally, $v_C$ broadcasts (a table representation of $\varphi_C$) to all vertices of $C$. Each vertex $u$ that receives this broadcast computes its final color $\psi(u)$ by $\psi(u) = \langle \varphi_C(u), label(u) \rangle$. The running time of this algorithm is the sum of the time required to compute the decomposition $Q$ (i.e., $O(3^k \cdot k^2)$) with the time required for the computation of the colorings $\varphi_C$. The latter is dominated by the diameter of $Q$, times a small constant. The overall running time is therefore $O(3^k \cdot k^2)$.

The next lemma shows that the coloring $\psi$ provides an $O(n^{2/k} \log n)$-approximation to the optimal coloring for $G$. 
\begin{lem} \label{appcol}
$\psi$ is a proper $O(n^{2/k} \log n \cdot \chi(G))$-coloring.
\end{lem}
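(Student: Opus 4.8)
The plan is to verify the two constituent claims of the lemma separately: that $\psi$ is a \emph{proper} coloring of $G$, and that it uses the asserted number of colors. Both follow directly from the two structural guarantees already established for the decomposition $Q$, namely that each cluster is an induced subgraph of $G$ and that $label(\cdot)$ is a proper coloring of the contracted supergraph ${\cal G}(Q)$ (with $O(n^{2/k}\log n)$ labels, by Corollary \ref{improvement}).

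First I would observe that for every cluster $C$ the induced subgraph $G(C)$ is a subgraph of $G$, so $\chi(G(C)) \le \chi(G)$; hence the optimal coloring $\varphi_C$ computed locally at the leader $v_C$ uses at most $\chi(G)$ colors. This controls the first coordinate of $\psi$. Note also that every vertex of a cluster $C$ receives the same label $label(C)$, so $label(\cdot)$ is constant within a cluster.

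Next, for properness, I would fix an edge $(u,v) \in E$ and split into two cases according to whether $u$ and $v$ lie in the same cluster. If $C_u = C_v = C$, then $u,v$ are adjacent in $G(C)$, so the proper coloring $\varphi_C$ gives $\varphi_C(u) \neq \varphi_C(v)$, and thus $\psi(u) = \langle \varphi_C(u), label(u)\rangle$ and $\psi(v) = \langle \varphi_C(v), label(v)\rangle$ already differ in their first coordinate. If $C_u \neq C_v$, then the edge $(u,v)$ witnesses that $C_u$ and $C_v$ are adjacent in ${\cal G}(Q)$; since $label(\cdot)$ is a proper coloring of ${\cal G}(Q)$ we get $label(u) = label(C_u) \neq label(C_v) = label(v)$, so $\psi(u)$ and $\psi(v)$ differ in their second coordinate. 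In either case $\psi(u) \neq \psi(v)$, so $\psi$ is proper.

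Finally, for the color count, the first coordinate ranges over at most $\chi(G)$ distinct values (by the first step) and the second coordinate ranges over $O(n^{2/k}\log n)$ distinct values (by the guarantee on $label$), so the number of pairs $\langle \varphi_C(u), label(u)\rangle$ actually used is at most $O(n^{2/k}\log n)\cdot \chi(G)$, as claimed. I do not expect a genuine obstacle here; the only point worth stating explicitly is that the cross-cluster case relies precisely on $label$ being a proper coloring of the \emph{contracted} supergraph rather than an arbitrary labeling, which is exactly the property furnished by the network-decomposition.
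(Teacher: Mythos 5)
Your proof is correct and follows essentially the same route as the paper's: the same two-case analysis on edges (intra-cluster edges handled by the optimality/properness of $\varphi_C$, inter-cluster edges by the properness of $label(\cdot)$ on the contracted supergraph ${\cal G}(Q)$), and the same product bound $\chi(G(C)) \le \chi(G)$ times the $O(n^{2/k}\log n)$ label count. No gaps; your write-up is in fact slightly more explicit than the paper's about why the two coordinates of $\psi$ suffice.
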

\begin{proof}
Consider an edge $(u,w) \in E$. If $u,w \in C$, for some cluster $C \in Q$, then $\varphi_C(u) \neq \varphi_C(w)$, and so $\psi(u) \neq \psi(w)$. Otherwise, let $C_u$ (respectively, $C_w$) be the cluster that contains $u$ (resp., $w$), and $C_u \neq C_w$. The clusters $C_u$ and $C_w$ are adjacent in ${\cal G} (Q)$, and thus $label(C_u) \neq label(C_w)$. Hence $label(u) \neq label(w)$, and so $\psi(u) \neq \psi(w)$. 

Note also that $\chi(G(C)) \leq \chi(G)$, for every vertex subset $C \subseteq V$. The coloring $\psi$ employs $\max_{C \in Q} \{\chi(G(C)) \} \cdot n^{2/k} \cdot \log n$ colors, i.e., $O(\chi(G) \cdot n^{2/k} \cdot \log n)$.
\end{proof}
We proved the following theorem:
\begin{thm} \label{mincol}
For any $n$-vertex graph $G = (V,E)$ and an integer parameter $k = 1,2,...$, an $O(n^{2/k} \log n)$-approximation of the optimal coloring for $G$ can be computed in $O(3^k \cdot k^2)$ time.
\end{thm}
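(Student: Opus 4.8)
The plan is to assemble the theorem from three pieces that are already in place: the network-decomposition construction of Corollary \ref{improvement}, a local exact-coloring step inside each cluster, and the color-combining argument of Lemma \ref{appcol}. First I would invoke Corollary \ref{improvement} to compute, in $O(3^k \cdot k^2)$ randomized time, a network-decomposition $Q$ of $G$ with strong cluster-diameter $O(3^k)$ together with a proper $O(n^{2/k} \log n)$-labeling $label(\cdot)$ of the supergraph ${\cal G}(Q)$. This single invocation supplies both the geometric structure (small-diameter clusters) and the coarse color separation across clusters that the rest of the argument rests upon.

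Next I would run the intra-cluster coloring. In each cluster $C$ the leader $v_C$ gathers the entire induced subgraph $G(C)$; since $C$ has strong diameter $O(3^k)$, both this gathering and the later broadcast of the computed coloring cost $O(3^k)$ rounds, using large messages and the free-local-computation assumption of the model. The leader then computes an optimal coloring $\varphi_C$ of $G(C)$ locally (the NP-hard step, permissible precisely because local computation is free), broadcasts it, and every vertex $u \in C$ sets $\psi(u) = \langle \varphi_C(u), label(u)\rangle$. I would then appeal directly to Lemma \ref{appcol}: adjacent vertices inside a common cluster are separated by the first coordinate $\varphi_C$, adjacent vertices in distinct clusters are separated by the second coordinate via the proper labeling of ${\cal G}(Q)$, so $\psi$ is proper, and it uses at most $\max_{C \in Q}\chi(G(C)) \cdot O(n^{2/k}\log n)$ colors. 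Since $\chi(G(C)) \le \chi(G)$ for every $C$, the palette size is $O(\chi(G) \cdot n^{2/k}\log n)$, and because $\chi(G)$ lower-bounds any proper coloring this is an $O(n^{2/k}\log n)$-approximation.

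Finally I would settle the running time by composition: the decomposition dominates at $O(3^k \cdot k^2)$ from Corollary \ref{improvement}, while the gather/local-solve/broadcast phase adds only $O(3^k)$, giving a total of $O(3^k \cdot k^2)$ as claimed. The only point demanding real care — rather than a genuine obstacle — is the running-time accounting for the intra-cluster phase, where one must confirm that collecting $G(C)$ into $v_C$ truly costs only $O(\mathrm{Diam}(C))$ rounds (which holds under the large-message convention) so that it is swallowed by the decomposition cost. Everything else follows immediately from Lemma \ref{appcol} and Corollary \ref{improvement}, so the theorem is a direct corollary of the preceding development.
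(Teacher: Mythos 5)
Your proposal is correct and follows essentially the same route as the paper: invoke Corollary \ref{improvement} to get the $(3^{k-1}-1, O(n^{1/k}\log n))$-network-decomposition with an $O(n^{2/k}\log n)$-labeling in $O(3^k \cdot k^2)$ time, solve the coloring problem exactly within each cluster via gather/local-compute/broadcast (costing only $O(\mathrm{Diam}(Q))$ extra rounds under the large-message, free-local-computation model), and combine via the pair $\psi(u) = \langle \varphi_C(u), label(u)\rangle$, with correctness and the approximation ratio given exactly by Lemma \ref{appcol}. The running-time accounting and the observation that $\chi(G(C)) \le \chi(G)$ match the paper's argument precisely.
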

In particular, by setting the parameter $k$ to be an arbitrarily large constant we can get a distributed $O(n^{\epsilon})$-approximation algorithm for the coloring problem with a {\em constant} running time, for an arbitrarily small constant $\epsilon > 0$. (The running time is $O(3^{\left \lceil 1/\epsilon \right \rceil} \cdot \frac{1}{\epsilon^2})$.)
This greatly improves the current state-of-the-art constant-time distributed approximation algorithm for the coloring problem due to \cite{B12}, which provides an approximation guarantee of $O(n^{1/2 + \epsilon})$. On the other hand, the dependence of the running time on $\epsilon$ is only $O(1/ \epsilon)$ in the result of \cite{B12}.

Note that the algorithm in Theorem \ref{mincol} requires very heavy (exponential in $n$) local computations and large messages. The heavy computations are inevitable, because unless $NP = P$, the coloring problem cannot be approximated up to a ratio of $n^{1 - \epsilon}$, for any constant $\epsilon > 0$ \cite{H96,FK98,Z07}.
\subsection{Coloring Triangle-Free Graphs and Graphs with Large Girth}
A result of Ajtai et al \cite{AKS80} shows that triangle-free $n$-vertex graphs $G$ admit an $O(\sqrt{n} / \sqrt{\log n})$-coloring. (This existential bound was shown to be tight by Kim \cite{K95}.) Here we show that one can construct an $O(n^{1/2 + \epsilon})$-coloring of triangle-free graphs in distributed randomized $O(1/\epsilon)$ time. Moreover, unlike our algorithm from the previous section, this algorithm uses only {\em short} messages and does not rely on heavy local computations.

The algorithm starts with invoking the algorithm from Corollary \ref{partdec} on its input $n$-vertex graph $G = (V,E)$ with the parameter $k = 2$. We obtain a $(2, O(n^{1/2} \log n))$-network decomposition $Q$ in $O(1)$ time. Moreover, the algorithm also constructs an $H$-partition $(S_1,S_2)$ of the vertex set $V$ of $G$ into two sets. The degree of this $H$-partition is $A = O(n^{1/2} \log n)$. The clusters in $S_1$ are singleton clusters. (Each such a cluster $C \in S_1$ contains a single vertex $v \in C$ such that $\deg(v) \leq  A$.) Each cluster $C \in S_2$ is a star rooted at a center vertex $r \in C$. Also, since the graph is triangle-free, neighbors of $r$ are not connected via edges one with another.

Centers of clusters of $S_2$ now toss a color for their cluster from $[A \cdot n^{\epsilon}]$. If a color tossed by the root $r$ of $C$ is different from the colors of clusters incident on $C$ in the supergraph ${\cal G} (Q)$, then $r$ stops. Otherwise it continues. Overall, as we have seen, after $O(1/ \epsilon)$ rounds, clusters of $S_2$ will be colored in $O(n^{1/2 + \epsilon} \log n)$ colors. (The communication between centers of adjacent clusters can be executed efficiently using short messages.
This requires some care. The root $r$ of each cluster informs all vertices of $C$ of its choice of color. Then each vertex of $C$ (including $r$) sends the root's color $c(r)$ over inter-cluster edges incident on them. Then every vertex $v \in C$ checks if one of its neighboring clusters chose a color $c$ equal to $c(r)$. If it is the case, then it informs $r$. In this case $r$ abandons its color (and informs all vertices of $C$ about it), and continues to the next round of the randomized coloring procedure.) Then clusters of $S_1$ toss colors for them from the same range. Since each cluster of $S_1$ has only $O(A)$ neighbors in $S_1 \cup S_2$, the coloring will be computed within additional $O(1/ \epsilon)$ rounds.
Finally within each cluster $C \in S_2$ actually two colors are used. (One for the center, and another for its neighbors.) Hence the overall number of colors is at most $2 \cdot A \cdot n^{\epsilon} = O(n^{1/2 + \epsilon} \log n)$. The factor $\log n$ can be swallowed by slightly increasing the $\epsilon$ in the exponent. To summarize:
\begin{thm} \label{colortrianglefree}
An $O(n^{1/2+ \epsilon})$-coloring of triangle-free $n$-vertex graph can be computed in $O(1/ \epsilon)$ distributed randomized time, using short messages and polynomially-bounded local computations.
\end{thm}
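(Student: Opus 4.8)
The plan is to obtain the coloring in two stages: first produce a network-decomposition together with a low-degree $H$-partition whose clusters have a particularly simple (star) shape, and then convert a coloring of the cluster-supergraph into a coloring of $G$ by exploiting triangle-freeness inside each cluster.

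First I would invoke Procedure Decompose with parameter $k=2$ (Corollary \ref{partdec}). Since $3^{k-1}-1 = 2$ and $n^{1/k}=n^{1/2}$, this yields in $O(3^k)=O(1)$ time a $(2, O(n^{1/2}\log n))$-network-decomposition $Q$ together with an $H$-partition $(S_1,S_2)$ of ${\cal G}(Q)$ of degree $A=O(n^{1/2}\log n)$. I would record two structural facts that follow directly from the construction: the clusters of $S_1$ are singletons consisting of the low-degree vertices returned by Procedure Partition, while each cluster $C\in S_2$ is a \emph{star}, i.e., a leader (center) $r$ together with a subset of its neighbors, because every non-leader vertex of a contracted cluster lies at distance exactly $1$ from its leader.

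Next I would color the supergraph ${\cal G}(Q)$. Applying Theorem \ref{arbc} to the $H$-partition $(S_1,S_2)$ with degree $A$ produces an $(A\cdot n^\epsilon)$-coloring $\chi$ of ${\cal G}(Q)$ in $O(k/\epsilon)=O(1/\epsilon)$ rounds on the supergraph; simulating one supergraph round costs $O(3^k)=O(1)$ rounds on $G$ because clusters have diameter at most $2$, so the total is $O(1/\epsilon)$. I would then lift $\chi$ to a coloring of $G$ by appending a single \emph{position bit} to each vertex. Here triangle-freeness is decisive: within a star cluster $C\in S_2$ the leaves are pairwise non-adjacent (two adjacent leaves would close a triangle with $r$), so every intra-cluster edge is a center--leaf edge. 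Assigning the center one position bit and all leaves the other, and declaring $\psi(v)=\langle \chi(C_v), b(v)\rangle$ where $b(v)$ is the position bit (with $b\equiv 0$ on singletons of $S_1$), makes $\psi$ proper: an intra-cluster edge separates center from leaf and hence has distinct position bits, while an inter-cluster edge joins clusters that are adjacent in ${\cal G}(Q)$ and therefore receive distinct values of $\chi$. The palette has size at most $2\cdot A\cdot n^\epsilon=O(n^{1/2+\epsilon}\log n)$, and the factor $\log n$ is absorbed by replacing $\epsilon$ with a marginally larger constant, giving $O(n^{1/2+\epsilon})$ colors.

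The main thing to verify carefully --- and the only place where the short-message requirement is non-trivial --- is that every communication step uses $O(\log n)$-bit messages and only light local computation. Colors lie in $[A\cdot n^\epsilon]\subseteq[\mathrm{poly}(n)]$, hence fit in $O(\log n)$ bits, and the randomized step of Theorem \ref{arbc} merely compares a vertex's tentative color against those of its neighbors. The subtlety is that adjacent cluster centers are not directly connected but communicate through cluster members at distance at most $2$: each center broadcasts its tentative color inside its star, each leaf forwards it across inter-cluster edges, and any conflict is reported back to the center, all with short messages. Because each cluster is finalized with only two colors rather than via an optimal local coloring, we avoid entirely the heavy local computation and large messages needed in the general minimum-coloring algorithm of Theorem \ref{mincol}; this is precisely what the triangle-free structure buys us.
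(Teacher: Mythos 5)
Your proposal is correct and follows essentially the same route as the paper: Corollary \ref{partdec} with $k=2$, the observation that $S_2$-clusters are stars whose leaves are pairwise non-adjacent by triangle-freeness, a randomized $(A\cdot n^\epsilon)$-coloring of the cluster supergraph (your black-box use of Theorem \ref{arbc} is exactly the procedure the paper spells out inline, coloring $S_2$ then $S_1$, with the same star-relay trick for short messages), and finally two colors per star --- your ``position bit'' --- giving $2A\cdot n^\epsilon$ colors with the $\log n$ factor absorbed into $\epsilon$. No gaps.
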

This result extends also to graphs with large girth. Specifically, consider a graph $G$ with girth greater than $g$, for some integer $g = 2k, k\geq 2$. The arboricity of $G$ is at most $n^{1/k}$. (See, e.g., \cite{B04}. Theorem 3.7.) By \cite{BE08}, an $H$-partition $S_1,S_2,...,S_{\ell}$, $\ell = O(1/\epsilon)$, of $G$ with degree $A = n^{1/k + \epsilon}$ can be computed in constant time, for an arbitrarily small $\epsilon > 0$. Hence, by Theorem \ref{arbc}, an $A \cdot n^{\epsilon}$-coloring of $G$ can be computed in $O(\ell/\epsilon) = O(1)$ time. By scaling $\epsilon$ we obtain the following result.
\begin{thm} \label{colorlargegirth}
For a graph $G$ with girth greater than $g = 2k, k \geq 2$, and an arbitrarily small constant $\epsilon > 0$, an $n^{1/k + \epsilon}$-coloring can be computed in constant distributed randomized time (specifically, $O(1/ \epsilon^2)$),  using short messages and polynomially-bounded local computations.
\end{thm}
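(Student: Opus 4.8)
The plan is to reduce the coloring of $G$ to the problem of coloring a graph equipped with a low-degree $H$-partition, and then to invoke Theorem \ref{arbc}. The starting point is structural: a graph whose girth exceeds $g = 2k$ is sparse, and in fact has arboricity at most $n^{1/k}$ (see, e.g., \cite{B04}, Theorem 3.7). Since arboricity (up to a constant factor) controls the degree of the $H$-partitions one can compute quickly, this is exactly the hypothesis needed to feed the machinery developed in the previous sections.

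First I would invoke the $H$-partition algorithm of \cite{BE08}: for a graph of arboricity $a(G) \leq n^{1/k}$ and any fixed $\epsilon > 0$, one can compute in constant time an $H$-partition $(S_1, S_2, \ldots, S_{\ell})$ of $G$ of length $\ell = O(1/\epsilon)$ and degree $A = n^{1/k + \epsilon}$. (The slack of the extra $n^{\epsilon}$ over the arboricity is precisely what keeps the number of sets a constant $O(1/\epsilon)$ rather than $O(\log n)$.) Next I would apply Theorem \ref{arbc} to this $H$-partition directly on $G$ itself---crucially, no supergraph is involved here, so there is no cluster-diameter blow-up---obtaining an $(A \cdot n^{\epsilon})$-coloring in $O(\ell/\epsilon)$ rounds. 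Substituting $A = n^{1/k+\epsilon}$ yields an $n^{1/k + 2\epsilon}$-coloring, and substituting $\ell = O(1/\epsilon)$ gives running time $O(1/\epsilon^2)$. Finally, replacing $2\epsilon$ by $\epsilon$ (that is, running the whole construction with $\epsilon/2$ in place of $\epsilon$) produces an $n^{1/k+\epsilon}$-coloring within $O(1/\epsilon^2)$ time, as claimed.

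The accounting is routine once the pieces are in place, and I do not expect a genuine obstacle; the one point that needs care is the running-time bookkeeping. The quadratic dependence $O(1/\epsilon^2)$ arises from the product of two independent sources of $1/\epsilon$: the length $\ell = O(1/\epsilon)$ of the $H$-partition, and the per-level cost $O(1/\epsilon)$ of the randomized recoloring inside Theorem \ref{arbc}. It is also worth verifying that both ingredients respect the message-size and local-computation restrictions: the $H$-partition procedure of \cite{BE08} uses short messages, and the coloring of Theorem \ref{arbc} only ever transmits colors drawn from a palette of size $A \cdot n^{\epsilon}$, which is polynomial in $n$ and hence encodable in $O(\log n)$ bits, while each vertex merely samples a color and compares it against its neighbors. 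This confirms that the resulting algorithm uses short messages and polynomially-bounded local computation.
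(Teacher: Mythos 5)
Your proposal is correct and follows essentially the same route as the paper's own proof: the arboricity bound $a(G) \leq n^{1/k}$ for girth greater than $2k$ (via \cite{B04}), the constant-time $H$-partition of length $O(1/\epsilon)$ and degree $A = n^{1/k+\epsilon}$ from \cite{BE08}, an application of Theorem \ref{arbc} directly on $G$, and a final rescaling of $\epsilon$. Your additional remarks on message size and the source of the quadratic $1/\epsilon^2$ dependence are consistent with, and slightly more explicit than, the paper's presentation.
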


Note that the algorithm from Theorem \ref{colorlargegirth} does not employ a network decomposition. Observe also that for $k = 2$ (i.e., girth greater than $4$) the numbers of colors in Theorems \ref{colortrianglefree} and \ref{colorlargegirth} are the same, and both are existentially tight up to a slack factor of $n^{\epsilon}$. On the other hand, their proofs are different. However, Theorem \ref{colortrianglefree} applies for $g > 3$ too, while Theorem \ref{colorlargegirth} applies only for $g > 4$. So the result of Theorem \ref{colorlargegirth} is mainly of interest for $k \geq 3$ (i.e., $g \geq 6$).

\subsection{Separated Decompositions} \label{sc:strongdecomp}
For the sake of some applications we need a stronger notion of network decompositions, called a {\em $\sigma$-separated $(\alpha, \beta)$-network-decomposition}, for positive parameters $\sigma$, $\alpha$, and $\beta$ \cite{ABCP96}. An $(\alpha, \beta)$-network-decomposition $Q$ of a graph $G = (V,E)$ is called {\em $\sigma$-separated} if the clusters of $Q$ can be $\beta$-colored in such a way that every pair of clusters $C, C' \in Q$ which are colored by the same color are at distance at least $\sigma$ from one another, i.e., $\mbox{dist}_G(C,C') \geq \sigma$. Observe that an ordinary network decomposition is $2$-separated.

It is very easy to convert any procedure that constructs an ordinary ($2$-separated) $(\alpha, \beta)$-network-decomposition into a procedure that constructs a {\em weak} $\sigma$-separated $(\alpha \cdot (\sigma - 1), \beta)$-network-decomposition, for any parameter $\sigma \geq 3$. (See Section \ref{sc:preliminaries} for the definition of weak decomposition.) Specifically, one just executes the procedure for computing an ordinary $(\alpha, \beta)$-network-decomposition on the graph $G^{\sigma - 1} = (V,E^{\sigma - 1}), E^{\sigma - 1} = \{ (u,v) \ | \ u,v \in V, \  \mbox{dist}_G(u,v) \leq \sigma - 1 \}$. As a result one obtains a partition $Q$ of $G^{\sigma - 1}$ such that each cluster $C \in Q$ has diameter at most $\alpha$ in $G^{\sigma - 1}$, and thus weak diameter at most  $(\sigma - 1) \cdot \alpha$ in $G$.
Also, for any pair $C,C'$ of distinct clusters in $Q$ which are colored by the same color, the distance between them in $G^{\sigma - 1}$ is at least $2$, and so the distance between them in $G$ is at least $\sigma$. Hence $Q$ is a weak $\sigma$-separated $(\alpha \cdot (\sigma - 1), \beta)$-network-decomposition of $G$. Simulating a distributed algorithm for $G^{\sigma - 1}$ in $G$ increases the running time by a factor of $\sigma - 1$. (Here we assume that message size is unbounded.) Therefore, Corollary \ref{partdec} implies the following result.
\begin{col} \label{sdecpr}
For a pair of positive integer parameters $\sigma \geq 2, k \geq 2$, a $\sigma$-separated weak $((3^{k - 1} - 1) \cdot \sigma, O(n^{1/k} \log n))$-network-decomposition $Q$ and an $H$ partition $(S_1,S_2,...,S_k)$ of length $k$ and degree $O(n^{1/k} \log n)$ for ${\cal G}(Q)$ can be computed in randomized $O(3^k \cdot \sigma)$ time, with high probability. Moreover, for an integer parameter $t = 1,2,...$, one can compute an $O(n^{2/k} \log^2 n \log^{(t)} n)$-labeling for $Q$ in $O(3^k \cdot \sigma \cdot t)$ time.
\end{col}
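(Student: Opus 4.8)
The plan is to reduce the corollary to the already-established main decomposition result, Corollary \ref{partdec}, via the power-graph simulation technique described immediately before the statement. Concretely, I would run Procedure Decompose (in the variant of Corollary \ref{partdec}) not on $G$ itself but on the power graph $G^{\sigma-1} = (V, E^{\sigma-1})$, where $E^{\sigma-1} = \{(u,v) : \mbox{dist}_G(u,v)\leq\sigma-1\}$. Since $G^{\sigma-1}$ has the same $n$ vertices as $G$, Corollary \ref{partdec} yields, with high probability and in $O(3^k)$ rounds as measured on $G^{\sigma-1}$, a $(3^{k-1}-1, O(n^{1/k}\log n))$-network-decomposition $Q$ together with an $H$-partition $(S_1,\ldots,S_k)$ of length $k$ and degree $O(n^{1/k}\log n)$ for the cluster graph ${\cal G}(Q)$. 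The remaining work is to translate the guarantees on $G^{\sigma-1}$ into the claimed weak, $\sigma$-separated guarantees on $G$, and to account for the slowdown of the simulation.

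For the diameter and separation I would argue as follows. Each cluster $C\in Q$ has diameter at most $3^{k-1}-1$ in $G^{\sigma-1}$; since every edge of $G^{\sigma-1}$ is realized by a path of length at most $\sigma-1$ in $G$, the weak diameter of $C$ in $G$ is at most $(3^{k-1}-1)(\sigma-1)\leq(3^{k-1}-1)\sigma$, which is the claimed bound. For separation, note that the labeling of $Q$ is a proper coloring of ${\cal G}(Q)$, so two clusters $C,C'$ receiving the same color are non-adjacent in ${\cal G}(Q)$; by the definition of the contracted supergraph this means there is no pair $u\in C$, $v\in C'$ with $(u,v)\in E^{\sigma-1}$, i.e. with $\mbox{dist}_G(u,v)\leq\sigma-1$. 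Hence $\mbox{dist}_G(C,C')\geq\sigma$, which is exactly $\sigma$-separation. The $H$-partition statement transfers verbatim, because ${\cal G}(Q)$ is the same object whether we regard $Q$ as a decomposition of $G$ or of $G^{\sigma-1}$, so its degree bound $O(n^{1/k}\log n)$ and its length $k$ are unchanged.

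For the running time I would use a standard simulation: one communication round over $G^{\sigma-1}$ is carried out in $G$ by letting each vertex collect the relevant information from its $(\sigma-1)$-neighborhood, which costs $\sigma-1$ rounds of $G$. Consequently the $O(3^k)$ rounds of Corollary \ref{partdec} become $O(3^k)\cdot(\sigma-1)=O(3^k\cdot\sigma)$ rounds on $G$, and the additional $O(3^k\cdot t)$ rounds needed for the $O(n^{2/k}\log^2 n\log^{(t)}n)$-labeling (also provided by Corollary \ref{partdec}) become $O(3^k\cdot\sigma\cdot t)$ rounds, as claimed.

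I expect the main obstacle---and the reason the statement asserts only a \emph{weak} decomposition---to lie in this simulation step. Faithfully emulating a vertex's behavior on $G^{\sigma-1}$ requires gathering its full $(\sigma-1)$-neighborhood, which in general forces messages aggregating many identities and states, so the simulation uses messages larger than $O(\log n)$ bits; this is precisely the exception flagged in the footnote of the introduction. It is also the reason the diameter guarantee is only weak: a cluster $C$ is connected and of small diameter in $G^{\sigma-1}$, but its induced subgraph $G(C)$ need not be connected, so one can bound only the distance between its vertices measured in all of $G$, not inside $G(C)$.
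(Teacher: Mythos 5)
Your proposal is correct and takes essentially the same route as the paper: the paper likewise derives this corollary by running the ordinary decomposition of Corollary \ref{partdec} on the power graph $G^{\sigma-1}$, translating cluster diameter (at most $(3^{k-1}-1)(\sigma-1)$, weak in $G$) and separation (non-adjacency in $G^{\sigma-1}$ yields $\mbox{dist}_G(C,C') \geq \sigma$), and absorbing a $(\sigma-1)$-factor simulation overhead under unbounded messages, exactly as you do. One small correction to an otherwise sound argument: the contracted supergraph over $G^{\sigma-1}$ is \emph{not} literally the same object as ${\cal G}(Q)$ over $G$ (it may have strictly more edges), but since $E \subseteq E^{\sigma-1}$ the latter is a subgraph of the former, so the $H$-partition degree bound transfers a fortiori, which is all the corollary requires.
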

We remark that this simple approach for converting network-decompositions into weak separated ones is not new. It was used, e.g., by Dubhashi et al. \cite{DMPRS05}.

Next we show that our algorithm for constructing ordinary $(3^{k-1} -1, O(n^{1/k} \log n))$-network-decompositions can be adapted to compute {\em strong} $\sigma$-separated $((2 \sigma - 1)^{k-1} - 1, O(n^{1/k} \log n))$-network-decomposition in randomized time $O((2 \sigma)^k)$, for an arbitrary integer parameter $\sigma \geq 2$.

In what follows we describe Procedure Sep-Decompose which generalizes Procedure Decompose (Algorithm \ref{proced:decompose}). It accepts as input all the parameters of Procedure Decompose, and also the separation parameter $\sigma$. Consider again Procedure Decompose (Algorithm \ref{proced:decompose}). The termination condition of the procedure (lines 1-2, the case when the size $s$ is small, i.e., $s = O(n^{1/k} \log n)$) stays unchanged. In the general case (the "else" case of the procedure, lines 3-15) Procedure Decompose starts with invoking Procedure Partition, which decomposes the input graph $\hat{G}$ into $A$ and ${\cal B}$. In the original procedure the subgraph $G(A)$ induced by $A$ has a small maximum degree (at most $O(q \log n)$, where $q = O(n^{1/k})$ is an input parameter of Procedure Partition.) The generalized variant of the procedure (Procedure Sep-Decompose) invokes instead a generalized variant of Procedure Partition, called {\em Procedure Sep-Partition}. The latter procedure accepts as input all the parameters of Procedure Partition, but also the separation parameter $\sigma$. It also decomposes the input graph $\hat{G}$ into $A$ and ${\cal B}$, but $A$ has the property that $\hat{G}^{(\sigma - 1)}(A)$ has maximum degree $O(q \log n) = O(n^{1/k} \log n)$, i.e., for every vertex $v \in A$, there are at most $O(q \log n)$ other vertices of $\hat{G}$ at distance at most $\sigma - 1$ from $v$. (The distance is with respect to $\hat{G}$.) Similarly to Procedure Partition, in Procedure Sep-Partition too the set ${\cal B}$ is a collection of at most $s/ n^{1/k}$ clusters of small diameter in $\hat{G}$. However, the diameter grows from $3$ in the case of Procedure partition, to $2 \sigma - 1$ in Procedure Sep-Partition.

Then Procedure Sep-Decomposition invokes Procedure Dec-Small. (See line 5 of Algorithm \ref{proced:decompose}.)
Procedure Dec-Small converts every vertex $C \in A$ into a separate cluster. (If $\hat{G}$ is the original graph $G$ then every vertex $v \in A \subseteq V$ gives rise to a cluster $\{v\}$. Otherwise $\hat{G}$ is a supergraph of the original graph $G$, and a vertex $C \in A$ is a cluster of $G$.) The resulting set of clusters is denoted by $S$. Procedure Dec-Small also returns a labeling for clusters of $S$, but similarly to the case of Section \ref{sc:refine}, this labeling is immaterial for our discussion.

On line 6 of algorithm \ref{proced:decompose} Procedure Sep-Decompose invokes itself recursively on the supergraph ${\cal G}({\cal B})$ induced by the set ${\cal B}$ of clusters. The rest of the procedure stays unchanged.

At this point we are interested in a version of Procedure Sep-Decompose which only computes a separated network-decomposition without a labeling function for it. (See the beginning of Section \ref{sc:refine}.) To recap, this procedure returns a network-decomposition $Q = \cup_{i = 1}^k S_i$, where $(S_1,S_2,...,S_k)$ is an $H$-partition of the supergraph ${\cal G}(Q)$ induced by this decomposition. (See Lemmas \ref{sets} and \ref{partitions}.) Moreover, it is easy to verify that decompositions $Q$ produced by Procedure Sep-Decompose satisfy a stronger property than decompositions produced by Procedure Decompose. Specifically, by construction, for every index $i = 1,2,...,k$, a cluster $C \in S_i$ has at most $O(q \log n) = O(n^{1/k} \log n)$ other clusters $C' \in \cup_{j = i}^k S_j$ at distance at most $\sigma - 1$ from it in $\hat{G}$. This fact is summarized in the next lemma.
\begin{lem} \label{partitionofgraph}
$(S_1,S_2,...,S_k)$ is an $H$-partition with degree $O(n^{1/k} \log n)$ of the supergraph $({\cal G}(Q))^{\sigma - 1}$, where $Q = \cup_{i = 1}^k S_i$.
\end{lem}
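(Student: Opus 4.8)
The plan is to follow the inductive template of Lemma \ref{partitions} almost verbatim, replacing ordinary adjacency by adjacency in the $(\sigma-1)$-th power, and replacing the degree guarantee of Procedure Partition by the $(\sigma-1)$-power degree guarantee of Procedure Sep-Partition. Concretely, I would run a downward induction on the recursion level $i$, $k \geq i \geq 1$, establishing that $(S_i,S_{i+1},\ldots,S_k)$ is an $H$-partition of degree $O(n^{1/k}\log n)$ of the $(\sigma-1)$-th power of the supergraph ${\cal G}(\cup_{j=i}^k S_j)$ on which the $i$-th level recursive invocation operates. This supergraph is exactly the contraction of $\hat{G}_i$ in which every higher-level cluster of $\cup_{j>i}S_j$ is collapsed to a single node while each cluster of $S_i$ remains a singleton supernode of $\hat{G}_i$; Lemma \ref{sets} is what guarantees that $\cup_{j=i}^k S_j$ is indeed the decomposition returned at level $i$, and hence licenses the induction.

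For the base case $i=k$, the termination condition of Procedure Sep-Decompose (inherited unchanged from line 1 of Algorithm \ref{proced:decompose}) forces $|S_k| = O(n^{1/k}\log n)$, so any cluster of $S_k$ has at most $|S_k|-1 = O(n^{1/k}\log n)$ other clusters within distance $\sigma-1$, and the degree bound is immediate. For the step I would fix a cluster $C \in S_i$. Since at level $i$ Procedure Dec-Small turns each vertex of $A$ into its own cluster, $C$ is a single supernode $x$ of $\hat{G}_i$. The crucial input is the defining property of Procedure Sep-Partition: the set $A=S_i$ consists precisely of those supernodes $v$ of $\hat{G}_i$ for which at most $O(q\log n)=O(n^{1/k}\log n)$ other supernodes of $\hat{G}_i$ lie within $\hat{G}_i$-distance $\sigma-1$ of $v$. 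Thus $x$ has at most $O(n^{1/k}\log n)$ supernodes of $\hat{G}_i$ within distance $\sigma-1$.

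The key counting step is then that each supernode of $\hat{G}_i$ is contained in exactly one cluster of $\cup_{j=i}^k S_j$: it either belongs to $A$ and forms its own cluster of $S_i$, or it belongs to ${\cal B}$ and is merged, through the recursive contractions, into a unique cluster of $\cup_{j>i}S_j$. Hence the map sending a supernode to the cluster containing it is well defined and many-to-one, and a cluster $C'$ of $\cup_{j=i}^k S_j$ can be adjacent to $C$ in the $(\sigma-1)$-th power only if $C'$ contains some supernode of $\hat{G}_i$ at $\hat{G}_i$-distance at most $\sigma-1$ from $x$. Therefore the number of such $C'$ is at most the number of qualifying supernodes, namely $O(n^{1/k}\log n)$, which closes the induction and yields the claim for $i=1$.

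The hard part will be precisely this distance/contraction bookkeeping. One must witness adjacency of the \emph{final} clusters through representatives measured at the $\hat{G}_i$-distance of the level at which $C$ was created, and then argue that collapsing the ${\cal B}$-supernodes into higher-level clusters can only \emph{merge} — never create — supernodes inside the ball of radius $\sigma-1$ around $x$, so that the count of distinct clusters never exceeds the count of supernodes guaranteed by Sep-Partition. This monotonicity-under-contraction is the separated analogue of the sentence ``the degree of $C$ in $\cup_{j=i}^k S_j$ is no greater than its degree in $\hat{G}_i$'' from the proof of Lemma \ref{partitions}; checking that it survives the passage from plain adjacency to the $(\sigma-1)$-th power, and that it is the inequality in the correct direction (bounding clusters by supernodes rather than the reverse), is the only genuinely new ingredient.
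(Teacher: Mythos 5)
You have reconstructed exactly the argument the paper intends: the paper never proves this lemma formally, but derives it from the guarantee of Procedure Sep-Partition (Lemma \ref{invokepartition}) by the same disjointness counting you describe, mirroring Lemma \ref{partitions}; your base case and your appeal to Lemma \ref{sets} also match. The problem is that the step you flag as ``the hard part'' is not bookkeeping --- as stated it is false, and with it the literal statement of the lemma. Your claim is that $C'$ can be adjacent to $C=\{x\}$ in $({\cal G}(Q))^{\sigma-1}$ only if $C'$ contains a supernode of $\hat{G}_i$ within $\hat{G}_i$-distance $\sigma-1$ of $x$. But adjacency in the power of the \emph{contracted} graph is decided by cluster-hops, and a path of at most $\sigma-1$ cluster-hops may cross a cluster of large diameter; contraction therefore \emph{does} pull new clusters into the radius-$(\sigma-1)$ ball, i.e., the monotonicity you invoke goes in the wrong direction. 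A concrete failure with $k=2$, $\sigma=3$: let $W$ be a clique on $n^{1/2}\log n$ vertices, attach a path $x-y-z$ with $z$ adjacent to all of $W$, and attach $n^{0.9}$ pendant paths $z_j-u_j-t_j$ with each $z_j$ adjacent to all of $W$ (pad to $n$ vertices). With probability $1-o(1)$ some $w_0\in W$ joins $D$ while $x,y,z$ and most triples $z_j,u_j,t_j$ avoid $D$; then $y,z$ and the good $z_j,u_j$ all join the cluster $C_{w_0}$, while $x$ and the good $t_j$ become singletons of $S_1$. Every such $\{t_j\}$ is at two cluster-hops from $\{x\}$ (through $C_{w_0}$), hence adjacent to $\{x\}$ in $({\cal G}(Q))^{2}$, yet it contains no supernode within $\hat{G}_1$-distance $2$ of $x$ (the $G$-distance is $6$). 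So $\{x\}$ has degree $\Omega(n^{0.9})$ in $({\cal G}(Q))^{2}$, and both your key claim and the lemma's literal reading fail, even though every guarantee of Procedure Sep-Partition holds in this run.

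What your induction does establish is the weaker, correct statement in which all distances are measured in $\hat{G}_i$: every $C\in S_i$ has at most $O(n^{1/k}\log n)$ clusters of $\cup_{j\ge i}S_j$ containing a supernode within $\hat{G}_i$-distance $\sigma-1$ of it. That is precisely what the paper's own one-sentence justification preceding the lemma asserts (``at distance at most $\sigma-1$ from it in $\hat{G}$''), so you have faithfully reproduced the paper's argument together with its gap; the gap lies in promoting $\hat{G}_i$-distance to distance in ${\cal G}(Q)$, and it is not cosmetic. Nor can it be repaired simply by reinterpreting power-adjacency as $\mbox{dist}_G(C,C')\le\sigma-1$ (which is what the $\sigma$-separation application actually requires): that version does survive at level $1$ by your counting, but at intermediate levels $2\le i\le k-1$ it can be broken by ``bridge'' singletons of $S_1$ that join two ${\cal B}_1$-clusters by a $G$-path of length $2$ which is invisible in $\hat{G}_2$, since $A_1$-vertices are deleted from all later supergraphs. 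A correct write-up must therefore either restate the lemma with $\hat{G}_i$-distances and separately justify the application, or supply an argument going beyond the Sep-Partition guarantees to exclude these shortcut and bridge configurations.
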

By invoking one of the algorithms from Section \ref{sc:refine} for coloring low-arboricity graphs (for which a short low-degree $H$-partition is provided) we can obtain an $O(n^{2/k} \log^2 n \log^{(t)} n)$-labeling for $Q$, which has the property that any two distinct clusters $C,C'$ which receive the same label are at distance at least $\sigma$ from one another in ${\cal G}(Q)$, and thus at distance at least $\sigma$ from one another in $G$. The running time of this step is $O(t \cdot Diam(Q))$. (See Theorem \ref{partdec}.) Alternatively, one can have an $O(n^{2/k} \log n)$-labeling with this property in time $O(Diam(Q) \cdot k^2)$. (See Corollary \ref{improvement}.)

Next we analyze $Diam(Q)$. To do it we first describe Procedure Sep-Partition. (See Algorithm \ref{proced:partitin} for Procedure Partition.) The procedure accepts the same parameters as Procedure Partition, but also the separation parameter $\sigma$. (In fact, Procedure Partition is a special case of Procedure Sep-Partition, where $\sigma = 2$.) Similarly to Procedure Partition, in Procedure Sep-Partition every vertex $v$ selects itself (joins $D$) independently at random with probability $1/q$. Then every selected vertex $v$ sends an exploration message to distance $\sigma - 1$ from it in $\hat{G}$. Every vertex $u$ which is not selected ($u \notin D$) and receives at least one exploration message joins the cluster centered by the closest originator of an exploration message received by $u$. (Ties are broken in an arbitrary but consistent way according to the identities of originators. If originators themselves are clusters, then each of them has its own leader whose identity serves as the identity of the cluster. The consistent rule for breaking ties may be, for example, to prefer an originator with a smaller identity.) Other vertices join the set $A$. The procedure returns the set $A$ and the set ${\cal B}$ of clusters which are created in the way described above. Observe that if $\hat{G}$ is not the original graph but rather a supergraph of it then the algorithm is executed by clusters rather than by single vertices. In other words, in this case the center of each cluster simulates all the operations that need to be performed by the cluster.

The next lemma shows that clusters created by Procedure Sep-Partition are connected and have bounded diameter.
\begin{lem} \label{invokepartition}
Consider an invocation of Procedure Sep-Partition($\hat{G}, q, \sigma$), where $q \geq 1$ is a parameter and $\sigma \geq 2$ is an integer parameter. Then each vertex $v \in A$ has degree $O(q \cdot \log n)$ in $\hat{G}^{(\sigma - 1)}$, and each cluster $C \in {\cal B}$ has (strong) diameter at most $2\sigma - 2$ in $\hat{G}$.
\end{lem}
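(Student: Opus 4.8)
The plan is to establish the two assertions separately, in each case mirroring the analysis of Procedure Partition but replacing immediate neighborhoods by balls of radius $\sigma - 1$. For the degree bound I would first observe that a vertex $v$ ends up in $A$ precisely when $v$ is not selected into $D$ \emph{and} no selected vertex lies within distance $\sigma - 1$ of $v$ in $\hat{G}$; equivalently, none of the vertices in the closed ball $B_{\sigma-1}(v)$ joins $D$. This closed ball is exactly the closed neighborhood of $v$ in $\hat{G}^{(\sigma-1)}$, so it contains $\deg_{\hat{G}^{(\sigma-1)}}(v) + 1$ vertices. Since each vertex joins $D$ independently with probability $1/q$, the probability that a vertex $v$ with $\deg_{\hat{G}^{(\sigma-1)}}(v) \geq c \cdot q \cdot \ln n$ lands in $A$ is at most $(1 - 1/q)^{c \cdot q \cdot \ln n} \leq n^{-c}$, for a sufficiently large constant $c$. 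A union bound over all $n$ vertices then shows that, with high probability, every $v \in A$ satisfies $\deg_{\hat{G}^{(\sigma-1)}}(v) = O(q \log n)$. This step is essentially identical to the degree lemma for Procedure Partition, with the radius-$1$ ball replaced by the radius-$(\sigma-1)$ ball.

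For the diameter bound I would use the fact that every vertex $u$ joining a cluster $C_w$ does so because $w$ is a selected vertex within distance $\sigma - 1$ of $u$, so $\mbox{dist}_{\hat{G}}(u, w) \leq \sigma - 1$ for every member $u$ of $C_w$. The triangle inequality through $w$ then gives a \emph{weak} diameter of $2\sigma - 2$ immediately; the real content is upgrading this to a \emph{strong} diameter bound, i.e.\ producing the required short paths \emph{inside} the induced subgraph $\hat{G}(C_w)$. To this end I would exploit the consistent tie-breaking rule (each vertex is assigned to the closest originator, ties broken by smallest identity) to exhibit a shortest-path tree rooted at $w$. The key claim is: if $u \in C_w$ with $\mbox{dist}_{\hat{G}}(u, w) = d \geq 1$, then $u$ has a neighbor $u'$ with $\mbox{dist}_{\hat{G}}(u', w) = d - 1$ that is also assigned to $w$. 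Indeed, any neighbor $u'$ lying on a shortest $u$-$w$ path satisfies $\mbox{dist}_{\hat{G}}(u', w) = d - 1$; were $u'$ assigned to a different center $w''$, then $w''$ would satisfy $\mbox{dist}_{\hat{G}}(u', w'') = d-1$ and $\mbox{dist}_{\hat{G}}(u, w'') = d$, making $w''$ a closest center to $u$ with smaller identity than $w$, which contradicts that $u$ is assigned to $w$. Iterating the claim yields a path from each member to $w$ lying entirely within $C_w$ of length at most $\sigma - 1$, so $C_w$ is connected with strong radius at most $\sigma - 1$ and hence strong diameter at most $2\sigma - 2$.

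The main obstacle is precisely this strong-diameter argument: the naive triangle-inequality estimate only delivers weak diameter, and one must verify that shortest paths to the cluster center never leave the cluster. The consistent tie-breaking by identity is exactly what makes the shortest-path-forest argument go through; without it two nearby centers could "compete" for intermediate vertices in a way that severs a cluster member from its center, breaking the induced-subgraph path. The degree bound, by contrast, is a routine adaptation of the existing concentration argument and should present no difficulty.
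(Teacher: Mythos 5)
Your proposal is correct and follows essentially the same route as the paper: the degree bound is the same concentration-plus-union-bound argument over radius-$(\sigma-1)$ balls, and the strong-diameter bound is the same observation that, thanks to the consistent closest-center/smallest-identity assignment rule, every vertex on a shortest path from a cluster member to its center is itself assigned to that center (your inductive ``neighbor one step closer'' formulation and the paper's ``every vertex $x$ on the path $P_{v,u}$ lies in $C$'' are the same argument). Your writeup is, if anything, slightly more careful than the paper's in spelling out why competing centers would contradict the assignment of $u$ to $w$.
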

\begin{proof}
Let $c$ be a sufficiently large fixed constant, and consider a vertex $v \in \hat{G}$ such that a $(\sigma - 1)$-neighborhood $B_{\sigma - 1}(v)$ of $v$ in $\hat{G}$ contains at least $c \cdot q \cdot \log n$ vertices. Then with probability at least $1 - 1/n^c$ at least one of the vertices $u \in B_{\sigma - 1}(v)$ joins $D$, and the vertex $v$ becomes clustered. Hence with probability at least $1- 1/n^{c-1}$ all vertices $v$ with $|B_{\sigma - 1}(v)| \geq c \cdot q \cdot \log n$ become clustered, and so each unclustered vertex $v \in A$ satisfies $|B_{\sigma - 1}(v)| < c \cdot q \cdot \log n$.

Consider a cluster $C \in {\cal B}$. It is centered around an originator $v$ of an exploration message. (The vertex $v$ belongs to $D$, i.e., it is selected.) Consider a vertex $u \in C$, and let $P_{v,u}$ be a shortest $v-u$ path in $\hat{G}$. Let $x$ be a vertex on this path. (Note that $v,u,x$ are vertices of $\hat{G}$, i.e., they are possibly clusters themselves.) It follows that $v$ is the closest selected vertex to $x$, and if there exists another selected vertex $v' \in D$ which satisfies $\mbox{dist}_{\hat{G}}(v,x) = \mbox{dist}_{\hat{G}}(v',x)$, then $v$ has a smaller identity than $v'$. (As otherwise $v'$ would rule $u$ as well.) Hence $x \in C$. Consequently all vertices of $P_{v,u}$ are in $C$, and the length of $P_{u,v}$ is at most $\sigma - 1$. Hence the cluster $C$ has strong radius at most $\sigma - 1$, i.e., strong diameter at most $2(\sigma - 1)$.
\end{proof}
Observe also that by the same argument as in Lemma \ref{partitnscnd}, the number of clusters in ${\cal B}$ is, with high probability, $O(s/q)$. We are now ready to analyze the diameter $Diam(Q)$ of the ultimate network-decomposition $Q$. The following lemma generalizes Lemma \ref{dec}.
\begin{lem}
Let $Q = (S_1,S_2,...,S_k)$ be a $\sigma$-separated network-decomposition produced by the invocation Sep-Decompose($G, n, k, s := n, \epsilon, t, \sigma$) on an input graph $G$. Then for each $i \in [k]$, $Diam(S_i) \leq (2 \sigma - 1)^{ i - 1} - 1$.
\end{lem}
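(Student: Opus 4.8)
The plan is to prove, by \emph{forward} induction on the recursion level $i$ (from $1$ up to $k$), the slightly stronger statement that \emph{every} supernode of the level-$i$ supergraph $\hat{G}_i$ has strong diameter at most $(2\sigma-1)^{i-1}-1$ once unpacked into the original graph $G$. The lemma follows immediately from this: in the variant of Procedure Dec-Small used here each cluster of $S_i$ is a single supernode of $\hat{G}_i$ (Dec-Small simply converts every vertex $C\in A$ into its own cluster), so $Diam(S_i)$ is bounded by the maximum $G$-diameter of a supernode of $\hat{G}_i$, and the $S_i$-clusters are among those supernodes. Note this strengthening is the natural one for the recursion, since the supernodes of $\hat{G}_{i+1}$ are built out of supernodes of $\hat{G}_i$, all of which the induction hypothesis covers.

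For the base case $i=1$ we have $\hat{G}_1=G$, whose supernodes are singleton vertices, of diameter $0=(2\sigma-1)^{0}-1$. For the inductive step, recall that the supernodes of $\hat{G}_{i+1}$ are exactly the clusters ${\cal B}$ produced by Procedure Sep-Partition invoked on $\hat{G}_i$. By Lemma~\ref{invokepartition}, each such cluster $U$ has strong diameter at most $2\sigma-2$ in $\hat{G}_i$; hence any two of the $\hat{G}_i$-supernodes comprising $U$ are joined, \emph{within} $U$, by a chain $P_1,P_2,\ldots,P_t$ of at most $t\le 2\sigma-1$ supernodes in which consecutive members are adjacent in $\hat{G}_i$, i.e.\ connected by a $G$-edge between a vertex of one and a vertex of the next.

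I would then lift this chain to $G$ exactly as in the proof of Lemma~\ref{dec}. Given $x,y\in U$ lying in the constituent supernodes $P_1$ and $P_t$, one concatenates a within-supernode traversal of each of the $t$ supernodes $P_1,\ldots,P_t$ — each contributing at most $(2\sigma-1)^{i-1}-1$ by the induction hypothesis — with the $t-1$ inter-supernode $G$-edges, each of length $1$. This yields
\[
\mbox{dist}_G(x,y) \;\le\; t\cdot\big((2\sigma-1)^{i-1}-1\big) + (t-1) \;=\; t\,(2\sigma-1)^{i-1} - 1 \;\le\; (2\sigma-1)\cdot(2\sigma-1)^{i-1} - 1 \;=\; (2\sigma-1)^{i}-1,
\]
which is exactly the bound claimed for level $i+1$, completing the induction.

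I expect the only delicate point to be the bookkeeping in this lifting step: one must check that the chain $P_1,\ldots,P_t$ stays inside $U$ (so that the lifted walk uses only vertices and edges belonging to the cluster, giving a genuine \emph{strong}-diameter bound), and that the contributions are counted correctly — $t$ within-supernode traversals and $t-1$ crossing edges. This is the same concatenation carried out in Lemma~\ref{dec}, now with the constituent-supernode diameter $(2\sigma-1)^{i-1}-1$ and chain length $t\le 2\sigma-1$ in place of the diameter-$2$ constituents and the factor $3$ of the unseparated case; no new idea is required beyond substituting the separated parameters.
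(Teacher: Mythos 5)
Your proposal is correct and follows essentially the same route as the paper: a forward induction on the recursion level showing that every supernode of $\hat{G}_i$, unpacked into $G$, has strong diameter at most $(2\sigma-1)^{i-1}-1$, with the inductive step combining Lemma~\ref{invokepartition}'s bound of $2\sigma-2$ on cluster diameter in $\hat{G}_i$ (hence a chain of at most $2\sigma-1$ supernodes) with the lifting argument of Lemma~\ref{dec}. The paper writes the step more compactly as $Diam(C)\le(2\sigma-1)\cdot\max_{C'\in\hat{V}_i}\{Diam(C')\}+(2\sigma-2)$, which is exactly your count of $t$ within-supernode traversals plus $t-1$ crossing edges with $t\le 2\sigma-1$.
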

\begin{proof}
We prove by induction on $i$ that in the $i$th level recursive invocation of Procedure Sep-Decompose each vertex $v$ of the input graph $\hat{G}_i$ of this invocation is a cluster of the original graph $G$ with diameter at most $(2\sigma - 1)^{i - 1} - 1$.
Since for each $i \in [k]$, clusters of $S_i$ are vertices of $\hat{G}_i = (\hat{V}_i, \hat{E}_i)$ the assertion of the lemma follows from the inductive claim.\\
{\bf Base:} $Diam(S_1) = 0 = (2\sigma - 1)^0 - 1$.\\
{\bf Step:} Consider an index $i < k$. By Lemma \ref{invokepartition}, each cluster $C$ created by the $i$th level invocation of Procedure Sep-Decompose has strong diameter at most $2\sigma - 2$ in $\hat{G}_i = (\hat{V}_i, \hat{E}_i)$. It follows that
$$ Diam(C) \leq (2 \sigma - 1) \cdot \max_{C' \in \hat{V}_i} \{Diam(C')\} + (2\sigma - 2).$$
By induction hypothesis it follows that
$$Diam(C) \leq (2\sigma - 1)((2\sigma - 1)^{i - 1} -1) +(2\sigma - 2) = (2\sigma - 1)^i - 1.$$
Since vertices $v$ of $\hat{G}_{i + 1}$ are clusters which were formed by the $i$th level invocation of Procedure Sep-Decompose, the assertion of the lemma follows.
\end{proof}

We summarize this discussion with the following corollary.
\begin{col} \label{separatedcol}
Consider an invocation of Sep-Decompose($G, n, k , s:= n, \epsilon, t ,\sigma$), where $k \geq 1, \sigma \geq 2$ are integer parameters. It produces a $\sigma$-separated strong $((2\sigma - 1)^{k - 1} - 1, O(n^{1/k} \log n))$-network-decomposition $Q = \cup_{i = 1}^k S_i$, along with an $H$-partition $(S_1,S_2,...,S_k)$ for $({\cal G}(Q))^{\sigma - 1}$. The running time of this invocation is $O((2\sigma - 1)^{k-1})$.
\end{col}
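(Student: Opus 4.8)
The plan is to assemble the corollary from the three structural facts established just above it, handling its three assertions---diameter, separation together with colorability, and running time---in turn. I would first stress, exactly as at the start of Section \ref{sc:refine}, that this version of Procedure Sep-Decompose outputs only the decomposition $Q$ and its $H$-partition $(S_1,\dots,S_k)$; the explicit labeling is deferred to Theorem \ref{arbc} and contributes the extra $k^2$ (or $t$) factor separately, so it does not enter the $O((2\sigma-1)^{k-1})$ bound.

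For the diameter, I would simply invoke the immediately preceding lemma, which gives $Diam(S_i)\le(2\sigma-1)^{i-1}-1$ for every $i\in[k]$; taking the maximum over $i$ yields $Diam(Q)\le(2\sigma-1)^{k-1}-1$. Since that lemma's inductive step builds its distance estimate out of the strong-radius-$(\sigma-1)$ clusters of Lemma \ref{invokepartition}, whose shortest internal paths stay inside the cluster, the resulting clusters are connected and the bound is a \emph{strong}-diameter bound; hence $Q$ is a strong $((2\sigma-1)^{k-1}-1,\cdot)$-decomposition.

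For the second parameter and the $\sigma$-separation simultaneously, the key is Lemma \ref{partitionofgraph}: $(S_1,\dots,S_k)$ is an $H$-partition of the power graph $({\cal G}(Q))^{\sigma-1}$ with degree $A=O(n^{1/k}\log n)$. An $H$-partition of degree $A$ orients every edge so that each vertex has at most $A$ outgoing edges, certifying that $({\cal G}(Q))^{\sigma-1}$ is $A$-degenerate and therefore $O(A)=O(n^{1/k}\log n)$-colorable. The point I would emphasize is that a proper coloring of $({\cal G}(Q))^{\sigma-1}$ is precisely a coloring of the clusters of $Q$ in which any two clusters at distance at most $\sigma-1$ in ${\cal G}(Q)$ receive distinct colors; equivalently, same-colored clusters lie at distance at least $\sigma$ in ${\cal G}(Q)$, hence at distance at least $\sigma$ in $G$. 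This is exactly the $\sigma$-separated property with $\beta=O(n^{1/k}\log n)$, and it certifies the colorability required by the decomposition.

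For the running time I would mimic Lemma \ref{dectime}. Procedure Sep-Partition is invoked only at the $k-1$ non-terminal levels, and at level $i$ its single nonlocal step is an exploration to distance $\sigma-1$ in $\hat{G}_i$, whose supernodes are clusters of strong diameter at most $(2\sigma-1)^{i-1}-1$ in $G$. Simulating one round of $\hat{G}_i$ costs $O((2\sigma-1)^{i-1})$ rounds of $G$, so level $i$ costs $O(\sigma\cdot(2\sigma-1)^{i-1})$, while the terminal level $k$ runs only the simplified constant-round Dec-Small and contributes $O((2\sigma-1)^{k-1})$. Summing the geometric series gives $\sum_{i=1}^{k-1}\sigma\cdot(2\sigma-1)^{i-1}\le\frac{\sigma}{2\sigma-2}(2\sigma-1)^{k-1}=O((2\sigma-1)^{k-1})$. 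The one place I expect to need care---the main, if mild, obstacle---is exactly this absorption: one must observe that the prefactor $\frac{\sigma}{2(\sigma-1)}$ is $O(1)$ (indeed $\le 1$) for all $\sigma\ge 2$, so the factor of $\sigma$ introduced by the depth-$(\sigma-1)$ exploration is swallowed by the base of the geometric sum rather than multiplying the final bound.
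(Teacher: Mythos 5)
Your proposal is correct and takes essentially the same route as the paper: the corollary there is explicitly a summary of the three ingredients you assemble, namely the diameter lemma immediately preceding it, Lemma \ref{partitionofgraph} for the $H$-partition of $({\cal G}(Q))^{\sigma-1}$ (whose degree bound certifies both the $O(n^{1/k}\log n)$-colorability and, via properness in the power graph, the $\sigma$-separation in $G$), and a Lemma \ref{dectime}-style simulation argument for the running time. Your only addition is making the time bound explicit — checking that the prefactor $\frac{\sigma}{2(\sigma-1)} \le 1$ for $\sigma \ge 2$ lets the depth-$(\sigma-1)$ exploration be absorbed into the geometric sum — where the paper simply asserts the $O((2\sigma-1)^{k-1})$ bound.
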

As was discussed in the paragraph following Lemma \ref{partitionofgraph}, using this network-decomposition one can compute an $O(n^{2/k} \log^2 n \log^{(t)} n)$-labeling for $Q$ within additional $O(t \cdot (2\sigma - 1)^{k-1})$ rounds, or alternatively, an $O(n^{2/k} \log n)$-labeling within additional $O((2 \sigma - 1)^{k-1} \cdot k^2)$ rounds. In both cases the labeling satisfies that any two distinct clusters $C,C'$ which receive the same label are at distance at least $\sigma$ one from another in $G$.


One can also improve the parameters of the network-decomposition from Corollary \ref{separatedcol} from $((2\sigma - 1)^{k - 1} - 1, O(n^{1/k} \log n))$ to $(O(\sigma)^k, O(n^{1/k}))$ at the expense of increasing the running time from $((O(2\sigma - 1)^{k-1})$ to $O(\sigma)^k \log^{2/3} n$ in general graphs, and $O(\sigma)^k \cdot exp \{ O( \sqrt{ \log \log n}) \}$ in graphs with girth at least $6$. This is done by introducing to Procedure Sep-Decompose a modification analogous to the one that we introduced to Procedure Decompose in Section \ref{sc:betterlabels}.
Recall that the difference between Procedure RS-Decompose and Procedure Decompose is that the former invokes Procedure RS-Partition as a subroutine, while the latter invokes Procedure Partition.

Procedure RS-Partition computes a $(3, \delta)$-ruling set $W$ for the set $U = \{ u \in V \ | \ \deg(u) \geq q \}$ of high degree vertices of its input graph $G$, for a parameter $\delta$. The variant of this procedure that we are now describing, called Procedure Sep-RS-Partition, accepts as input also the separation parameter $\sigma$, and computes a $(2\sigma - 1, \delta)$-ruling set $W'$ for the set $U' = \{u \in U \ : \ |B_{\sigma - 1}(u)| \  \geq q \}$ of vertices that have at least $q$ vertices in their $(\sigma - 1)$-ball. The clusters $\{C_w \ | \ w \in W' \}$ are then created in the same way as in Procedure RS-Partition. In particular, their strong radii are still bounded by $\delta$. Also, every vertex $u \in U'$ is assigned to some cluster. The sets $A$ and ${\cal B}$ are now formed as in Procedure RS-Partition. Every vertex $v \in A$ now satisfies $|B_{\sigma - 1}(v)| < q$. The following lemma is analogous to Lemma \ref{rprulin}, and its proof is very similar to that of Lemma \ref{rprulin}.
\begin{lem} \label{rpinv}
Suppose that Procedure Sep-RS-Partition is invoked on a graph $G = (V,E)$ and positive parameters $\sigma$ and $q$. Suppose further that it uses a subroutine for computing a $(2\sigma - 1, \delta)$-ruling set, for a positive integer parameter $\sigma$. Then in the subgraph $G(A)$ every vertex $v \in A$ satisfies $|B_{\sigma - 1}(v)| < q$. Moreover, ${\cal B}$ consists of at most $|V|/q$ supernodes, each of which is a cluster of strong diameter at most $2 \delta$.
\end{lem}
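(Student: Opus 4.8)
The plan is to follow the proof of Lemma~\ref{rprulin} almost verbatim, with the $(\sigma-1)$-ball $B_{\sigma-1}(\cdot)$ playing the role that the closed neighborhood $\Gamma^+(\cdot)$ played in the unseparated case. In fact the two arguments coincide when $\sigma = 2$, since then $2\sigma-1 = 3$ and $B_{\sigma-1}(w) = B_1(w) = \Gamma^+(w)$, so Procedure Sep-RS-Partition degenerates to Procedure RS-Partition. As in Lemma~\ref{rprulin}, I would first observe that the two structural assertions are immediate from the construction, and then spend the real work on the counting bound $|{\cal B}| \le |V|/q$.

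First I would dispose of the claim that every $v \in A$ satisfies $|B_{\sigma-1}(v)| < q$. By the domination property of the $(2\sigma-1,\delta)$-ruling set $W'$, every $u \in U'$ lies within distance $\delta$ of some ruler $w \in W'$; hence $u$ is reached by the exploration message of $w$ (which is broadcast to distance $\delta$) and joins a cluster. Therefore $A = V \setminus \bigcup_{w \in W'} C_w$ is disjoint from $U'$, so each $v \in A$ satisfies $v \notin U'$, which is exactly $|B_{\sigma-1}(v)| < q$. The strong-diameter bound I would obtain exactly as in Lemma~\ref{invokepartition}: a shortest path in $\hat{G}$ from a cluster center $w$ to any $u \in C_w$ stays inside $C_w$, because each internal vertex $x$ has $w$ as its unique closest (and, under consistent tie-breaking, lexicographically preferred) ruler; hence $C_w$ has strong radius at most $\delta$ and therefore strong diameter at most $2\delta$.

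The one claim that genuinely needs an argument is $|{\cal B}| \le |V|/q$, and here I would replace the neighbor-disjointness count of Lemma~\ref{rprulin} by a ball-disjointness count. Each supernode of ${\cal B}$ is a cluster $C_w$ with $w \in W' \subseteq U'$, so by definition of $U'$ we have $|B_{\sigma-1}(w)| \ge q$. Since $W'$ is a $(2\sigma-1,\delta)$-ruling set, any two distinct rulers satisfy $\mbox{dist}_G(w,w') \ge 2\sigma-1$; if some vertex $x$ lay in $B_{\sigma-1}(w) \cap B_{\sigma-1}(w')$, the triangle inequality would give $\mbox{dist}_G(w,w') \le (\sigma-1)+(\sigma-1) = 2\sigma-2$, a contradiction. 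Hence the balls $\{B_{\sigma-1}(w)\}_{w \in W'}$ are pairwise disjoint, so $|V| \ge \sum_{w \in W'} |B_{\sigma-1}(w)| \ge |W'| \cdot q$, whence $|W'| \le |V|/q$. Since each ruler spawns exactly one cluster, $|{\cal B}| = |W'| \le |V|/q$, as required.

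I do not expect a serious obstacle, but the delicate point to get right is the role of the separation parameter: it is chosen to be $2\sigma-1$, strictly larger than $2(\sigma-1)$, precisely so that the $(\sigma-1)$-balls around the rulers are disjoint — which is exactly what makes the charging argument valid. With any smaller separation two rulers could share ball-vertices, the $\ge |V|/q$ lower bound on the total ball volume would collapse, and the bound on $|{\cal B}|$ would fail. Keeping the two ruling-set parameters straight (separation $2\sigma-1$ versus domination $\delta$) is therefore the only thing that requires care.
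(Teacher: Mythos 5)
Your proof is correct and follows essentially the same route as the paper, which itself only remarks that the proof of this lemma is "very similar" to that of Lemma~\ref{rprulin}: domination by the ruling set clears $U'$ out of $A$, the consistent-tie-breaking shortest-path argument of Lemma~\ref{invokepartition} gives strong radius $\delta$, and a disjointness-based packing argument gives $|{\cal B}| \le |V|/q$. Your one cosmetic deviation is to count the pairwise-disjoint balls $B_{\sigma-1}(w)$ directly against $|V|$ rather than arguing $|C_w| \ge q$ as in Lemma~\ref{rprulin}; this is equivalent (and slightly more robust, since it never needs the ball vertices to actually join $C_w$), and your closing observation that the separation $2\sigma - 1 > 2(\sigma-1)$ is exactly what makes the balls disjoint correctly identifies the purpose of that parameter choice.
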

It follows now that Procedure Sep-RS-Decompose computes a $\sigma$-separated $((2\delta + 1)^k, O(n^{1/k}))$-network-decomposition $Q$. For the running time we need again to specify the running time required for computing a $(2\sigma - 1, \delta)$-ruling set. By running the algorithms for computing a ruling set due to Barenboim et al. \cite{BEPS12} and Kothapalli and Pemmaraju \cite{KP12} respectively in $G^{2(\sigma - 1)}$ we obtain a $(2\sigma - 1, 2 \cdot 2\sigma)$-ruling set in the case of general graphs, and a $(2\sigma - 1, 3 \cdot 2\sigma)$-ruling set in the case of graphs of girth at least $6$. In both cases $\delta = O(\sigma)$, and the running time is $O(\sigma \cdot \log^{2/3} n)$ in the former case and $O(\sigma) \cdot exp \{ O(\sqrt{ \log \log n }) \}$ time in the latter.
The rest of the analysis is identical, except that the overall running time of Procedure Sep-RS-Decompose becomes $(O(\sigma))^k \log^{2/3} n$ and $(O(\sigma))^k \cdot exp \{ O(\sqrt {\log \log n}) \}$ in the cases of general graphs and graphs of girth at least $6$, respectively.
\begin{thm}
Procedure Sep-RS-Decompose invoked on an $n$-vertex graph $G = (V,E)$ with positive integer parameters $k$ and $\sigma$ computes a $\sigma$-separated strong $((O(\sigma))^k, n^{1/k})$-network-decomposition $Q$ in randomized time $(O(\sigma))^k \log^{2/3} n$ in general graphs and in $(O(\sigma))^k \cdot exp \{ O(\sqrt{ \log \log n } ) \}$ randomized time in graphs of girth at least $6$.
\end{thm}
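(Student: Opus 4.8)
The plan is to treat Procedure Sep-RS-Decompose exactly as Procedure Sep-Decompose with the single substitution of Procedure Sep-RS-Partition for Procedure Sep-Partition, mirroring precisely the passage from Procedure Decompose to Procedure RS-Decompose in Section \ref{sc:betterlabels}. Under this view the recursion structure, the count of $k$ recursion levels (Lemma \ref{decomposelevl}), and all label-bookkeeping carry over unchanged; only the per-level partition guarantees change, and these are exactly the content of Lemma \ref{rpinv}. So I would reprove three facts — the cluster diameter, the $H$-partition (hence colorability and separation), and the running time — each by re-running an already-established induction with the new per-level bounds.

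First I would establish the diameter. By Lemma \ref{rpinv} each cluster produced by one level of Sep-RS-Partition has strong diameter at most $2\delta$ in the current supergraph, in place of the $2\sigma-2$ that Sep-Partition gives. Re-running the induction on the recursion level $i$ from the diameter lemma proved above for Procedure Sep-Decompose, but with the per-level bound $2\delta$ and the supergraph-to-$G$ expansion factor $(2\delta+1)$ replacing $(2\sigma-1)$, yields $Diam(S_i)\le (2\delta+1)^{i-1}-1$. Since the ruling-set subroutine supplies $\delta = O(\sigma)$, this is $(O(\sigma))^k$. That the decomposition is strong (not merely weak) follows because, exactly as in the proof of Lemma \ref{invokepartition}, a shortest path from a ruler to any vertex it rules stays inside that ruler's cluster, so merged clusters remain connected within their induced subgraphs.

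Second I would reprove the $H$-partition property, which is where the ruling-set variant pays off. Because Sep-RS-Partition selects rulers via a $(2\sigma-1,\delta)$-ruling set on $U' = \{u : |B_{\sigma-1}(u)| \ge q\}$, every vertex placed in $A$ satisfies $|B_{\sigma-1}(v)| < q$ \emph{deterministically}, with no $\log n$ slack. Repeating the induction of Lemma \ref{partitionofgraph} then shows that $(S_1,\dots,S_k)$ is an $H$-partition of $({\cal G}(Q))^{\sigma-1}$ of degree $O(q) = O(n^{1/k})$ — the $\log n$ factor of Lemma \ref{partitionofgraph} disappears. Hence the power supergraph $({\cal G}(Q))^{\sigma-1}$ is $O(n^{1/k})$-colorable (each vertex has at most $A = O(n^{1/k})$ parents under the orientation of Theorem \ref{arbc}, so greedy coloring in reverse order of the $H$-partition uses $A+1$ colors). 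Any proper coloring of this power supergraph assigns distinct colors to clusters at ${\cal G}(Q)$-distance at most $\sigma-1$, so equally-colored clusters lie at ${\cal G}(Q)$-distance at least $\sigma$, hence at $G$-distance at least $\sigma$; this is exactly $\sigma$-separation with $O(n^{1/k})$ colors, giving the second decomposition parameter $n^{1/k}$.

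Finally, for the running time I would instantiate the ruling-set subroutine: running the $(2,2)$- (resp.\ $(2,3)$-) ruling-set algorithm of \cite{BEPS12} and \cite{KP12} inside the power graph $G^{2(\sigma-1)}$ produces a $(2\sigma-1, O(\sigma))$-ruling set in $O(\sigma \log^{2/3} n)$ time in general graphs and in $O(\sigma)\cdot exp\{O(\sqrt{\log\log n})\}$ time when the girth is at least $6$, the extra factor $O(\sigma)$ coming from simulating one $G^{2(\sigma-1)}$-round by $O(\sigma)$ host-graph rounds. Feeding this as $T_{part}$ into the recursion analysis of Lemma \ref{dectime}, with the base $3$ replaced by $2\delta+1 = O(\sigma)$ so that a single supergraph round at level $i$ costs the cluster-diameter factor $(O(\sigma))^i$, and summing the resulting geometric series over the $k$ levels, gives total time $(O(\sigma))^k \log^{2/3} n$ (resp.\ $(O(\sigma))^k\cdot exp\{O(\sqrt{\log\log n})\}$). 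The main obstacle is precisely this running-time bookkeeping: one must compound three independent sources of blow-up — the $O(\sigma)$-round cost of operating in $G^{2(\sigma-1)}$, the $(O(\sigma))^i$ cluster-diameter simulation overhead at recursion level $i$, and the ruling-set round complexity itself — and verify that the deepest level dominates, so that all three collapse into the single factor $(O(\sigma))^k$ together with the ruling-set term, exactly as in Theorem \ref{rsdecom}.
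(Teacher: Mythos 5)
Your proposal is correct and takes essentially the same route as the paper: substitute Procedure Sep-RS-Partition (with its guarantees from Lemma \ref{rpinv}) into Procedure Sep-Decompose, re-run the diameter and $H$-partition inductions with $\delta = O(\sigma)$ to obtain a $\sigma$-separated strong $((2\delta+1)^k, O(n^{1/k}))$-decomposition, and instantiate the ruling-set subroutine by simulating the algorithms of \cite{BEPS12} and \cite{KP12} in $G^{2(\sigma-1)}$, yielding the stated running times. The details you add --- the greedy $(A+1)$-coloring of the $H$-partition of $({\cal G}(Q))^{\sigma-1}$ to certify the $n^{1/k}$ parameter and $\sigma$-separation, and the explicit compounding of the three overhead factors in Lemma \ref{dectime} --- merely spell out what the paper compresses into ``the rest of the analysis is identical.''
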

One application of strong separated network-decomposition is {\em low-intersecting partitions}. Low-intersecting partitions were introduced by Busch et al. \cite{BDRRS12}, in their work on universal Steiner trees. A {\em low-intersecting $(\alpha, \beta, \gamma)$-partition} ${\cal P}$ of a graph $G$ is the partition of the vertex set $V$ such that \\
(1) Every cluster $C$ in ${\cal P}$ has strong diameter at most $\alpha \cdot \gamma$. \\
(2) For every vertex $v \in V$, a ball $B_{\gamma}(v)$ of radius $\gamma$ around $v$ intersects at most $\beta$ clusters of ${\cal P}$.\\

Busch et al. showed that given a hierarchy of low-intersecting partitions with certain properties (see \cite{BDRRS12} for details) one can construct a universal Steiner tree. (See \cite{BDRRS12} for the definition of universal Steiner tree.) Also, vice versa, given universal Steiner tree they showed that one can construct a low-intersecting partition. They constructed a low-intersecting partition with $\alpha = 4^k, \beta = k \cdot n^{1/k}$, and arbitrary $\gamma$.

We next argue that a $(2\gamma + 1)$-separated strong $(\mu, \eta)$-network-decomposition $Q$ is also a low-intersecting partition with parameters $(\alpha = \mu/\gamma, \beta = \eta, \gamma$). Indeed, every cluster $C$ of $Q$ has strong diameter at most $\mu = \alpha \cdot \gamma$. Moreover, consider a vertex $v$ and a ball $B_{\gamma}(v)$ of radius $\gamma$ around $v$. Observe that for every color class $i \in [\eta]$ of ${\cal G}(Q)$, the ball $B_{\gamma}(v)$ can intersect at most one cluster $C$ colored by $i$. (This is because for every two $i$-colored clusters $C,C'$, it holds that $\mbox{dist}_G(C,C') \geq 2\gamma + 1$.)
Hence altogether $B_{\gamma}(v)$ may intersect up to $\eta$ clusters of $Q$. This proves the claim.

Therefore, our distributed algorithm for computing a $(2\gamma + 1)$-separated strong $(O(\gamma))^k, n^{1/k})$-network-decomposition in distributed randomized time $(O(\gamma))^k \log^{2/3} n$ in general graphs and in \\ $(O(\gamma))^k \cdot exp \{ O(\sqrt {\log \log n} ) \}$ in graphs of girth at least $6$ provides also a distributed algorithm with the same running time for constructing a low-intersecting $((O(\gamma))^k, n^{1/k}, \gamma)$-partition.
We summarize:
\begin{col}
For any pair of positive integer parameters $k, \gamma$, a low-intersecting $((O(\gamma))^k, n^{1/k}, \gamma)$-partition can be constructed in $(O(\gamma))^k \log^{2/3} n$ randomized time in general graphs and in $(O(\gamma))^k \cdot exp \{ O(\sqrt {\log \log n} ) \}$ randomized time in graphs of girth at least $6$.
\end{col}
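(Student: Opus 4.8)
The plan is to combine two ingredients that are already in place: the construction of separated strong network-decompositions via Procedure Sep-RS-Decompose, and the observation (argued in the paragraph immediately preceding the statement) that a sufficiently separated strong network-decomposition is automatically a low-intersecting partition. The one nontrivial decision is the parameter choice, namely to set the separation to $\sigma = 2\gamma + 1$.

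First I would invoke Procedure Sep-RS-Decompose on the input graph $G$ with parameters $k$ and $\sigma = 2\gamma + 1$. By the theorem on Procedure Sep-RS-Decompose this produces a $(2\gamma+1)$-separated strong $((O(\sigma))^k, n^{1/k})$-network-decomposition $Q$, and since $\sigma = 2\gamma + 1 = O(\gamma)$ the diameter bound is $\mu = (O(\gamma))^k$ and the number of colors is $\eta = n^{1/k}$. The running time inherited from that theorem is $(O(\sigma))^k \log^{2/3} n = (O(\gamma))^k \log^{2/3} n$ for general graphs, and $(O(\sigma))^k \cdot exp\{O(\sqrt{\log\log n})\} = (O(\gamma))^k \cdot exp\{O(\sqrt{\log\log n})\}$ for graphs of girth at least $6$, which are exactly the claimed bounds.

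Next I would apply the translation argument to reinterpret $Q$ as a low-intersecting partition with $\alpha = \mu/\gamma$, $\beta = \eta$, and radius $\gamma$. For property (1), every cluster $C \in Q$ has strong diameter at most $\mu = (\mu/\gamma)\cdot\gamma = \alpha\cdot\gamma$, as required; and $\alpha = \mu/\gamma = (O(\gamma))^k/\gamma$ is bounded by $(O(\gamma))^k$, so it retains the claimed form. For property (2), fix a vertex $v$ and its ball $B_\gamma(v)$: since any two clusters sharing a color are at distance at least $\sigma = 2\gamma + 1 > 2\gamma$ in $G$, while $B_\gamma(v)$ has diameter at most $2\gamma$, the ball can meet at most one cluster of each color class, hence at most $\eta = n^{1/k}$ clusters altogether, giving $\beta = n^{1/k}$.

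The argument is essentially a parameter substitution followed by invoking results already established, so I do not anticipate a serious obstacle. The one point that deserves care is verifying that separation $2\gamma+1$ (rather than $2\gamma$) is the right threshold for property (2): a ball of radius $\gamma$ can contain two vertices at distance exactly $2\gamma$, so to guarantee that two same-colored clusters cannot both meet the ball we genuinely need their pairwise distance to strictly exceed $2\gamma$, i.e.\ to be at least $2\gamma+1$. Confirming this strict-inequality bookkeeping, and checking that the factor-$1/\gamma$ reduction in $\alpha$ leaves its asymptotic form unchanged, are the only steps that require attention.
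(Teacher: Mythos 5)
Your proposal is correct and matches the paper's own proof essentially step for step: the paper likewise invokes Procedure Sep-RS-Decompose with separation parameter $\sigma = 2\gamma + 1$ and then applies the same translation argument (a $(2\gamma+1)$-separated strong $(\mu,\eta)$-network-decomposition is a low-intersecting $(\mu/\gamma, \eta, \gamma)$-partition, since a ball of radius $\gamma$ meets at most one cluster per color class). Your extra care about the strict-inequality threshold $2\gamma+1$ versus $2\gamma$ is exactly the point underlying the paper's parenthetical justification, so there is nothing to fix.
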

We remark that this construction can be implemented using short messages. \\ Comparing this result with the algorithm of Busch et al. \cite{BDRRS12} we note that the partition of \cite{BDRRS12} has smaller radius. (It is $\gamma \cdot (O(1))^k$ instead of $(O(\gamma))^k$ in our case.) On the other hand, the intersection parameter $\beta$ of our partitions is smaller. (It is $n^{1/k}$ instead of $k \cdot n^{1/k}$.) In particular, the intersection parameter in the construction of \cite{BDRRS12} is always $\Omega(\log n)$, while ours can be as small as one wishes. Finally, the algorithm of Busch et al. \cite{BDRRS12} is not distributed, and seems inherently sequential.

\subsection{Approximation Algorithms for the Minimum Dominating Set and Minimum $t$-Spanner Problems}
In this section we employ our network-decomposition algorithm in order to derive approximation algorithms for the minimum dominating set and minimum $t$-spanner problems.
Suppose that we are given a $3$-separated $(d,\ell)$-network-decomposition $Q$ of a graph $G$. For each cluster $C \in Q$, we compute in parallel a dominating set $D \subseteq \Gamma^+(C)$ of $C$, such that $D$ has minimum cardinality among all dominating sets $D' \subseteq \Gamma^+(C)$ of $C$. The computation of $D$ is performed by collecting the topology of the clusters and their neighborhoods by the leaders of respective clusters, performing the computation locally using exhaustive search\footnote[1]{We note that once can employ polynomial-time local computations instead of exhaustive search in the expense of increasing the approximation ratio by a factor of $O(\log \Delta)$. See Section \ref{sc:fast}.} , and broadcasting the results to the vertices of the clusters and their neighbors. Since the weak diameter of the clusters is at most $d$, this requires $O(d)$ rounds. We next show that the resulting set obtained by taking the union of the dominating sets in all clusters constitutes an $\ell$-approximate minimum dominating set of the input graph $G$.
\begin{lem} \label{dset}
For a $3$-separated $(d,\ell)$-network-decomposition $Q$, suppose that we have computed a minimum dominating set $D_C \subseteq \Gamma^+(C)$ of $C$, for each cluster $C \in Q$. Then $|\bigcup \{ D_C \ | \ C \in Q \}| \leq \ell \cdot |MDS(G)|$.
\end{lem}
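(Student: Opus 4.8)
The plan is to compare the union $\bigcup_C D_C$ against a single fixed optimal solution $M = MDS(G)$, exploiting the $\ell$-coloring of $Q$ guaranteed by the $3$-separation. First I would fix the $\ell$-coloring $\varphi$ of the clusters of $Q$ witnessing that $Q$ is $3$-separated, so that any two distinct clusters receiving the same color are at distance at least $3$ in $G$. The crucial geometric observation is that for two distinct clusters $C, C'$ with $\varphi(C) = \varphi(C')$ the closed neighborhoods $\Gamma^+(C)$ and $\Gamma^+(C')$ are disjoint: a common vertex $v \in \Gamma^+(C) \cap \Gamma^+(C')$ would lie within distance $1$ of both clusters, forcing $\mbox{dist}_G(C,C') \leq 2$ and contradicting the separation.

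Next I would argue that for every cluster $C$ the set $M \cap \Gamma^+(C)$ is itself a dominating set of $C$ contained in $\Gamma^+(C)$. Indeed, each vertex of $C$ is dominated by $M$, hence is either in $M$ or adjacent to some vertex of $M$; in either case its dominator lies in $\Gamma^+(C)$, so it already lies in $M \cap \Gamma^+(C)$. Since $D_C$ was chosen to be a minimum-cardinality dominating set of $C$ among all subsets of $\Gamma^+(C)$, this yields the local inequality $|D_C| \leq |M \cap \Gamma^+(C)|$.

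Finally I would assemble the global bound. Using the union bound together with the local inequality,
\[
\Big| \bigcup_{C \in Q} D_C \Big| \;\leq\; \sum_{C \in Q} |D_C| \;\leq\; \sum_{C \in Q} |M \cap \Gamma^+(C)|.
\]
I would then regroup the right-hand sum by color class. For a fixed color $i \in [\ell]$ the sets $\Gamma^+(C)$ over the clusters $C$ with $\varphi(C) = i$ are pairwise disjoint by the first step, so $\sum_{\varphi(C) = i} |M \cap \Gamma^+(C)| = |M \cap \bigcup_{\varphi(C) = i} \Gamma^+(C)| \leq |M|$. Summing this over the $\ell$ colors gives $\sum_{C \in Q} |M \cap \Gamma^+(C)| \leq \ell \cdot |M| = \ell \cdot |MDS(G)|$, which completes the argument.

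The step I expect to require the most care is the disjointness claim, since it is the only place where the $3$-separation (rather than ordinary $2$-separation) is actually used; everything else is bookkeeping. It is worth emphasizing that ordinary $2$-separation would only give $\mbox{dist}_G(C,C') \geq 2$ and hence would not keep the sets $\Gamma^+(C)$ of a single color class disjoint, which is precisely why the statement assumes a $3$-separated decomposition.
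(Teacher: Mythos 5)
Your proposal is correct and follows essentially the same route as the paper's proof: both compare each $D_C$ against $M \cap \Gamma^+(C)$ via the minimality of $D_C$ among dominating sets drawn from $\Gamma^+(C)$, use the $3$-separation to make the closed neighborhoods $\Gamma^+(C)$ of same-colored clusters pairwise disjoint, and charge each color class at most $|MDS(G)|$ before summing over the $\ell$ colors. The only cosmetic difference is that the paper fixes a label class $U_i$ first and bounds it by $|W|$ before summing, whereas you bound the full sum and then regroup by color; your explicit distance-$\leq 2$ argument for the disjointness step is, if anything, cleaner than the paper's phrasing of it.
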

\begin{proof}
For $1 \leq i \leq \ell$, let $U_i \subseteq V$ denote the set of all vertices with label $i$ in the network-decomposition $Q$. Let $\hat{U_i} = \Gamma^+(U_i)$. We claim that $|\bigcup \{D_C \ | \ C \subseteq U_i \}| \leq |W|$,
where $W$ is a minimum dominating set of $G$. (Note that in the current proof the notation $C \subseteq U_i$ stands for a {\em cluster} $C$ that belongs to $U_i$, rather then just a subgraph of $U_i$, since $D_C$ is defined only for clusters.) 
Let $C_i \in Q$ be a cluster of label $i$, $1 \leq i \leq \ell$. 
Then $C_i \subseteq U_i$. Observe that $W \cap \Gamma^+(C_i)$ is a dominating set of $C_i$. (Since $W$ is a dominating set of $C_i$, and any vertex in $W \setminus \Gamma^+(C_i)$ does not dominate any vertex in $C_i$.) Therefore, $|W \cap \Gamma^+(C_i)| \geq |D_{C_i}|$. Note also that for any cluster $C'_i \neq C_i$ of label $i$ it holds that $\Gamma^+(C'_i) \cap \Gamma^+(C_i) = \emptyset$. Indeed, $Q$ is a $3$-separated network-decomposition, and thus, for any $u \in C_i, v \in C'_i$ it holds that $\mbox{dist}_G(u,v) \geq 3$. Hence for any $w \in \Gamma^+(C_i), x \in \Gamma^+(C'_i)$, it holds that $\mbox{dist}_G(x,w) \geq 1$, and thus $x \neq w$. Consequently,  $$|\cup\{D_C \ | \ C \subseteq U_i \}| = \sum_{C \subseteq U_i} |D_C| \leq \sum_{C \subseteq U_i} |W \cap \Gamma^+(C_i)| \leq |W| = |MDS(G)|.$$
Therefore,
$$|\bigcup \{ D_C \ | \ C \in Q \}| = | \bigcup \left \{ (\cup \{ D_C \ | \ C \subseteq U_i \}) : i \in [\ell] \right \} | \leq  \sum_{i = 1}^{\ell} |\cup \{ D_C \ | \ C \subseteq U_i \}|  \leq \ell \cdot |MDS(G)|.$$
\end{proof}

Recall that by Corollary \ref{polylgdecom}, there is a routine that computes an $((O(\log n))^{k - 1}, n^{1/k})$-network-decomposition in deterministic time $(O(\log n))^{k-1}$, for any $k = 1,2,...$. As was discussed above, this routine can also be adapted to compute a weak $3$-separated network-decomposition with the same properties within the same running time. (See Section \ref{sc:strongdecomp}; both the diameter parameter and the running time grow by a constant factor $\sigma = 3$.) Also, similarly to Corollary \ref{separatedcol}, one can adapt this routine so that it will compute a strong network-decomposition with the same parameters and the same running time. (The diameter and the running time grow by a factor of $(2\sigma - 1)^k = 5^k$, which is however swallowed by the notation $(O(\log n))^{k-1}$.) Using this network-decomposition in conjunction with Lemma \ref{dset} we obtain the following theorem.
\begin{thm}
For an $n$-vertex graph $G$, and a positive integer parameter $k$ an $O(n^{1/k})$-approximation for the minimum dominating set problem can be computed in deterministic time $(O(\log n))^{k - 1}$.
\end{thm}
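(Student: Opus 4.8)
The plan is to assemble the theorem from two ingredients already in hand: the deterministic network-decomposition routine of Corollary \ref{polylgdecom}, suitably upgraded to be $3$-separated, and the approximation bound of Lemma \ref{dset}. First I would invoke Corollary \ref{polylgdecom} to obtain an $((O(\log n))^{k-1}, n^{1/k})$-network-decomposition $Q$ in deterministic $(O(\log n))^{k-1}$ time. The subtlety is that Lemma \ref{dset} requires a \emph{$3$-separated} decomposition, whereas the bare output of Corollary \ref{polylgdecom} is only the ordinary ($2$-separated) kind. So the next step is to upgrade the decomposition to be $3$-separated. As noted in the paragraph preceding the theorem, the conversion techniques of Section \ref{sc:strongdecomp} apply: running the construction on the power graph $G^{\sigma-1}$ (with $\sigma = 3$) yields a $3$-separated decomposition, and the diameter parameter and running time each grow only by a constant factor (at most $(2\sigma-1)^{k}=5^{k}$ in the strong case), which is absorbed into the $(O(\log n))^{k-1}$ bound. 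I would appeal to Corollary \ref{separatedcol} (or its weak analogue via Corollary \ref{sdecpr}) to justify that this upgrade preserves the parameters $d=(O(\log n))^{k-1}$, $\ell=O(n^{1/k})$ and the deterministic running time $(O(\log n))^{k-1}$.

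With the $3$-separated $(d,\ell)$-network-decomposition $Q$ in hand, where $d=(O(\log n))^{k-1}$ and $\ell = O(n^{1/k})$, the algorithm is exactly the one described in the paragraph before Lemma \ref{dset}: for each cluster $C\in Q$, the leader $v_C$ collects the topology of $\Gamma^+(C)$, computes locally by exhaustive search a minimum-cardinality dominating set $D_C\subseteq\Gamma^+(C)$ of $C$, and broadcasts the result. Since the clusters have (weak) diameter at most $d$, this collection-and-broadcast phase costs $O(d)=(O(\log n))^{k-1}$ rounds, and it runs in parallel across all clusters, so it does not multiply the round count. The output dominating set is $\bigcup\{D_C \mid C\in Q\}$, which is a genuine dominating set of $G$ because every cluster $C$ is dominated by $D_C$ and the clusters partition $V$.

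The approximation guarantee is then immediate from Lemma \ref{dset}: that lemma states precisely that $|\bigcup\{D_C \mid C\in Q\}| \le \ell\cdot|MDS(G)|$, and here $\ell = O(n^{1/k})$, giving an $O(n^{1/k})$-approximation. The total running time is the time to build the (upgraded) decomposition, $(O(\log n))^{k-1}$, plus the $O(d)=(O(\log n))^{k-1}$ rounds of the dominating-set phase, which sum to $(O(\log n))^{k-1}$. Since every step is deterministic, so is the overall algorithm, completing the proof. The main obstacle I anticipate is not the approximation arithmetic, which is handed to us by Lemma \ref{dset}, but verifying cleanly that the $3$-separation upgrade of Corollary \ref{polylgdecom} keeps both the parameters and the deterministic time within $(O(\log n))^{k-1}$; this rests on the fact that the separation overhead is a factor depending only on $\sigma=3$ and $k$ (a $5^{k}$-type blow-up), which is genuinely dominated by $(O(\log n))^{k-1}$ only because the base of the dominant exponential is $\log n$ rather than a constant.
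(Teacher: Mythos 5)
Your proposal is correct and follows essentially the same route as the paper: Corollary \ref{polylgdecom}'s deterministic $((O(\log n))^{k-1}, n^{1/k})$-network-decomposition, upgraded to a $3$-separated (weak or strong) decomposition via the Section \ref{sc:strongdecomp} techniques with the constant-factor (resp.\ $5^k$) overhead absorbed into $(O(\log n))^{k-1}$, then combined with the per-cluster exhaustive-search algorithm and the approximation bound of Lemma \ref{dset}. Your closing observation about why the $5^k$ blow-up is swallowed matches the paper's own parenthetical remark.
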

Another problem for which an efficient approximation algorithm can be obtained using network-decompositions is the {\em minimum $t$-spanner} problem. Given an (unweighted) graph $G = (V,E)$ and a positive integer parameter $t$, a subgraph $G' = (V,H)$, $H \subseteq E$, is a {\em $t$-spanner} of $G$ if for every pair $u,v \in V$ of vertices, $\mbox{dist}_{G'}(u,v) \leq t \cdot \mbox{dist}_G(u,v)$. In the {\em minimum $t$-spanner} problem the objective is to find a $t$-spanner of the input graph with as few edges as possible.

Suppose that we are given a $(2t - 1)$-separated $(d, \ell)$-network-decomposition $Q$ of an input graph $G= (V,E)$.
Let $C_1,C_2,...,C_h$ be a single color class of this labeling, i.e., $Diam(C_i) \leq d$ for each $i \in [h]$, and $\mbox{dist}_G(C_i,C_j) \geq 2t - 1$, for every pair of distinct indices $i \neq j$, $i,j \in [h]$. Let $\hat{C}_i = B_{t-1}(C_i)$, for every $i \in [h]$. Note that $\hat{C}_i \cap \hat{C}_j = \emptyset$, for every pair of distinct indices $i \neq j$. Denote ${\cal C} = \cup_{ i =1}^h C_i, \hat{{\cal C}} = \cup_{i=1}^h {\hat C}_i$, and consider a minimum $t$-spanner ${\cal H}$ for $E({\cal C})$ which is allowed to use edges from $E(\hat{{\cal C}})$. Let also $H^*$ be a minimum $t$-spanner for $G$.
\begin{lem} \label{tspan}
$|{\cal H}| \leq |H^*|$.
\end{lem}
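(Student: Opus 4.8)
The plan is to show that the restriction $H^* \cap E(\hat{\mathcal{C}})$ of the global optimum to those edges lying inside $\hat{\mathcal{C}}$ is itself a feasible $t$-spanner for $E({\cal C})$ (one using only edges of $E(\hat{\mathcal{C}})$); since $\mathcal{H}$ is a \emph{minimum} such spanner, this immediately yields $|\mathcal{H}| \le |H^* \cap E(\hat{\mathcal{C}})| \le |H^*|$. Thus the entire argument reduces to proving feasibility of $H^* \cap E(\hat{\mathcal{C}})$.

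First I would observe that since $C_1,\ldots,C_h$ form a single color class of a $(2t-1)$-separated decomposition, any two distinct clusters are at distance at least $2t-1 \ge 1$ in $G$, hence non-adjacent. Therefore every edge $(u,v) \in E({\cal C})$ must have both endpoints inside one single cluster, say $u,v \in C_i$.

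Next, because $H^*$ is a $t$-spanner of $G$ and $(u,v)\in E$ satisfies $\mbox{dist}_G(u,v)=1$, there is a $u$--$v$ path $P=(u=w_0,w_1,\ldots,w_m=v)$ in $H^*$ with $m \le t$. The key step, which I expect to be the only real content of the proof, is to verify that this whole path lies inside $\hat{C}_i = B_{t-1}(C_i)$, so that $P \subseteq E(\hat{C}_i) \subseteq E(\hat{\mathcal{C}})$. For each index $j$, using $u,v \in C_i$ and the fact that $w_j$ sits at position $j$ on a path of length $m \le t$, I would bound $\mbox{dist}_G(C_i,w_j) \le \min\{\mbox{dist}_G(u,w_j),\mbox{dist}_G(v,w_j)\} \le \min\{j,\,m-j\} \le \lfloor m/2\rfloor \le \lfloor t/2\rfloor \le t-1$. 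Hence every vertex of $P$ is within distance $t-1$ of $C_i$, so every edge of $P$ belongs to $E(\hat{C}_i)$, and therefore $P \subseteq H^* \cap E(\hat{\mathcal{C}})$. In particular $H^*\cap E(\hat{\mathcal{C}})$ contains a $u$--$v$ path of length at most $t$.

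This establishes that $H^* \cap E(\hat{\mathcal{C}})$ meets the $t$-spanner requirement for every edge of $E({\cal C})$ while using only edges of $E(\hat{\mathcal{C}})$, i.e.\ it is a feasible competitor against the optimum $\mathcal{H}$. Minimality of $\mathcal{H}$ then gives $|\mathcal{H}| \le |H^* \cap E(\hat{\mathcal{C}})| \le |H^*|$, as claimed. The main obstacle is purely the distance bookkeeping in the middle paragraph; note that the pairwise disjointness of the $\hat{C}_i$ (which follows from $(2t-1)$-separation) is not actually needed for this particular inequality, but it will matter when these per-color-class bounds are later summed over all $\ell$ color classes.
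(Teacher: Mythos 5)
Your proposal follows the paper's proof essentially verbatim: both take the restriction of $H^*$ to $E(\hat{{\cal C}})$, argue that it is a feasible $t$-spanner for $E({\cal C})$ using only edges of $E(\hat{{\cal C}})$ (the paper simply asserts that each stretch-$t$ path between endpoints of an edge of $E(C_i)$ lies in $\hat{C_i}$, whereas you supply the explicit distance bookkeeping $\min\{j, m-j\} \leq \lfloor t/2 \rfloor \leq t-1$), and then conclude $|{\cal H}| \leq |H^* \cap E(\hat{{\cal C}})| \leq |H^*|$ by minimality of ${\cal H}$. One cosmetic slip: non-adjacency of distinct same-color clusters requires $\mbox{dist}_G(C_i,C_j) \geq 2$, which follows from $2t-1 \geq 3$ when $t \geq 2$, not from your stated bound $2t-1 \geq 1$; this does not affect correctness in the meaningful range $t \geq 2$.
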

\begin{proof}
Observe that the restriction $H^*(\hat{{\cal C}})$ of $H^*$ to $\hat{{\cal C}}$ is a $t$-spanner for ${\cal C}$. Indeed, consider an edge $(u,v) \in E({\cal C})$. Let $C_i \in {\cal C}$ be the cluster such that $u,v \in C_i$. Then $H^*$ contains a path of length at most $t$ between $u$ and $w$, and so this path belongs to $H^*(\hat{C_i}) \subseteq H^*(\hat{{\cal C}})$.

The lemma now follows as ${\cal H}$ is the minimum $t$-spanner for ${\cal C}$ which is allowed to use edges from $E(\hat{C})$, while $H^*(\hat{{\cal C}})$ is a $t$-spanner for ${\cal C}$ of this type. Hence $|{\cal H}| \leq |H^*(\hat{{\cal C}})| \leq |H^*|$.
\end{proof}
Denote also by $H_i$ the minimum $t$-spanner for $E(C_i)$ which is allowed to use edges of $E(\hat{C_i})$.
\begin{lem} \label{tspansetsmall}
$|\cup_{i=1}^h H_i| = |{\cal H}|$.
\end{lem}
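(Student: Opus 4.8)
The plan is to establish the two inequalities $|\bigcup_{i=1}^h H_i| \le |{\cal H}|$ and $|{\cal H}| \le |\bigcup_{i=1}^h H_i|$ separately, relying throughout on the disjointness $\hat{C_i} \cap \hat{C_j} = \emptyset$ that was already noted. Two bookkeeping facts I would record first. Since $Q$ is $(2t-1)$-separated with $2t-1 \ge 3$, distinct clusters $C_i, C_j$ have no joining edge, so $E({\cal C}) = \bigcup_i E(C_i)$ and every edge of $E({\cal C})$ lies inside a single cluster. Moreover, since the balls $\hat{C_i}$ are pairwise disjoint, the induced edge sets $E(\hat{C_i})$ are pairwise disjoint, so for any choice of $F_i \subseteq E(\hat{C_i})$ one has $|\bigcup_i F_i| = \sum_i |F_i|$. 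This lets me pass freely between unions and sums of cardinalities.

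For the inequality $|{\cal H}| \le |\bigcup_i H_i|$, I would simply verify that $\bigcup_i H_i$ is a feasible solution to the problem defining ${\cal H}$. Each $H_i \subseteq E(\hat{C_i}) \subseteq E(\hat{{\cal C}})$, and any edge $(u,v) \in E({\cal C})$ lies in a unique $E(C_i)$, where $H_i$ already supplies a path of length at most $t$ using only edges of $E(\hat{C_i}) \subseteq \bigcup_j H_j$. As ${\cal H}$ is a minimum such spanner, $|{\cal H}| \le |\bigcup_i H_i|$.

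For the reverse inequality I would restrict ${\cal H}$ to each ball, setting ${\cal H}_i := {\cal H} \cap E(\hat{C_i})$, and show that each ${\cal H}_i$ is itself a feasible $t$-spanner for $E(C_i)$ allowed to use edges of $E(\hat{C_i})$; minimality of $H_i$ then gives $|{\cal H}_i| \ge |H_i|$. The one geometric step is the containment claim: given $(u,v) \in E(C_i)$, take a spanner path $P = (u = x_0,\dots,x_m = v)$ in ${\cal H}$ with $m \le t$; for each $x_j$ one has $\mbox{dist}_G(x_j, C_i) \le \min(j, m-j) \le \lfloor t/2 \rfloor \le t-1$ (using $u,v \in C_i$ and $t \ge 2$), so $x_j \in B_{t-1}(C_i) = \hat{C_i}$ and hence $P \subseteq E(\hat{C_i})$, i.e. $P \subseteq {\cal H}_i$. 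Thus ${\cal H}_i$ $t$-spans every edge of $E(C_i)$. Since the ${\cal H}_i \subseteq {\cal H}$ are pairwise disjoint, $\sum_i |H_i| \le \sum_i |{\cal H}_i| = |\bigcup_i {\cal H}_i| \le |{\cal H}|$, and the left-hand side equals $|\bigcup_i H_i|$.

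The main obstacle is exactly this containment $P \subseteq \hat{C_i}$: it is where the choice of radius $t-1$ for $\hat{C_i}$ and separation $2t-1$ for $Q$ gets used, and it is what prevents an optimal global spanner ${\cal H}$ from ever \emph{sharing} edges between two different balls in order to span edges of two clusters simultaneously. Everything else reduces to counting over the pairwise disjoint edge sets $E(\hat{C_i})$.
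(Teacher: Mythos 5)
Your proof is correct and follows essentially the same route as the paper: one direction by checking that $\bigcup_{i=1}^h H_i$ is feasible for the problem defining ${\cal H}$, the other by restricting ${\cal H}$ to ${\cal H}_i = {\cal H} \cap E(\hat{C_i})$, invoking optimality of each $H_i$, and summing over the pairwise disjoint edge sets $E(\hat{C_i})$. The only difference is that you spell out the geometric containment of each stretch-$t$ spanner path inside $\hat{C_i} = B_{t-1}(C_i)$ (via $\mbox{dist}_G(x_j, C_i) \leq \min(j, m-j) \leq t-1$), a step the paper leaves implicit when it asserts $|H_i| \leq |{\cal H}_i|$ and which mirrors the argument in its proof of Lemma \ref{tspan}, so your write-up is if anything slightly more complete.
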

\begin{proof}
Obviously, $\cup_{i = 1}^h H_i$ is a $t$-spanner for $E({\cal C})$ which uses only edges of $E(\hat{{\cal C}})$. Hence by optimality of ${\cal H}$, $|\cup_{i = 1}^h H_i| \geq |{\cal H}|$.

In the opposite direction, for every index $i \in [h]$, let ${\cal H}_i = {\cal H} \cap E(\hat{C_i})$. By optimality of $H_i$, $|H_i| \leq |{\cal H}_i|$. Also, for every pair of distinct indices $i,j \in [h]$, ${\cal H}_i \cap {\cal H}_j = \emptyset$. (This is because $E(\hat{C_i}) \cap E(\hat{C_j}) = \emptyset$.)
Hence
$$|{\cal H}| = |\cup_{i = 1}^h {\cal H}_i| = \sum_{i = 1}^h |{\cal H}_i| \geq \sum_{i =1}^h |H_i| \geq |\cup_{i = 1}^h H_i|. $$ 
(The last inequation is, in fact, equality.)
\end{proof}
In other words, to compute a minimum $t$-spanner ${\cal H}$ for $E({\cal C})$ one can compute minimum $t$-spanners $H_1,H_2,...,H_{\ell}$ for $E(C_1),E(C_2),...,E(C_h)$ (which are allowed to use edges of $E(\hat{C_1}), E(\hat{C_2}),..., E(\hat{C_k})$, respectively), and take their union.
Our distributed algorithm will do precisely this. In each cluster $C$ of $Q$ it computes a minimum $t$-spanner for $E(C)$ using edges of $E(\hat{C})$, $\hat{C} = B_t(C)$. This computation is done by collecting the entire topology of $(\hat{C}, E(\hat{C}))$ into a vertex in $C$, doing a local (possibly very heavy) computation, and informing all vertices of $\hat{C}$ about the results of this computation. The union of all these $t$-spanners will be our ultimate spanner. Hence the algorithm returns a spanner ${\cal H}' = \cup_{j = 1}^{\ell} {\cal H}^{(j)}$, where for each index $j \in [h]$, ${\cal H}^{(j)}$ is a minimum $t$-spanner for $E({\cal C}^{(j)})$, where ${\cal C}^{(j)}$ is the set of all vertices labeled by $j$ in the network decomposition $Q$. (In other, words, they belong to clusters of color $j$. Note, however, that to execute the algorithm we do not need to know these colors/labels.) Since by Lemma \ref{tspansetsmall}, for every $j \in [\ell]$, $|{\cal H}^{(j)}| \leq |H^*|$, it follows that the algorithm returns an $\ell$-approximation. The running time of the algorithm is $O(Diam(Q) + t) = exp \{O(k)\} + O(t)$. To summarize:
\begin{thm}
For any pair of positive integer parameters $t,k$ an $O(n^{1/k})$-approximation of the minimum $t$-spanner problem in $n$-vertex graphs can be computed in $exp\{O(k)\} + O(t)$ randomized time.
\end{thm}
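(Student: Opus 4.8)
The plan is to turn the two structural lemmas, Lemma~\ref{tspan} and Lemma~\ref{tspansetsmall}, into a distributed algorithm and then charge its output colour class by colour class, exactly as in the dominating-set proof of Lemma~\ref{dset}. First I would invoke the strong separated decomposition of Corollary~\ref{separatedcol} with separation parameter $\sigma = 2t-1$, obtaining a $(2t-1)$-separated strong $(d,\ell)$-network-decomposition $Q$ with $\ell = O(n^{1/k})$ labels and diameter $d = \exp\{O(k)\}$. For every cluster $C \in Q$ I put $\hat{C} = B_t(C)$; the leader of $C$ collects the whole topology of $(\hat{C},E(\hat{C}))$, computes locally (by exhaustive search, which is free in this model) a minimum $t$-spanner $H_C$ of $E(C)$ that is permitted to use edges of $E(\hat{C})$, and broadcasts $H_C$ back to $\hat{C}$. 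The algorithm outputs ${\cal H}' = \bigcup_{C \in Q} H_C$.

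For the approximation bound I would group the clusters by their label. Fix a colour $j \in [\ell]$ and let $C_1,\dots,C_h$ be the clusters labelled $j$, with ${\cal C}^{(j)} = \bigcup_{i} C_i$. Because $Q$ is $(2t-1)$-separated, any two same-colour clusters are at distance at least $2t-1 \geq 3$, so ${\cal C}^{(j)}$ induces no inter-cluster edges and the augmentation balls $\hat{C}_i = B_{t-1}(C_i)$ are pairwise vertex-disjoint. These are precisely the hypotheses of Lemmas~\ref{tspan} and~\ref{tspansetsmall}: the latter gives $\bigcup_i H_{C_i} = {\cal H}^{(j)}$, the true minimum $t$-spanner of $E({\cal C}^{(j)})$ allowed to use the balls' edges, and the former gives $|{\cal H}^{(j)}| \leq |H^*|$, where $H^*$ is an optimal $t$-spanner of $G$. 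Since ${\cal H}' = \bigcup_{j} {\cal H}^{(j)}$, subadditivity of cardinality yields $|{\cal H}'| \leq \sum_{j=1}^{\ell} |{\cal H}^{(j)}| \leq \ell \cdot |H^*| = O(n^{1/k}) \cdot |H^*|$, i.e. an $O(n^{1/k})$-approximation.

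The running time is the time to build $Q$ plus the time to gather each region $\hat{C}$ into its leader. Since every cluster has strong diameter $d = \exp\{O(k)\}$ and $\hat{C}$ extends it by a radius-$t$ ball, gathering and broadcasting cost $O(d + t) = \exp\{O(k)\} + O(t)$ rounds, and all clusters are processed in parallel. Combined with the $\exp\{O(k)\}$ construction time of Corollary~\ref{separatedcol} this gives the claimed $\exp\{O(k)\} + O(t)$ bound; the heavy local $t$-spanner computations are free in the model.

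The step I expect to require the most care is certifying that ${\cal H}'$ is a genuine $t$-spanner of the whole of $G$. The per-colour argument only $t$-spans the \emph{intra-cluster} edges $\bigcup_{C} E(C)$, whereas an edge joining two clusters of \emph{different} colours lies in no single $E({\cal C}^{(j)})$ and is therefore not covered by any $H_C$. To close this gap I would additionally retain every inter-cluster edge in ${\cal H}'$ (each such edge trivially $t$-spans itself), and the delicate part is to verify that this augmentation keeps the total within $O(n^{1/k}) \cdot |H^*|$. Here one leans on the $(2t-1)$-separation, which both forces the endpoints of any monochromatic edge into a common cluster and guarantees the disjointness of the balls that makes the charging in Lemmas~\ref{tspan} and~\ref{tspansetsmall} tight.
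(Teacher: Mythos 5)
Your construction and analysis coincide with the paper's own proof: the same invocation of Corollary~\ref{separatedcol} with separation $\sigma = 2t-1$, the same per-cluster exhaustive computation of a minimum $t$-spanner for $E(C)$ that is allowed to use edges of $E(\hat{C})$, the same color-by-color charging through Lemmas~\ref{tspan} and~\ref{tspansetsmall}, and the same time accounting $O(Diam(Q) + t)$. (Two parameter looseness points you share with, or inherit from, the paper: Corollary~\ref{separatedcol} actually gives $O(n^{1/k}\log n)$ labels, the logarithm being absorbed by reparameterizing $k$; and with $\sigma = \Theta(t)$ the diameter of $Q$ is $(O(t))^{k-1}$ rather than a $t$-free $exp\{O(k)\}$.) You are also right --- and it is to your credit --- that the per-color argument certifies stretch $t$ only for intra-cluster edges: an edge joining clusters of different labels lies in no $E({\cal C}^{(j)})$, and the paper's write-up passes over this point in silence.

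However, the repair you propose --- retaining every inter-cluster edge --- genuinely fails. There is no mechanism to charge the cut edges of the decomposition against $|H^*|$, and your argument has to be valid for an arbitrary decomposition meeting the guarantees of Corollary~\ref{separatedcol}, which you use as a black box. Take $G = K_n$ and the decomposition consisting of two clusters of size $n/2$ carrying labels $1$ and $2$: each cluster has strong diameter $1 \leq d$, and the separation condition is vacuous because no two clusters share a label, so this is a legitimate $(2t-1)$-separated $(d,\ell)$-network-decomposition. A star shows $|H^*| = n-1$, yet your augmented output retains all $n^2/4$ inter-cluster edges --- an $\Omega(n)$-approximation instead of $O(n^{1/k})$. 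The correct repair mirrors the MDS proof (Lemma~\ref{dset}, where the set $D_C \subseteq \Gamma^+(C)$ is required to dominate $C$, not merely $D_C \subseteq C$): make each cluster $C$ responsible for all edges with at least one endpoint in $C$, allowed to use edges of $E(B_t(C))$; any path of length at most $t$ in $H^*$ certifying such an edge contains an endpoint in $C$ and hence stays inside $B_t(C)$, so $H^* \cap E(B_t(C))$ is a feasible solution to the subproblem. Since the clusters partition $V$, every edge of $G$ is now covered, and raising the separation to $\sigma = 2t+1$ makes the balls $B_t(C_i)$ of same-labeled clusters pairwise disjoint, so the restriction-of-$H^*$ charging of Lemmas~\ref{tspan} and~\ref{tspansetsmall} goes through verbatim; all parameter changes are swallowed by the $O(\cdot)$ notation in the statement.
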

Observe that the same result applies to the $t$-spanner problem in {\em directed} graphs, by the same argument. Note that even though the graph is directed, we assume that the communication over every edge is bidirectional.

\section{Removing heavy local computations from the minimum dominating set and minimum $t$-spanner algorithms} \label{sc:fast}
It is well known that an $O(\log \Delta)$-approximation of minimum dominating set can be computed in polynomial time in the sequential setting. (See, e.g., \cite{W04}.) However, this approach cannot be applied directly to our algorithms since we compute minimum dominating sets $D_C$ of clusters $C$, such that $D_C \subseteq \Gamma^+(C)$ rather than $D_C \subseteq C$.
On the other hand, this problem reduces to the Set Cover problem with both the degree parameters (i.e., the maximum cardinality of a set and the maximum number of sets that share an element) bounded by $\Delta + 1$. Hence this problem admits a polynomial-time $O(\log \Delta)$-approximation algorithm. (See, e.g., \cite{SS12}.) One can also extend the classical centralized $O(\log \Delta)$-approximation algorithm for the MDS problem directly to our slightly more general problem. This extension is described below.
Consequently, we can obtain a dominating set whose size is at most $O(\log \Delta)$ the size of the minimum dominating set of $C$ consisting of vertices of $\Gamma^+(C)$. This can be achieved in the following way. Initially $D_C = \emptyset$. We proceed in phases, each time selecting a vertex $v$ from $C$ such that $d_v = |\Gamma^+(v) \cap C \setminus \Gamma^+(D_C)|$ is maximal, and adding $v$ to $D_C$.  (Ties are broken by preferring vertices that belong to $\Gamma^+(D_C)$, and if this does not solve the tie, it is broken arbitrary.)  Once no uncovered vertex remains we are done.

Let $S^* \subseteq \Gamma^+(C)$ be a minimum dominating set of $C$. We claim that $D_C \leq O(\log \Delta \cdot |S^*|)$. The proof is by amortized analysis. Each time a vertex $v$ is added to $D_C$ we assign a weight $1/d_v$ to each vertex of $\Gamma^+(v) \cap C \setminus \Gamma^+(D_C)$. Observe that the sum of all weights assigned during this procedure is $|D_C|$. Next, (for analysis) let each vertex of $C$ select a single vertex from $S^*$ that dominates it. Consider a vertex $u \in S^*$ and the set $W$ of all neighbors of $u$ in $C$ that selected $u$.
Next, we analyze the sum of weights of $W$. For each $w \in W$ it is assigned a weight once a neighbor of $w$ (or $w$ itself) joins the dominating set $D_C$. Let $i$ be the number of the phase in which it happens, and $\deg_i(u)$ denote the number of neighbors of $u$ in $C$ that are not covered in the beginning of phase $i$. Also, let $z$ denote the neighbor that dominates $w$, for which $w$ obtained its weight. Since in each phase a vertex $v$ with maximal $d_v$ is selected, it holds that $\deg_i(u) \leq \deg_i(z)$. Consequently, $w$ is assigned a weight at most $1/\deg_i(z) \leq 1/\deg_i(u)$. Therefore, the sum of weights of $W$ is at most $\sum_{j = 1}^{|W|} 1/j = O(\log \Delta)$. Therefore, the sum of all weights in the graph is $|D_C| = O(\log \Delta \cdot |S^*|)$. This completes the proof. 

A similar idea can be applied in the case of the minimum $t$-spanner problem. Again, we need a centralized polynomial-time approximation algorithm for the minimum $t$-spanner for edges of $E(C)$ (for a cluster $C$), while the spanner is allowed to use edges of $E(\hat{C})$. This is an instance for the {\em client-server $t$-spanner} problem, and for the case $t = 2$ it was devised in \cite{EP01}. By plugging it in our distributed algorithm for approximating spanners we obtain a distributed $O(n^{1/k} \log n)$-approximation algorithm with running time $exp\{O(k)\}$ for the directed and undirected $2$-spanner problem. The latter algorithm only employs polynomially-bounded local computations. To the best of our knowledge, there are no existing centralized algorithms with a non-trivial approximation guarantee for the client-server $t$-spanner problem for $t \geq 3$. It is however likely that the LP-based approaches to the minimum $t$-spanner problem (such as \cite{BBMRY11,DK11}) extend to this more general problem.

\end{document}